\documentclass[12pt]{article}
\setlength{\topsep}{0.5ex}
\setlength{\textfloatsep}{10.0pt plus 2.0pt minus 4.0pt}
\usepackage{titlesec,titletoc}
\usepackage[mode=buildmissing]{standalone}

 \usepackage{geometry}

\usepackage[
			anchorcolor=black,
			citecolor=black,
			colorlinks=true,
			filecolor=black,
			linkcolor=black,
			menucolor=black,
			runcolor=black,
			urlcolor=black,
			hyperfootnotes=false
]{hyperref}
\usepackage{amsthm,amsmath,microtype}
\usepackage[noabbrev]{cleveref}

\theoremstyle{plain}

\newtheorem{lemma}{Lemma}
\newtheorem{proposition}{Proposition}
\newtheorem{corollary}{Corollary}
\theoremstyle{definition}
\newtheorem{definition}{Definition}

\theoremstyle{remark}

\crefname{observation}{Observation}{Observations}
\crefname{theorem}{Theorem}{Theorems}
\crefname{lemma}{Lemma}{Lemmata}
\crefname{proposition}{Proposition}{Propositions}
\crefname{corollary}{Corollary}{Corollaries}
\crefname{definition}{Definition}{Definitions}
\crefname{assumption}{Assumption}{Assumptions}
\crefname{section}{Section}{Sections}
\crefname{figure}{Figure}{Figures}
\crefname{axiom}{Axiom}{Axioms}
\crefname{modification}{Modification}{Modifications}
\crefname{example}{Example}{Examples}
\crefname{appendix}{Appendix}{Appendices}

\usepackage[font={small}]{caption}[2008/08/24] 

\usepackage{enumitem} 

\usepackage{setspace}

\usepackage{amsmath,amsfonts,amssymb, mathrsfs}
\usepackage[T1]{fontenc}
\usepackage[]{lmodern}
\usepackage[utf8]{inputenc}
\usepackage[ngerman, english]{babel}
\usepackage{csquotes} 
\usepackage{etex} 
\usepackage[subrefformat=parens,labelformat=parens]{subfig}
\usepackage{multirow,array}

\usepackage[noabbrev]{cleveref}

\captionsetup[subfigure]{subrefformat=simple,labelformat=simple,listofformat=subsimple}

\usepackage{pgfplots}
\pgfplotsset{compat=newest}
\usepackage{sidecap} 
\usetikzlibrary{patterns,decorations.pathreplacing,positioning,external,shapes.geometric,arrows,calc}

\usepackage{pstricks} 
\usepackage{pst-3d}
\usepackage{sgamevar, egameps} 
\usepackage{graphicx}
\usepackage{epstopdf} 

\usepackage[backend=biber,
			sorting=ynt,
			sortcites=true,
			bibencoding=inputenc,
			bibstyle=authoryear -comp,
			citestyle=authoryear -comp,
			doi=false,
			giveninits=true,
			isbn=false,
			uniquename=false,
			uniquelist=minyear,
			maxcitenames=2,
			maxnames=2,
			maxbibnames=99,
			natbib=true,
			uniquename=false,
			url=false,
			date=year
]{biblatex}
\bibliography{Latex_templates/mybib}

\DeclareSourcemap{
  \maps[datatype=bibtex]{
    \map[overwrite=true]{
      \step[fieldsource=number]
      \step[fieldset=issue, origfieldval]
    }
  }
}

	\AtEveryBibitem{\clearlist{language}}
	\AtEveryBibitem{\clearfield{number}}
	\AtEveryBibitem{\clearlist{month}}
	\AtEveryBibitem{\clearfield{Month}}
	\AtEveryBibitem{\clearfield{day}}
	\AtEveryBibitem{\clearfield{eprint}}



\renewbibmacro{in:}{}
	\setlength{\bibhang}{2em} 
	\setlength{\bibitemsep}{1em}

\usepackage{verbatim}
\usepackage{booktabs} 
\usepackage[multiple]{footmisc}

 
\usepackage{placeins} 

\definecolor{Vermilion}{RGB}{213,94,0}
\definecolor{BluishGreen}{RGB}{0,158,115}
\definecolor{Cadmium}{RGB}{227, 0, 34}
\definecolor{ReddishPurple}{RGB}{204,121,167}
\definecolor{Orange}{RGB}{230,159,0}
\definecolor{Yellow}{RGB}{240,228,66}
\definecolor{SkyBlue}{RGB}{86,180,233}
\definecolor{Blue}{RGB}{0,114,178}

\DeclareFontFamily{U}{mathx}{\hyphenchar\font45}
\DeclareFontShape{U}{mathx}{m}{n}{
      <5> <6> <7> <8> <9> <10>
      <10.95> <12> <14.4> <17.28> <20.74> <24.88>
      mathx10
      }{}
\DeclareSymbolFont{mathx}{U}{mathx}{m}{n}
\DeclareFontSubstitution{U}{mathx}{m}{n}
\DeclareMathAccent{\widecheck}{0}{mathx}{"71}
\DeclareMathAccent{\wideparen}{0}{mathx}{"75}

\DeclareSourcemap{
  \maps[datatype=bibtex,overwrite=true]{
    \map{
      \step[fieldsource=pages,
            match=\regexp{pp?\.?(.+)},
           replace=\regexp{$1}] 
    }
  }
}

\makeatletter
\g@addto@macro\normalsize{%
  \setlength\abovedisplayskip{3pt}
  \setlength\belowdisplayskip{8pt}
  \setlength\abovedisplayshortskip{3pt}
  \setlength\belowdisplayshortskip{8pt}
}
\makeatother

\titlespacing*{\subsection}
{0pt}{0.8ex plus 1ex minus .2ex}{1.2ex plus .2ex}
\titlespacing*{\section}
{0pt}{0.8ex plus 1ex minus .2ex}{1.2ex plus .2ex}
\titlespacing*{\paragraph}
{0pt}{0.5ex plus 1ex minus .2ex}{0.5ex plus .2ex}

\usepackage{datetime}
\usepackage[commandnameprefix=ifneeded,final]{changes}
\setlist{noitemsep,topsep=0.5ex}
\geometry{left=1.25in, right=1.25in, top=1.4in, bottom=1.4in,}
\assignrefcontextentries[]{*}

\theoremstyle{definition}
\newtheorem{property}{Property}
\theoremstyle{plain}

\title{A Quest for Knowledge\thanks{We thank Arjada Bardhi, Antonio Cabrales, Steven Callander, Marco Celentani, Rahul Deb, Philipp Denter, Florian Ederer, Chiara Franzoni, Alex Frug, William Fuchs, Ben Golub, Ryan Hill, Toomas Hinosaar, Nenad Kos, Jorge Lemus, Igor Letina, David Lindequist, Gerard Llobet, Ignacio Ortu\~no, Marco Ottaviani, Nicola Pavoni, Harry Pei, Konrad Stahl, Carolyn Stein, Armin Schmutzler, Carlo Schwarz, Adrien Vigier, Ludo Visschers, Ralph Winkler, and various audiences for comments. \newline
Johannes Schneider gratefully acknowledges financial support from Horizon Europe MSCA Project 101061192; the Agencia Estatal de Investigación (MICIU/AEI /10.13039/501100011033) through grants PID2019-111095RB-I00, PID2020-118022GB-I00, and CEX2021-001181-M; the German Research Foundation (DFG) through CRC TR 224 (Project B03); and Comunidad de Madrid, grant MAD-ECON-POL-CM H2019/HUM-5891. Part of the research was carried out while Johannes Schneider was visiting the University of Bern. He thanks them for their hospitality and inspiration.}}

\makeatletter
\def\pgfplots@stacked@diff{}
\makeatother

\author{Christoph Carnehl\thanks{Bocconi University, Dpt. of Economics \& IGIER; email: \nolinkurl{christoph.carnehl@unibocconi.it}} \and Johannes Schneider\thanks{Universidad Carlos III de Madrid; Department of Economics, email: \nolinkurl{jschneid@econ.uc3m.es}} 
}

\date{\monthname~\the\year}
\setcounter{secnumdepth}{2}
\tikzset{font=\footnotesize}
\pgfplotsset{%
    ,tick label style = {font=\footnotesize}
    ,every axis label = {font=\footnotesize}
    ,legend style = {font=\footnotesize}
    ,label style = {font=\footnotesize}
}
\usepgfplotslibrary{fillbetween}
\begin{document}


\captionsetup{belowskip=0pt}
\maketitle
\spacing{1.25}
\begin{abstract}
Is more novel research always desirable? We develop a model in which knowledge shapes society's policies and guides the search for discoveries. Researchers select a question and how intensely to study it. The novelty of a question determines both the value and difficulty of discovering its answer. We show that the benefits of discoveries are nonmonotone in novelty. Knowledge expands endogenously step-by-step over time. Through a dynamic externality, moonshots---research on questions more novel than what is myopically optimal---can improve the evolution of knowledge. Moonshots induce research cycles in which subsequent researchers connect the moonshot to previous knowledge.
\end{abstract}

\newpage

\section{Introduction} 
\label{sec:introduction}

In a letter to Franklin D. Roosevelt, Vannevar \citet{Bush45} pleads with the president to preserve freedom of inquiry by federally funding basic research\textemdash the ``pacemaker of technological progress.'' That letter paved the way for the creation of the National Science Foundation (NSF) in 1950. The NSF today, like the vast majority of governments and scientific institutions, cherishes scientific freedom and allows academic researchers to select research projects independently.

With scientific freedom comes the responsibility to select the right research questions. However, what are the right questions? Biologist and Nobel laureate Peter \citet{Medawar1967-MEDTAO-5} famously notes that ``research is surely the art of the soluble. \ldots Good scientists study the most important problems they think they can solve.'' Finding the most important yet soluble question is nontrivial. One reason is that both importance and solubility depend on the current state of knowledge. 

In this paper, we develop a microfounded model of knowledge creation through research. We conceptualize research as a costly search process that may fail. The researcher picks a question and an intensity with which to search for its answer. The cost of search depends on how well existing knowledge guides the researcher's efforts. If the researcher discovers the answer, her gross benefits depend on how much society's decision-making improves due to the discovery---both through answering a particular question and the spillovers on related questions. The discovery is then added to the stock of knowledge and future researchers can build on the additional knowledge. We characterize the researcher's optimal choice for arbitrary existing knowledge, and find that researchers work on questions that are too narrow and fail too often. The questions chosen are neither novel enough to maximize the instantaneous value of knowledge for decision-making nor novel enough to inspire future generations. By incentivizing distant discoveries, a designer can reduce the researchers' failure rate and improve the evolution of knowledge.

We model the value of knowledge as the extent to which knowledge improves decision-making. We represent society by a single decision-maker who can use existing knowledge to address a variety of problems. Existing knowledge is the set of questions to which the answer has already been discovered. Because answers to similar questions are correlated, knowledge also provides the decision-maker with conjectures regarding questions to which the answer is yet undiscovered. The precision of a conjecture depends on the question's location relative to existing knowledge.\footnote{The protein folding problem in structural biology provides a case in point. Spillovers from other proteins led to Moderna's development of the COVID-19 vaccine, which ``took all of one weekend'' (only). For more on the protein folding problem see \citet{hill2020race,hill2019scooped}.} 

We conceptualize the correlation by assuming that answers to questions follow the realization of a Brownian path. \cref{fig:Intro} depicts that idea. Questions are on the horizontal axis, and the gray line represents the answers to all questions. Dots (\tikz{\draw[fill=red,red]  circle(.5ex)}) represent existing knowledge. Because of the assumption of a Brownian path, all conjectures follow a normal distribution. The mean and the variance depend on existing knowledge. The solid black lines in \cref{fig:Intro} represent the mean; the dashed lines provide the band of the 95\% prediction interval.\footnote{The 95\% prediction intervals describe the following relation: for each question, with a probability of 95\%, the answer lies between the respective dashed lines given existing knowledge. For visual clarity, we selected a Brownian path that leaves the 95$\%$ prediction interval in the negative branch when we condition on it passing through the origin only.} 

\begin{figure}[t]
  \subfloat{\includestandalone[width=.45\textwidth]{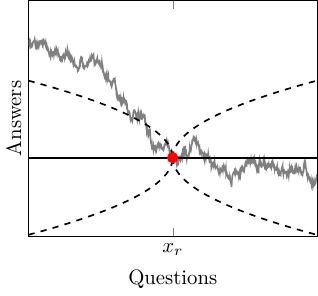}}\hfill
\subfloat{
\includestandalone[width=.45\textwidth]{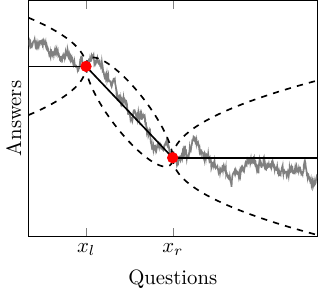}}
\caption{\emph{Existing knowledge and conjectures.}\label{fig:Intro}}
\end{figure}

Our first contribution is to characterize how adding a discovery improves the value of knowledge. To see what we have in mind, consider the left panel of \cref{fig:Intro}. Only the answer to question $x_r$ is known. Assume that a researcher discovers the answer to question $x_l$. Knowledge is now depicted in the right panel. Decision-making improves in three ways. First, the decision-maker has precise knowledge about the answer to $x_l$. Second, her conjectures about all questions to the left of $x_l$ improve. Third, her conjectures improve in the \deleted{newly created} area $[x_l,x_r]$. For all questions in that area, the decision-maker now has two pieces of knowledge that help her form the conjecture. 

How much decision-making benefits from discovering the answer to $x_l$ depends on $x_l$'s distance from $x_r$. The decision-maker cares about the range of questions she has a good conjecture about and the actual precision of each conjecture. Increasing the distance between $x_l$ and $x_r$ is similar to the effect of output expansion by a monopolist. Consider two alternatives for a discovery $x_l'<x_l$. Knowing $x'_l$ instead of $x_l$ implies that more questions lie inside the area $[x'_l,x_r]$ than in $[x_l,x_r]$---a marginal gain in the range of questions with precise conjectures. At the same time, the conjectures about each question in $[x_l,x_r]$ become less precise with $x_l'$---an inframarginal loss. The benefits of a discovery are maximized at an intermediate distance.

If both $x_l$ and $x_r$ are known initially, discoveries \emph{expand knowledge} beyond the frontier if the discoveries concern questions $x \notin [x_l,x_r]$. Expanding beyond the frontier works in the manner described in the previous paragraph. Alternatively, discoveries \emph{deepen knowledge} if they concern questions $x \in [x_l,x_r]$. Depending on the distance between $x_l$ and $x_r$, expanding knowledge or deepening knowledge may be optimal. If $x_l$ and $x_r$ are close, the conjecture about any question in $[x_l,x_r]$ is already precise. In this case, expanding knowledge provides larger benefits than deepening knowledge does. If $x_l$ and $x_r$ are far apart, conjectures about questions in $[x_l,x_r]$ are imprecise. Obtaining an answer to a question $x \in [x_l,x_r]$ divides this single area of imprecise conjectures into two areas with precise conjectures. In this case, deepening knowledge provides larger benefits than expanding knowledge. 

Overall, the decision-maker benefits most from deepening knowledge between distant, yet not too distant, pieces of knowledge. Expanding knowledge beyond the frontier beats deepening knowledge only if all available areas are short.

Our second contribution is to characterize a researcher's optimal choice of which question to tackle and how much effort to invest in studying that question for arbitrary existing knowledge. We assume that the researcher's benefits of a finding are proportional to the benefits for the decision-maker we have discussed above. In addition, we conceptualize the research process as the search for an answer. We assume that it requires effort to search for an answer and that the cost of effort is increasing and convex. \replaced{We propose a microfounded cost function that captures the following idea: For a given amount of effort, search is more likely to result in a discovery if the conjecture about the answer is more precise. Because conjectures depend on the distance to existing knowledge, our cost function naturally links \emph{novelty} (the distance to existing knowledge) and research \emph{output} (the probability that search results in a discovery).}{We propose a cost function that derives from this idea and provide a microfoundation. The cost function relates the cost of research to \emph{novelty} (the distance of the question from existing knowledge) and \emph{output} (the probability that search results in discovery). The link originates in the initial conjecture, which depends on the novelty of the question. The more precise that conjecture, the higher the output for any given level of effort.}

Regarding the researcher's choices, we show that novelty and output are nontrivially related. Depending on the structure of existing knowledge, the two can substitute or complement each other. Thus, in some cases selecting a more novel question actually decreases the risk of failure. Overall, the researcher creates more output when she deepens knowledge than when she expands beyond the frontier. Output peaks for deepening knowledge in areas of intermediate length. \replaced{This suggests that researchers prefer areas that contain a limited number of questions, even though they benefit from improving decision-making across a wide range of problems. They mainly do this for two reasons: First, discoveries in smaller areas lead to more significant improvements on each question in that area. Second, it is less costly to obtain a discovery.}{Discoveries in these areas also provide large benefits to society.} 

Our third contribution is to apply our previous insights to study the evolution of knowledge. We turn to a setting in which research is done sequentially and show that if a researcher expands knowledge, no future researcher will deepen knowledge. Therefore, the evolution of knowledge takes on a ladder structure. 

Starting from this observation, we study a simple policy intervention. Suppose a long-lived designer can direct the choices of one researcher. The designer aims to improve decision-making by enhancing the evolution of knowledge. Under natural conditions, the designer induces a research cycle by encouraging a moonshot discovery---a discovery far from existing knowledge. Moonshot discoveries are suboptimal in the short run. They create knowledge that is too disconnected from existing knowledge and therefore provides little immediate benefit. However, moonshots guide future researchers aiming at questions between the moonshot and previously existing knowledge. As a result of the moonshot, future researchers increase their output and pursue different questions. The knowledge they create becomes more valuable than otherwise. If the designer is patient and the cost of research is intermediate, the positive dynamic externality of moonshots dominates the implied myopic loss. \added{Thus, moonshots have the flavor of an infrastructure investment. By themselves, they provide little value, yet they enable future researchers to thrive by setting the stage.}

To summarize, we make three contributions. First, we offer a microfounded framework to study knowledge and research in a complex world. We quantify the value of a discovery when society extrapolates from knowledge to address a wide range of questions.
Second, we shed light on the nontrivial relation between the novelty of a question and the probability that a researcher discovers its answer. Novelty and the probability of discovery are endogenously linked through a microfounded cost function. Third, we provide novel insights into a classical question in the economics of science funding: should a funder incentivize research far beyond the frontier even if the immediate benefits of such a discovery are low? Yes, if the cost of research is intermediate and society is patient. The research cycle that such a moonshot induces leads to researchers addressing more novel questions and producing more output.


\paragraph{Related Literature.}

The premise of our model is rooted in a large literature originating in the philosophy of science that studies how scientific research evolves; \citet{kuhn2012structure} is the most prominent contribution therein. \citet{kuhn2012structure} distinguishes between two phases in the evolution of knowledge: normal science and scientific revolutions. While his focus and, therefore, his hypotheses differ from ours, our findings share some similarities with his description of the dynamics of science. Our main distinction to the literature building on \citet{kuhn2012structure} is that we study how past discoveries shape today's choice of research questions and the search for answers rather than how an overreaching theory is used to tackle a given question. Building on \citet{kuhn2012structure}, \citet{kitcher1990division} introduces a model of researchers choosing between different theories. Further developing that idea, \citet{brock1999formal} show that social factors, such as the desire for conformity, affect whether a superior theory will outcompete an inferior theory over time. \citet{strevens2003role} argues that the priority reward system, as described in \citet{merton1957priorities}, can achieve an efficient allocation of resources across approaches. \citet{Akerlof2018} provide a dynamic economic model of competing paradigms and show that a false paradigm may survive due to homophily across generations of researchers. Relatedly, philosophers of science have studied how researchers uncover the most significant approach to a research topic \citep{alexander2015epistemic,weisberg2009epistemic}. They formalize approaches as locations in an epistemic landscape and their significance as the unknown height at that location. None of these ideas is our focus, yet our model connects naturally and provides a complementary perspective on cumulative research. 

Closer to an economic approach is a growing literature that leverages large-scale data sets to shed light on researchers' decision-making \citep[see][for an overview]{Fortunato2018}. \citet{foster2015tradition} empirically analyze researchers' strategies in biomedical chemistry and find that conservative research strategies, akin to little novelty and deepening knowledge in our setting, are more widespread than risky strategies, akin to high novelty and expanding knowledge in our setting. They emphasize a tension between being productive and making novel contributions. Building on these ideas, \citet{Rzhetsky2015} simulate how risk-taking speeds up the evolution of knowledge. \citet{myers2020elasticity} discusses the effectiveness of policy instruments to encourage more novelty in research. We view our work as complementary to this literature. We provide a theoretical model that, in its baseline static version, generates outcomes in line with their empirical findings. The tractability of our model allows us to generate further insights from making it dynamic and by adding policy tools to the picture.

An adjacent literature on innovation in corporate R\&D studies the allocation of resources to different innovative strategies. Our key departure from this literature lies in the value of the findings. In corporate R\&D, discovery is rewarded by a product's market value. In science, the value of a finding is the knowledge it generates, for example, to guide decision-making. \replaced{In the innovation literature, the direct spillovers of findings on the value of related questions are rarely the main driver of incentives. However, in our model the benefits of a finding specifically derive from improved knowledge about related questions.}{Therefore, in the innovation literature, the primary concern is with \emph{what} is found, whereas, in our model, the primary concern is with \emph{whether} something is found and how informative a finding is to decision-making.} \citet{hopenhayn2021direction}, for example, study the competitive dynamic allocation of researchers to different questions\replaced{, but assume the value of projects is independent of the stock of completed projects.}{. Apart from the different assumptions on the value of findings, we assume that discovering the answer to a question is informative about related questions, a feature absent in their model.} \citet{bryan2017direction} analyze a similar problem. In their setting, innovations may relate, but, in contrast to our setting, the sequence in which innovations can be achieved is exogenously fixed. 

A key aspect of our model is that the benefits and costs of addressing a research question depend on existing knowledge. The theoretical literature on scientific discoveries does not explicitly model this aspect, yet it incidentally captures parts of the scientific process we have in mind. \citet{aghion2008academic} consider a setting in which they assume that knowledge evolves in an exogenous step-by-step structure, whereas \citet{BRAMOULLE2010cycles} provide a model of research cycles albeit without considering an intensive margin. In our framework, intensive margins are relevant. 
    
Conceptually, we contribute to the literature modeling search as discovery on a Brownian path that builds on \citet{callander2011searching}. Our focus on modeling scientific research leads us to depart from the canonical ideas in the existing literature---most notably in two dimensions. 
 
First, our decision-maker aims to understand the entire Brownian path, as any question can become a potential problem to act on. In contrast, in \citet{callander2011searching,10.1093/restud/rdw008} knowing the location of the optimal realization along the path suffices for their decision-maker. Closer to us are \citet{bardhi2019attributes} and \citet{callander2017precedent}. Still, in \citet{bardhi2019attributes}, being informed about a summary statistic of the Brownian path suffices to make an optimal decision. In \citet{callander2017precedent}, being informed about the roots of the Brownian path suffices to make an optimal decision. In all of their models, the discovery of realizations beyond the frontier declines in value over time. \citet{callander2022innovation} is an exception where market competition slows this decline. Nevertheless, knowledge expansion eventually halts. In our model, there is a constant and endogenous desire to expand knowledge.\footnote{\citet{jovanovic1990long} study a related problem. There, expanding knowledge implies an i.i.d. draw at a fixed cost, while deepening knowledge is costless. Here, all questions are connected. See \citet{callander2014preemptive,callander2021power,Minipubs,callander2019risk,urgun2023contiguous} for unrelated applications in a related framework.} 

Second, we conceptualize discovering the realization of the Brownian motion at a particular point as a costly search process that may fail. This generates an endogenous link between novelty and output, leading to a trade-off: should the researcher choose more novel questions or higher research output? The existing literature ignores this link between novelty and output.

\added{Combining both differences to the literature, a new reason why learning stops emerges: the stochastic process taking an unexpected turn. Such turns prevent researchers from discovering the resulting unexpected answers, as their search optimally focuses on the most likely answers. This reason differs from those present in the literature where search tends to stop because further search becomes less valuable; e.g., due to very informative or valuable discoveries.}

Finally, the endogenous growth literature is related in that research and innovation generate value for society. Typically, the value of successful research derives from improvements in the product market, which is usually modeled in one of two ways: by expanding the set of available varieties (e.g., \citet{romer1990endogenous}) or by climbing a quality ladder, that is, by replacing old products or processes with improved ones (e.g., \citet{grossman1991quality,d2342669-e24e-3efd-8828-6d4269cc25ac}). In our model, a ladder structure of knowledge expansion somewhat reminiscent of the quality ladder model as described in \citet[][]{klette2004innovating} arises as well. However, there is a crucial difference. Knowledge advances horizontally in step sizes, but new steps leave the value of old ones unaffected. In that sense, the ladder structure arising in our model is closer to the expansion of product varieties as in \citet{romer1990endogenous}. Yet, we show that the ladder structure may be suboptimal.\footnote{It is important to keep in mind that questions in our model are related and have informational spillovers. This relation is absent in models with product varieties.}

\section{A Model of Knowledge and Research} 
\label{sec:model}

There are two players: society---represented by a single decision-maker---and a researcher. The researcher observes initial knowledge $\mathcal F_k$ and chooses two actions: a question, $x$, and a probability, $\rho$, with which she discovers the answer, $y(x)$, to question $x$. With probability $\rho$, knowledge is augmented by the question-answer pair and becomes $\mathcal F_k \cup \{(x,y(x))\}$. With complementary probability, knowledge remains $\mathcal F_k$. Finally, the decision-maker observes current knowledge and decides for every question whether to apply knowledge or to select an outside option.

\subsection{Knowledge and Conjectures}

\paragraph{Questions and answers.} We represent the universe of questions by the real line. A \emph{question} is an element $x \in \mathbb{R}$. Each question $x$ has a unique answer, $y(x) \in \mathbb{R}$. 

\paragraph{Truth and knowledge.} The answer $y(x)$ to question $x$ is determined by the truth. The truth is the graph of the realization of a random variable $Y(x)$ that follows a standard Brownian motion defined over the entire real line.\footnote{As in \citet{callander2011searching}, the realized truth $Y$ is a random draw from the space of all possible paths $\mathcal{Y}$ generated by a standard Brownian motion going through an initial knowledge point $(x_0,y(x_0))$.} This assumption captures the following notion: the answer to question $x$ is likely to be similar to the answer to a close-by question $x'$. As the distance between $x$ and $x'$ increases, the uncertainty increases. However, a correlation remains.

\emph{Knowledge} is the finite collection of known question-answer pairs. We denote it by $\mathcal F_k=\{(x_i,y(x_i))\}_{i=1}^{k}$. For notational convenience, we assume that $\mathcal F_k$ is ordered such that $x_i<x_{i+1}$. We refer to $x_1$ and $x_k$ as the \emph{frontiers} of knowledge. Knowledge $\mathcal F_k$ determines a partition of the real line consisting of $k+1$ elements \[\mathcal X_k :=\{(-\infty,x_1), [x_1, x_2),\cdots, [x_{k-1},x_k),[x_k,\infty)\}.\] 

Each element of the partition $\mathcal{X}_k$ is an \emph{area}. We call $(-\infty,x_1)$ area $0$, $[x_1,x_2)$ area $1$, and so on until area $k$, which is $[x_k,\infty)$. The length of area $i \in \{1,..,k-1\}$ is $X_i:=x_{i+1}-x_i$, and $X_0=X_k=\infty$.

\paragraph{Conjectures.} A conjecture is the cumulative distribution function $G_x(y|\mathcal F_k)$ of the answer $y(x)$ to question $x$ given knowledge $\mathcal F_k$. Because $Y(x)$ is determined by a Brownian motion, the conjecture about $y(x)$ 
is a cumulative distribution function of a normal distribution with mean $\mu_x(Y|\mathcal F_k)$ and variance $\sigma^2_x(Y|\mathcal F_k)$. Both $\mu_{x}$ and $\sigma^2_{x}$ follow from the properties of the Brownian motion. 

\begin{property}[Expected Value]\label{prpty:mu}
Given $\mathcal F_k$, the conjecture $G_{x}(y|\mathcal F_k)$ has mean:
\[\mu_x(Y|\mathcal F_k) = \begin{cases}
y(x_1) & \text{if}~ x <x_1 \\
y(x_{i}) + \frac{x-x_{i}}{X_i}(y(x_{i+1})-y(x_{i})) &\text{if}~ x \in [x_i,x_{i+1}), i\in\{1,...,k-1\}  \\
y(x_k) & \text{if}~ x \geq x_k.\end{cases} \]
\end{property}

\begin{property}[Variance]\label{prpty:variance}
Given $\mathcal F_k$, the conjecture $G_{x}(y|\mathcal F_k)$ has variance:
\[
\sigma^2_x(Y|\mathcal F_k) = \begin{cases}
	x_1-x & \text{if}~ x <x_1 \\
	\frac{(x_{i+1} - x)(x-x_{i})}{X_i} &\text{if}~ x \in [x_i,x_{i+1}), i\in\{1,...,k-1\}\\
	x-x_k & \text{if}~ x \geq x_k.
\end{cases}
\]
\end{property}

\subsection{Actions and Payoffs}
\paragraph{Decision-maker.} For each question $x$, the decision-maker either applies knowledge, or takes an outside option. The decision-maker's payoff from taking the outside option is normalized to zero. If she applies knowledge, her expected payoff is
  \[1-\frac{\sigma^2_x(Y\mid \mathcal{F}_k)}{q},\]
with $q>0$ exogenously given.\footnote{One microfoundation to attain these payoffs is to assume that the decision-maker's preferences are represented by an affine transformation of a quadratic loss function and she aims to match the truth. She would then choose $\mu(x)$ as her best reply. Normalizing the outside option to zero, choosing an intercept of one and a slope of $1/q$ provides our payoff formulation.} Abstracting from any prioritization of questions, the decision-maker values all questions equally. Her value of knowing $\mathcal{F}_k$ is 
  \[v(\mathcal F_k):=\int_{-\infty}^\infty \max\left\{1 - \frac{\sigma_x^2(Y|\mathcal F_k)}{q},0\right\} \mathrm{d}x.\]

\subparagraph{The decision-maker's outside option.} A key element of our model is that the precision of knowledge determines whether the decision-maker applies it to address problems or prefers the outside option, $\varnothing$. This feature captures what is referred to as the ``precautionary principle'': if uncertainty is large, prudence trumps risking poor application of knowledge. How often the decision-maker reverts to the outside option is governed by the parameter $q$, the size of which is not essential to our results as long as $q \in (0,\infty)$. Yet, $q$ gives knowledge creation a meaning. To see this, consider the two limiting cases. As $q \rightarrow 0$, the decision-maker does not apply knowledge unless she is certain about the answer. As knowledge is finite, but problems are uncountably infinite, the decision-maker chooses the outside option almost everywhere for any knowledge: knowledge becomes irrelevant to decision-making. At the other extreme, $q \rightarrow \infty$, the decision-maker prefers to apply knowledge no matter how vague the conjecture. In that case, the emphasis the decision-maker puts on the precision of her answers must be low: knowledge becomes irrelevant to decision-making.

\paragraph{Researcher.} The researcher builds on initial knowledge $\mathcal F_k$ and decides to search for an answer to a question $x$. Another key element of our model is that the researcher decides how much effort to exert to find the answer. Given a choice of question $x$, we posit a one-to-one relationship between the level of effort exerted and the resulting probability $\rho$ of discovering the answer $y(x)$. Thus, $(x,\rho)$ is a sufficient statistic to summarize the researcher's choice.  

To save on notation, we allow the researcher to choose $\rho$ directly. We aim to capture that increasing the probability of discovery requires costly effort. In \cref{sub:selecting_a_research_questions}, we provide details and a microfoundation. For now, we assume the researcher's cost to be
\[\hat{c}(\rho;x) := \eta \big(\operatorname{erf}^{-1}(\rho)\big)^2 \sigma^2_x\big(Y|\mathcal F_k\big),\]
where $\eta\geq 0$ is an exogenous cost parameter and $\operatorname{erf}^{-1}(\cdot)$ is the inverse error function of the normal distribution. The cost scales in uncertainty: discovering $y(x)$ with a given probability $\rho$ is more costly if the conjecture about $y(x)$ is less precise.

On the benefits side, we assume that the researcher is aligned with the decision-maker. The researcher's total payoff is 
\[\rho\bigg(v\Big(\mathcal F_k \cup \{(x,y(x))\}\Big)-v(\mathcal F_k)\bigg)  - \eta \Big(\operatorname{erf}^{-1}(\rho)\Big)^2 \sigma^2_x\big(Y|\mathcal F_k\big).\]
Economically, our assumption entails that the researcher benefits from her discovery through its impact on the decision-maker's payoff. We revisit this assumption and its implications when we get to the researcher's problem (\cref{sub:selecting_a_research_questions}) and again in a dynamic version of the model in \cref{sub:research_dynamic}. 

\section{The Benefits of Discovery} 
\label{sec:the_value_of_increasing_knowledge}

\added{In this section, we describe the benefits of a particular discovery. That is, we ignore how the discovery came about and focus on the marginal value it creates.}

Discovery occurs whenever an answer is found and the new question-answer pair is added to existing knowledge, $\mathcal F_k$. 
The (gross) benefits of discovering $y(x)$ are the marginal increase in the value of knowledge from adding $(x,y(x))$ defined as
\[V(x;\mathcal{F}_k):=v(\mathcal F_{k} \cup \{(x,y(x))\})-v(\mathcal F_{k}).\]
We distinguish two scenarios: expanding knowledge and deepening knowledge. A discovery $y(x)$ \emph{expands} knowledge if $x \notin [x_1,x_k]$. A discovery $y(x)$ \emph{deepens} knowledge in area $i$ if $x \in [x_i,x_{i+1}]$.

The benefits of a discovery are determined by the length of the research area, $X$, the discovery occurs in, and how distant the question, $x$, is from existing knowledge.

\begin{definition}[Distance]
The distance of question $x$ from knowledge $\mathcal F_k$ is the minimal Euclidean distance to a question to which the answer is known:
 \[d(x):=\min\limits_{\xi \in \{x_1,x_2,...x_k\}} |x-\xi|.\]
\end{definition}
\begin{definition}[Variance]\label{def:variance}
 The variance of a question with distance $d$ in an area of length $X$ is \(\sigma^2(d;X):=d(X-d)/X.\)
\end{definition} 

Note that $\sigma^2(d;X)=\sigma^2_x(Y|\mathcal F_k)$ for $d(x)=d$ and $x$ in an area of length $X$. Abusing notation, we define \added{for any $X \in \mathbb{R}\cup \{\infty\}$ and $d \in [0,X/2]$}
\[\begin{split}V(d;X):= 
 \frac{1}{6q} \Big(2X \sigma^2(d;X) &+ \boldsymbol{1}_{d>4q} \sqrt{d}(d-4q)^{3/2} \\  &+\boldsymbol{1}_{X-d>4q} \sqrt{X-d} ~(X-d-4q)^{3/2}\\&- \boldsymbol{1}_{X>4q} \sqrt{X}(X-4q)^{3/2}  \Big),\end{split}\]
\replaced{where $V(d;\infty){:=} \lim_{X \rightarrow \infty} V(d;X)$.}{Moreover, let $V(d;\infty){:=} \lim_{X \rightarrow \infty} V(d;X)$.}

\begin{proposition}\label{prop:value_knowledge}
	$V(d;X)$ describes the benefits of a discovery $(x,y(x))$ with distance $d(x)=d$ to existing knowledge when the question $x$ lies in an area of length $X$.
\end{proposition}

\cref{prop:value_knowledge} shows that the benefits of a discovery depend only on the parameters $d$ and $X$ of a question, not its precise location. 
To gain some intuition on $V(d;X)$, note that the terms without an indicator function measure the direct reduction in uncertainty about answers due to the discovery. The indicator that enters negatively becomes active if the decision-maker chose the outside option for some question in the area \emph{before} discovery. The indicators that enter positively become active only if the decision-maker chooses the outside option for some question \emph{after} discovery. 
\deleted{The right panel of Figure 2
above illustrates the benefits-of-discovery function $V(d,X)$.} 
\begin{figure}\centering
	\subfloat{\includestandalone[width=.45\textwidth]{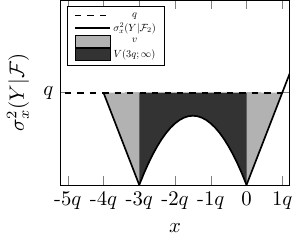}}\hfill
	\subfloat{\includestandalone[width=.45\textwidth]{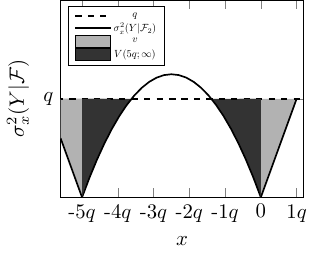}}
	\caption{\emph{Benefits-maximizing (left) and too large (right) distance of $x$ given $\mathcal F_1$.}}\label{fig:optimalandwrongchoice}
\end{figure}
\paragraph{Expanding Knowledge.} Suppose we expand knowledge by discovering a question-answer pair $(x,y(x))$ with $x<x_1$. Our discovery pushes the knowledge frontier to the left generating the area $[x,x_1)$. The benefits of this knowledge expansion come from the value of the new area $[x,x_1)$---the dark-shaded area in \cref{fig:optimalandwrongchoice}'s left panel.\footnote{More precisely, the conjectures about questions to the left of the old frontier are replaced by conjectures inside the new research area, and conjectures to the left of the new frontier also become more precise. As can be seen in the left panel of \cref{fig:variance2}, the variance reduction to the left of the frontier is always the same. Hence, the benefits are the same as if only the new area was added.}

The value of the new area depends on \replaced{the amount of questions with conjectures based on two discoveries and the degree of improvement in those conjectures relative to the outside option.}{ (i) the amount of questions it contains and (ii) the degree of improvement in decision-making relative to the outside option.}  The benefits-maximizing question resolves a marginal-inframarginal trade-off. Increasing the length of the newly created area has two opposing effects on the value: \replaced{(i) a positive marginal effect because the amount of questions in the new area increases, and (ii) a negative inframarginal because the degree of improvement decreases.}{ The amount of questions with improved conjectures increases. However, the increase in area length decreases the precision of conjectures about inframarginal questions in it.}

\begin{figure}[t]
	\centering
	\subfloat{
		\includestandalone[width=.45\textwidth]{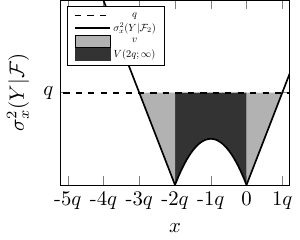}}
	\subfloat{
	\includestandalone[width=.45\textwidth]{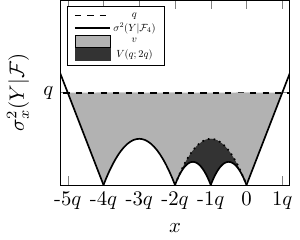}}
	\caption{\emph{The benefits of discovery.} \footnotesize{The dark-shaded area illustrates the benefits of discovery; the light-shaded areas illustrate the value of initial knowledge. In the \emph{left panel}, knowledge is expanded. In the \emph{right panel}, knowledge is deepened.}}\label{fig:variance2}
\end{figure}
\Cref{fig:optimalandwrongchoice} and \Cref{fig:variance2} illustrate the benefits of discovery from creating ideal (left panel of \cref{fig:optimalandwrongchoice}), too large (right panel of \cref{fig:optimalandwrongchoice}), and too short (left panel of \cref{fig:variance2}) areas. The largest benefits come at an intermediate level at which all conjectures have a variance strictly smaller than $q$.

\paragraph{Deepening knowledge.} Deepening knowledge differs conceptually. Instead of creating a new area, discoveries replace an existing area with two shorter ones. Discovering the answer to $y(x)$ replaces an area, $[x_i,x_{i+1})$ with the new areas $[x_i,x)$ and $[x,x_{i+1})$.

Areas of length $3q$ provide the largest benefits, the one depicted in the left panel of \cref{fig:optimalandwrongchoice}. Thus, the benefits-maximizing discovery inside an area of length $X_i=6q$ is naturally at the midpoint, replacing the area by two areas of length $3q$.

Finding the benefits-maximizing discovery for areas of length $X_i\neq 6q$ is less straightforward. There are two forces at play. First, there is a benefit to replacing the old area with two symmetric new areas. The intuition echoes that of expanding knowledge: the inframarginal loss increases when an area becomes too large. Choosing symmetric area lengths reduces the inframarginal losses compared with asymmetric area lengths. Second, benefits decline if the area length is greater than $3q$ because conjectures inside the area become imprecise. 

If the initial area length $X_i$ was small, the first force would dominate. Selecting the midpoint at $d=X_i/2$ is optimal. However, if $X_i$ was large, the trade-off would be resolved in favor of creating one high-value area at the cost of having imprecise conjectures in the other. A cutoff, $\widetilde{X}^0$, determines which force dominates.

\paragraph{Expanding vs. Deepening Knowledge.} Finally, there is a trade-off between expanding knowledge and deepening knowledge. On the one hand, expanding knowledge means that no area is replaced and a new area is created. On the other hand, deepening knowledge means creating two areas with precise conjectures. Expanding knowledge provides higher benefits than deepening knowledge in some existing area only if all existing areas are shorter than a cutoff $\widehat{X}^0$. \cref{fig:benefit} illustrates this observation.

\begin{figure}[bt]
	\subfloat{
	\includestandalone[width=.45\textwidth]{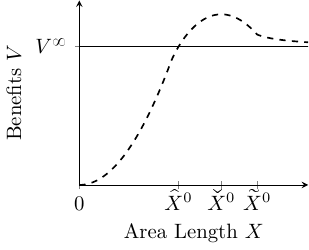}}\hfill
	\subfloat{\includestandalone[width=.45\textwidth]{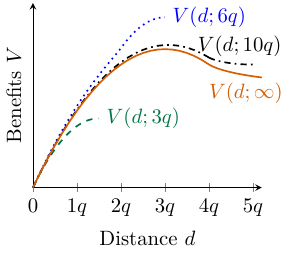}}
	\caption{\emph{The benefits of discovery.} \footnotesize{The dashed line in the \emph{left panel} plots the benefits of discovery $V(d^0(X);X)$ against $X<\infty$. The solid line shows the maximum benefits from expanding knowledge. The \emph{right panel} plots $V(d;X)$ for different $X$.
	}\label{fig:benefit}}
	\end{figure}

We now summarize our findings\deleted{ in a corollary to \mbox{\cref{prop:value_knowledge}}}. We define the optimal distance as \[d^0(X):= \arg \max_d V(d;X).\]

\begin{proposition}\label{cor:deepen}\label{cor:opt}\label{cor:maxinteriorsmaller8}
	A distance of $d^0(\infty){=}3q$ maximizes the benefits of expanding knowledge. 
	When deepening knowledge in an area with length below a cutoff $\widetilde{X}^0$, the midpoint of the area maximizes the benefits, $d^0(X) = X/2$. Otherwise, a distance between $3q$ and the midpoint maximizes the benefits of deepening knowledge, $d^0(X)\in (3q,X/2)$.

	Expanding knowledge is only benefits-maximizing if all available \added{bounded} areas are shorter than a cutoff, $\widehat{X}^0$, with $\widehat{X}^0<\widetilde{X}^0$. The benefits from optimally deepening knowledge are single peaked in the area length with the peak at $\widecheck{X}^0\in(\widehat{X}^0,\widetilde{X}^0)$.
\end{proposition}

\section{The Researcher} 
\label{sub:selecting_a_research_questions}
\subsection{The Researcher's Objective} 
In this section, we analyze the researcher's optimal choice. Recall that the researcher only benefits from research if it culminates in a discovery.\footnote{This is a direct consequence of the assumption that the decision-maker only has access to $\mathcal F_k$ when addressing problems. One rationale is a moral-hazard concern: science is complex, and it is impossible to distinguish the absence of a finding from the absence of a (proper) search.} The researcher's expected payoff given a choice of question $x$ and success probability $\rho$ can be written as
\[u_R(d,\rho;X):= \rho V(d;X)- \eta \underbrace{\tilde{c}(\rho)\sigma^2(d;X)}_{=c(\rho,d;X)},\]
where  $\tilde{c}(\rho):=(\operatorname{erf}^{-1}(\rho))^2$. To obtain this expression, we replaced the question $x$ by its sufficient statistics $(d, X)$.

The cost of research, $\eta c(\rho,d;X)$, derives from conceptualizing research as the search for an answer. We assume that, given a question $x$, the researcher chooses a sampling interval $[a,b] \in \mathbb{R}$ in the $y$-dimension. She discovers the answer if and only if $y(x) \in [a,b]$. We interpret the interval length as the amount of effort the researcher invests in finding an answer. For simplicity, we assume a quadratic cost $ \eta (a{-}b)^2$.
 
\begin{lemma}\label{lem:cost}
For knowledge $\mathcal{F}_k$, probability $\rho$, and question $x$, the minimal cost of obtaining an answer to question $x$ with probability $\rho$ is
\[\eta c(\rho,d;X)= \eta \tilde{c}(\rho) \sigma^2(d;X).\]
\end{lemma}

\replaced{The intuition behind \Cref{lem:cost} is straightforward. Conjectures are normally distributed and it is optimal to center the search effort around the mean. Given a success probability $\rho$, the more precise the conjecture, the shorter the expected search around the mean, and thus, the lower the cost. For a given variance, the higher the desired success probability, the larger the respective sampling interval. }{ In Online Appendix E, we show that any (i) homogeneous, (ii) increasing, and (iii) convex sampling cost function over $b-a$ implies a reduced-form cost function similar to the one we impose. Therefore, such an alternative cost function would not alter our results qualitatively. A general increasing and convex $\tilde{c}(\rho)$ is possible, yet harder to microfound. We discuss the desirable properties of $\tilde{c}(\rho)$ in Online Appendix F.}

Our cost function exhibits the following properties: It is (i) multiplicatively separable in $\rho$ and $(d;X)$, (ii) increasing in $d$ and $X$, and (iii) concave in $d$; the concavity decreases in $X$ and becomes linear in $d$ in the limit as $X \rightarrow \infty$.\footnote{\added{More complex cost functions are possible, we discuss two alternatives in Online \cref{sec:the_cost_of_research} and \ref{sec:Notation}.}} 

The cost of research links output and novelty. To see this, consider a researcher who chooses a question $x$ and aims to discover its answer with probability $\rho$. If that researcher increases $\rho$ by a given amount, $\varepsilon$, she needs to increase her search effort, that is, expand her sampling interval, $[a,b]$\added{, which is optimally centered around the mean}. The additional effort required to increase $\rho$ depends on the variance of the conjecture about $y(x)$, and hence, on the novelty of the question. If the variance is low, the density at the boundaries of the sampling interval is high---a moderate increase in effort gets the success probability to $\rho+\varepsilon$. If, instead, the variance is high, the density is low and the increase in effort must be larger.

\subsection{The Researcher's Choice} 
We now characterize the researcher's optimal choice and elaborate on the resolution of the novelty-output trade-off. The researcher solves
\[\max_{X \in \{X_0,...,X_k\}}\quad \underbrace{\max_{\substack{d \in [0,X/2],\\\rho \in [0,1]}} \rho V(d;X)- \eta c(\rho,d;X)}_{=:U_R(X)}.\]
Without cost ($\eta=0$), we can apply \cref{prop:value_knowledge} to derive the researcher's optimal choice. For any research area of length $X$, the researcher selects distance $d^0(X)$ and discovers an answer with certainty. 

However, for positive cost, $\eta>0$, the researcher's optimal decision on output, $\rho^\eta(X)$, is nontrivial and linked with her decision on novelty, $d^\eta(X)$. Choosing a question close to existing knowledge allows for a high probability of discovery at a low cost. The researcher's initial conjecture about the answer is already precise. Nevertheless, her payoff is low, as such a discovery provides little benefits. By increasing the distance, the researcher increases both benefits and cost, ceteris paribus. The effect on the optimal probability of discovery is ambiguous. Depending on which effect dominates, the distance and the probability of discovery are substitutes (the researcher optimally reduces the success probability when answering a more novel question) or complements (she increases the success probability when answering a more novel question).

\paragraph{Optimal choice within a research area.}

The following proposition captures the key aspects of the researcher's choice within a research area. \deleted{Figure X illustrates it.}

\begin{proposition}\label{prop:substitutes}
Suppose $\eta>0$. Researchers fail with positive probability, $\rho^\eta(X)\in (0,1)$. There is a cutoff area length $\widetilde{X}^\eta<\widetilde{X}^0$ such that researchers choose the benefits-maximizing distance, $d^\eta(X)=d^0(X)$, in area $X$ if and only if $X\leq \widetilde{X}^\eta$. Otherwise, researchers choose a question strictly less novel than that, $d^\eta(X)<d^0(X)$.
\end{proposition}

\Cref{prop:substitutes} shows that when expanding knowledge the researcher chooses a question closer to existing knowledge than the benefits-maximizing distance $3q$. This is because novelty and output are substitutes. The marginal cost of increasing $\rho$ rises with $d$, while the marginal benefits of increasing $d$ approach zero as $d \rightarrow 3q$. The researcher balances novelty and output and selects a question less novel than $3q$.

That trade-off is less pronounced when the researcher deepens knowledge. The reason is that inside an area moving away from one boundary implies moving closer to the other boundary. Thus, the marginal cost of the success probability flattens in distance and becomes zero at $d=X/2$. Whether this effect is strong enough to make novelty and output complements depends on area length.

\paragraph{Optimal choice among areas.} 
The following proposition characterizes the researcher's optimal choice among intervals\replaced{.}{and is illustrated in the right panel of Figure X.}


\begin{proposition}\label{prop:researchX}
Suppose $\eta>0$. There are cutoffs $2q<\widehat{X}^\eta<\dot{X}<\widecheck{X}^\eta<\widetilde{X}^\eta<8q$ such that the following claims hold: \begin{enumerate}
    \item\label{item:hat} The researcher expands knowledge if and only if knowledge is dense, that is, if and only if all bounded areas are shorter than $\widehat{X}^\eta \added{<\widehat{X}^0.}$
    \item\label{item:dot} Both novelty and output are nonmonotone in area length. Novelty attains a maximum at $\widetilde{X}^\eta$. Output attains a maximum at $\dot{X}$.
    \item\label{item:check} The researcher's expected payoff from conducting research in an area $X$, $U_R(X)$, is single peaked and attains a maximum at $\widecheck{X}^\eta$.
\end{enumerate}
\end{proposition}

\Cref{prop:researchX} shows that the pattern in distance is qualitatively the same as in \Cref{cor:maxinteriorsmaller8}. However, the cost adds another dimension: the researcher's choice of success probability interacts with both the choice of distance and research area.

Consider a short area. The scope of improving the decision-maker's policies is small because conjectures are already precise and investing in discovery thus provides a limited payoff. Consequently, the researcher does not invest much in the search for an answer despite the low cost. She opts for a low success probability. A marginal increase in the area length provides larger increase in the benefits than in the cost. In response, the researcher increases both distance and success probability.  

By contrast, consider a large area. The benefits of discovery trump those of the small area, yet the cost is higher. The researcher does not invest much in discovery due to the high marginal cost of increasing the success probability. As a result, the probability of discovery is low. If, in this case, the area length increases marginally, the researcher responds by decreasing both the distance and the success probability.

Finally, consider an area of intermediate length. The benefits of discovery are high, yet the associated cost is limited. The return on investment is large, and the probability of discovery is high. As the area length further increases, the marginal return of increasing the distance declines, while the marginal cost rises. Eventually, the researcher faces a trade-off: should she reduce the success probability to maintain maximal distance? It turns out that she should. While the researcher wants to remain at a boundary in her choice of distance, she mitigates the increased cost by lowering the success probability. 

The researcher's preferred area length, $\widecheck{X}^\eta$, is in a region in which a trade-off between output and novelty exists. While the researcher would prefer a larger research area to increase the benefits of research, she would prefer a smaller research area to reduce her cost. Thus, distance is increasing and the success probability is decreasing at the point at which the researcher's payoff is maximal.

Thus far, we have not taken into account which research areas are available, which is determined by existing knowledge $\mathcal F_k$. Computing the optimal among the available areas is straightforward. 

\section{The Evolution of Knowledge, Moonshots, and Research Cycles} 
\label{sub:research_dynamic}

In this section, we consider a dynamic extension of our baseline model to study the endogenous evolution of knowledge. Through the lens of our dynamic model, we find that forward-looking interventions, which induce research cycles through moonshot discoveries, can improve the evolution of knowledge. 

\subsection{Sequential Research} 
\label{sub:the_evolution_of_knowledge}
Our starting point is a setting in which knowledge is $\mathcal F_0=\{(x_0,y(x_0))\}$. At any time $t=1,2,\dots,$ a short-lived researcher $R_t$ arrives, observes current knowledge, $\mathcal{F}_{t-1}$, and selects $(x,\rho)$. If a discovery occurs, knowledge updates and a decision-maker updates the application of knowledge accordingly.

To retain focus, we make three assumptions: All researchers have the same cost parameter $\eta>0$,  break ties identically, and condition their decision $(x,\rho)$ \emph{only} on current knowledge. While we impose the first two assumptions for simplicity only, the last assumption is more meaningful.\footnote{In particular, once a researcher fails, all subsequent ones ignore those failures because they condition their decision only on current knowledge, and---by symmetry---fail again.} We discuss our modeling choice in \cref{sec:final_remarks}. 

Moving forward, keep in mind that, given our symmetry assumptions, we can invoke \cref{prop:researchX} to describe knowledge as \emph{dense} whenever $\mathcal F_k$ is such that all bounded areas are shorter than $\widehat{X}^\eta$. For any dense knowledge $\mathcal F_k$, each researcher's incentives are identical and in particular identical to the \replaced{incentives of a researcher facing the initial knowledge $\mathcal F_0$. }{ simplest dense knowledge $\mathcal F_0$.} As a benchmark, we first show that knowledge evolves in knowledge-expanding steps.

\begin{proposition}\label{prop:laissez-fair}
For any $\eta \geq 0$, knowledge is dense in any period. Specifically, every researcher aims to expand knowledge by the same distance $d^\eta(\infty) \in (2q,3q]$ with the same probability of discovery $\rho^\eta(\infty) \leq 1$. \added{Moreover, both the distance $d^\eta(\infty)$ and the probability of discovery $\rho^\eta(\infty)$ strictly decrease in the cost parameter $\eta$.}
\end{proposition}

\Cref{prop:laissez-fair} shows that no short-lived researcher endogenously inspires future researchers to deepen knowledge. The intuition follows from \cref{prop:value_knowledge,prop:researchX}: Researchers are rewarded for their immediate contribution to the value of knowledge. Therefore, no researcher expands knowledge beyond the benefits-maximizing $d^0(\infty)=3q$. Deepening knowledge within areas of $X\leq 3q$, however, is unattractive as it creates too little value. Thus, all researchers aim to expand knowledge, and the stepsize depends on $\eta$. The larger $\eta$, the smaller the stepsize $d^\eta(\infty)$. Moreover, if $\eta>0$, researchers fail at each step with probability $1-\rho^\eta(\infty)>0$ \replaced{ if the Brownian takes an unlikely turn. If one researcher fails, all later researchers fail too. By symmetry, they make the exact same choice as the initially failing researcher and thus fail to discover the realization. }{ and such failure is permanent as all future researchers will make the same choices.}

\subsection{Interventions}\label{sub:interventions}

In reality, we see that societies invest in affecting scientists' choices. The most prominent incentive schemes are large ex-ante cost reductions through grants from funding institutions, such as the NSF or the ERC, and high-prestige ex-post rewards for successful research, such as the Nobel Prize or the Fields Medal. These incentives, however, are often awarded to scientists pushing the frontier considerably.

A natural question is whether a designer (a funder, a government, \ldots) has an incentive to interfere with the knowledge production process in our model. Here, we consider a designer whose per-period payoff is a weighted average of the decision-makers' and the researchers' payoffs,
\begin{align}\label{eq:designer}
    \mathbb{E}\left[\sum_{t=1}^\infty \delta^{t-1} \Big((1-\alpha)v(\mathcal{F}_{t+1}) + \alpha \left(v(\mathcal{F}_{t+1})-v(\mathcal{F}_t) - \eta \hat{c}(d,\rho) \right) \Big)\right],
\end{align}
where $\alpha \in [0,1]$ measures the designer's weight on the researchers' payoffs. This formulation introduces two potential frictions between the designer and the researcher. 

First, their per-period payoffs differ. The only meaningful difference between the researcher's and the decision-maker's payoff is that the researcher bears the cost of research.\footnote{Note that the term $\alpha v(\mathcal{F}_t)$ is, within a period, a constant as the knowledge at the beginning of period $t$, $\mathcal{F}_t$ is exogenously given.} Therefore, we can think of the designer as having a modified cost parameter $\hat{\eta}\in[0,\eta]$, where $\hat{\eta}=0$ corresponds to full alignment with the decision-maker, and $\hat{\eta}=\eta$ to full alignment with the researcher. We can interpret $\hat{\eta}$ either as a degree of research cost internalization or, alternatively, as a measure of the appropriability of the benefits of research. Second, the time horizons of the researcher and the designer differ. Researchers are rewarded for their discoveries only through the immediate benefits they generate in the value of knowledge. However, a researcher's discovery alters the landscape of knowledge indefinitely and thereby the future value of knowledge as well. We assume that the designer discounts the future value of knowledge by $\delta \in [0,1)$. 

For now, we assume that the designer can intervene at most once and proceed by studying a set of natural benchmarks to understand the forces at play. First, we isolate the time-horizon friction by assuming that research is costless ($\eta=0$) and that the designer is forward-looking ($\delta >0$). Second, we isolate the cost friction by considering positive research cost ($\eta>0$) but a myopic designer ($\delta=0$). 

\paragraph{Forward-looking designer and costless research.}
First, consider a forward-looking designer, $\delta \in (0,1)$, in an environment with costless research, $\eta=0$. Note that this designer's payoff is 
\begin{align}
    \mathbb{E} \left[\sum_{t=1}^\infty \delta^{t-1} \left( v(\mathcal{F}_{t+1}) - \alpha\cdot v(\mathcal{F}_{t}) \right)\right].
\end{align}

 \replaced{A researcher without cost chooses $d^0(\infty)=3q$ and $\rho^0(\infty)=1$ (\cref{cor:opt}). Moreover, by \Cref{prop:laissez-fair}, knowledge remains dense under these choices. From \cref{cor:opt}, we know that adding an area of length $3q$ to any given set of areas $\mathcal X_k$ dominates adding a single bounded area of any other length. Because every researcher adds such an area, it follow that knowledge expands in $3q$-steps and without failure. Irrespective of $(\alpha,\delta)$, the designer has no reason to intervene }{ Recall that a researcher without cost behaves like the decision-maker and chooses $d^0(\infty)=3q$ (Corollary 1) and $\rho^0(\infty)=1$ and that, by Proposition 4, knowledge is dense in all periods without intervention. It follows immediately that, for any $\alpha$, knowledge will evolve in step-by-step expansions without failures. Moreover, it evolves with the optimal stepsize for all players. The designer has no reason to intervene}, and a sequence of short-lived researchers without cost implements the forward-looking designer's optimum \deleted{for any $(\alpha,\delta)$}. Thus, it is not the researcher's short-livedness alone that causes inefficiency in the evolution of knowledge.

\paragraph{Myopic designer and costly research.} 
Second, consider a myopic designer, $\delta=0$, in an environment with costly research, $\eta>0$, that is not fully aligned with the researcher, $\alpha<1$. Such a designer has, in any period $t$, payoffs similar to this period's researcher, albeit with a smaller cost parameter, $\hat{\eta}$. It follows that, in each period, the researcher's choices are suboptimal from the designer's perspective. In particular, the designer prefers a larger distance and a higher success probability. 

Straightforwardly, such a myopic designer \replaced{wants to align}{can achieve his per-period optimum by aligning} the researcher's cost parameter, $\eta$, with his own, $\hat{\eta}$, whenever given the chance. \replaced{If the designer aligns the cost,}{ If the designer has the tools to align the costs,} the researcher implements the designer's preferred research. \added{Whether the designer aligns $\eta$ and $\hat{\eta}$ depends on whether she needs to pay an additional cost to do so.} \replaced{If not, the designer}{While the designer has an incentive to} intervene\added{s} \deleted{in this setting} by reducing the researcher's cost parameter directly, \added{yet} the evolution of knowledge remains qualitatively unchanged.\footnote{Because $\eta$ governs the relative weight of benefits and cost, increasing the researcher's benefits or decreasing her cost is the same.} Knowledge evolves through step-by-step expansions albeit with larger stepsizes and fewer failures. This observation also provides an answer to the question of how heterogeneous researchers would change the picture. In fact, heterogeneous researchers are researchers that incorporate the cost at different weights, thus observationally indistinguishable from cost-internalizing designers.

To achieve a better outcome, the designer could, for example, award a grant to the researcher which effectively lowers that researcher's cost by allowing her to hire personnel or reduce her teaching load. \added{However, such subsidies may come at a cost to the designer not reflected in \eqref{eq:designer}. If such a cost was present, the designer would subsidize the researcher only until the marginal cost of doing so exceeds the marginal benefits.}\footnote{\added{A realistic model would need to develop a theory of supply of grants, awards, or teaching reductions, which we expect to result in a non-trivial cost function to the designer. We thus leave it for future research.}}

\paragraph{Moonshots.} 

The preceding benchmarks provide a rationale for moderate interventions at most that encourage researchers to invest more into the search for answers to more distant questions. These observations highlight that it is neither the researcher's short-livedness nor her cost of research alone that warrants more substantial interference with the scientific production process. To rationalize more drastic interventions that qualitatively change the evolution of knowledge, we now turn to a forward-looking designer in an environment with research cost.

We focus on a particular simple model in which the designer does not internalize the researcher's cost ($\alpha = 0$). We allow the designer to fully control the first researcher's actions at no cost. We are interested in whether such a designer has an incentive to induce a \emph{moonshot}, a discovery more distant than the myopically optimal distance, $3q$, to the knowledge frontier. Because a distance of exactly $3q$ maximizes the value of the area generated (by \cref{cor:opt}), a moonshot can only be beneficial if it ``inspires'' future researchers to deepen knowledge in the newly created area; that is, if it is larger than $\widehat{X}^\eta$. The next proposition states that a designer finds it optimal to launch a moonshot if and only if he is sufficiently patient and the researcher's cost friction is intermediate.

\begin{proposition}\label{prop:moonshots}
    \added{Suppose $\alpha=0$. }There are cost parameters $0<\underline{\eta}<\overline{\eta}<\infty$ and a critical discount factor $\underline{\delta}(\eta)<1$ such that for $\eta \in (\underline{\eta},\overline{\eta})$ the designer optimally launches a moonshot if and only if $\delta>\underline{\delta}(\eta)$. If $\eta=0$ \replaced{or}{and} $\eta \rightarrow \infty$, a moonshot is suboptimal for any $\delta$.
\end{proposition}

\cref{prop:moonshots} incorporates the no-cost case. As we have argued above, moonshots are of no use in that case. The same is true if the researchers' costs are high. Then, $\rho^\eta$ is low, and already the second researcher most likely fails to discover an answer. Thus, the designer expects little progress from $t=2$ onward and focuses on maximizing the value of knowledge at the end of $t=1$ inducing the myopic optimum $d=3q$ \added{which, unless $\eta=0$, still implies more novelty than what the researcher chooses herself}.

\begin{figure}[t]
    \includestandalone[width=.46\textwidth]{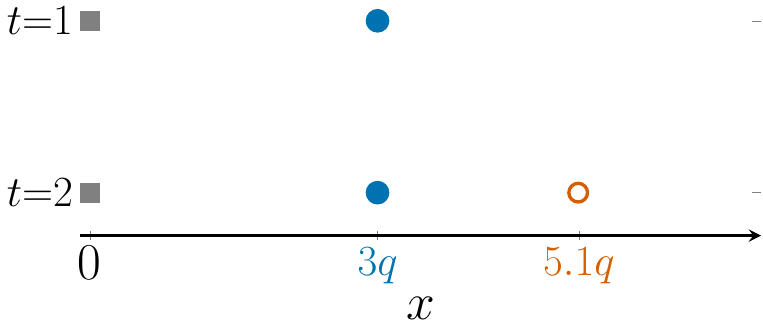}\hfill
    \includestandalone[width=.46\textwidth]{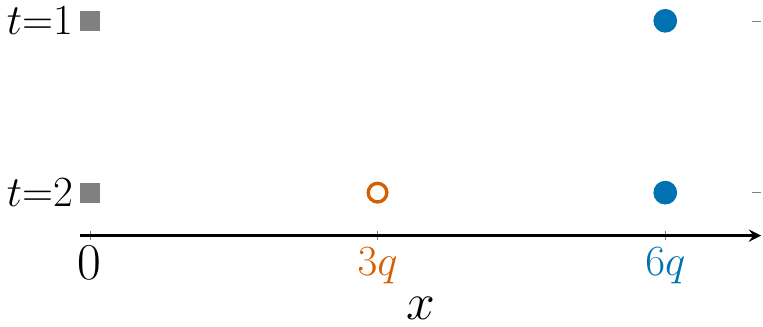}
    \caption{\emph{Evolution of knowledge for different choices in $t=1$.} \footnotesize{The dots show which questions have a known answer at each time $t$, assuming that discovery has been successful. The designer's choice for $R_1$ (\textcolor{Blue}{\large{$\bullet$}}) is given, $R_2$'s choice (\textcolor{Vermilion}{$\boldsymbol{\circ}$}) is a best response. $F_0={0,y(0)}$, $\eta=1$. The left panel assumes discovery of question $x=3q$ in period $t=1$, the right panel $x=6q$.}}\label{fig:evolution}
\end{figure}
    
For intermediate cost parameters, however, moonshots generate valuable spillovers. Without intervention, intermediate cost parameters lead to successful but too narrow research that fails too often. Researchers expand knowledge in a ladder-type structure but select questions too close to existing knowledge and too low success probabilities. In anticipation, the designer initiates a \emph{research cycle} through a moonshot. The next researcher builds on both the initial knowledge $\mathcal{F}_0$ and the moonshot discovery and deepens knowledge in the newly generated area in an attempt to close the gap between moonshot and initial knowledge.
    
A completed research cycle may lead to a more valuable landscape as \cref{fig:evolution} illustrates. Once the cycle is complete and knowledge is dense again, knowledge is in a better state than with one optimal area of length $3q$ and subsequent knowledge-expanding steps following \Cref{prop:laissez-fair}. As time moves on, the better landscape remains and generates further rents. 

\cref{fig:hazard_rate} illustrates a second positive effect of a moonshot. The probability that the evolution of knowledge ends because the next researcher fails is smaller after the moonshot than after a myopic disclosure. The moonshot opens a research area of considerable size. In $t{=}2$, the second researcher aims to fill it, but can use both the initial frontier and the moonshot discovery to fine-tune her research. This reduces her cost---in particular when the moonshot is not too far from initial knowledge. As a consequence of this logic, ``marsshots'' very far from existing knowledge are not optimal. They are too disconnected to inspire the following researchers to be productive. 

    \begin{figure}
        \centering
                \begin{tikzpicture}
                \begin{axis}[%
                    title={Hazard Rate of Science},
                    width=0.5\textwidth,
                    height=0.3\textwidth,
                    axis line style= ultra thick,
                    at={(0\textwidth,0\textwidth)},
                    scale only axis,
                    xmin=-0,
                    xmax=1.2,
                    ymin=0,
                    ymax=0.25,
                    xtick={0,1,2,3,4,5},
                    xticklabels={1,2,3,4,5,6},
                    ytick={0, 0.2,0.5,0.75,1},
                    xlabel={Time},
                    ylabel={P(science has stopped)},
                    axis x line=bottom,
                    axis background/.style={fill=white},
                    axis y line=left
                    ]
                    \node (start) at (0,0) [circle,fill,inner sep=1.5pt]{};
                    \node (t1m) at (1,1-0.8355) [circle,fill,inner sep=1.5pt]{};
                    \node (t2m) at (2,1-0.6981) [circle,fill,inner sep=1.5pt]{};
                    \node (t3m) at (3,1-0.5833) [circle,fill,inner sep=1.5pt]{};
                    \node (t4m) at (4,1-0.4874) [circle,fill,inner sep=1.5pt]{};
                    \node (t5m) at (5,1-0.4072) [circle,fill,inner sep=1.5pt]{};
                    \node (t1h) at (1,1-0.9176) [circle,fill,inner sep=1.5pt]{};
                    \node (t2h) at (2,1-0.7667) [circle,fill,inner sep=1.5pt]{};
                    \node (t3h) at (3,1-0.6406) [circle,fill,inner sep=1.5pt]{};
                    \node (t4h) at (4,1-0.5352) [circle,fill,inner sep=1.5pt]{};
                    \node (t5h) at (5,1-0.4472) [circle,fill,inner sep=1.5pt]{};
                    \draw[color=Vermilion, ultra thick] (start)--(t1m) node[midway, above left]{Myopic};
                    \draw[color=Vermilion, ultra thick](t1m)--(t2m)--(t3m)--(t4m)--(t5m);
                    \draw[color=Blue, ultra thick] (start)--(t1h) node[midway,below right]{Moonshot $6q$};
                    \draw[color=Blue, ultra thick] (t1h) -- (t2h)--(t3h)--(t4h)--(t5h);
                \end{axis}
                \end{tikzpicture}
        \caption{\emph{Hazard rate of science.} \footnotesize{Cumulative probability that science stops by time $t$ for different initial disclosures given $\eta=1$.}}\label{fig:hazard_rate}
    \end{figure}

While moonshots always reduce the hazard rate of science, sometimes they do not lead to a better landscape conditional on success. Suppose, for example, that $\eta$ is close to but below $\overline{\eta}$ from \cref{prop:moonshots} (and $\delta$ not too small). Then, the risk of not producing anything beyond the first discovery is large. However, a moonshot significantly reduces the risk of failure in $t=2$, in particular, if it is only marginally above the cutoff for deepening research, $\widehat{X}^\eta$. Such a moonshot does not improve the landscape, but focuses on making $R_2$ succeed. If $R_2$ completes the cycle, the knowledge landscape is worse than completing the counterfactual two steps $3q$ (through the designer) and $d^\eta(\infty)$ (through $R_2$). Yet, because $R_2$ becomes more likely to succeed after a moonshot, the moonshot remains profitable. If, instead, $\eta$ is small but above $\underline{\eta}$, a moonshot will provide benefits both in output and in the value of knowledge.

\paragraph{Taking costs into account.} One, seemingly crucial, assumption of our moonshot analysis is that the designer does not bear the costs of the initial moonshot or does not directly care about the costs exerted by future researchers. Taking these costs into account complicates the model significantly, as the designer needs to trade off several paths with potentially different costs. Analytically, we consider it beyond scope for this paper. However, numerically, it turns out that incorporating these costs and/or the costs of future researcher generations does not affect \cref{prop:moonshots}.\footnote{This and other numerical results we allude to here can be obtained using the code we provide as supplementary material to this paper. Since our static model has effectively only one (relevant) parameter, $\eta$, we can provide comprehensive numerical results.} 

\paragraph{Research cycles.} As we have seen above, with infrequent interventions, a designer may improve the evolution of knowledge by inducing a research cycle through an initial moonshot, which provides a higher payoff than the same number of steps when expanding knowledge stepwise. A natural question to ask is what happens if the designer has multiple but infrequent opportunities to intervene. There are several ways to model infrequent interventions. We opt for the simplest, which is that, in each period, the designer gets to control the researcher with probability $\lambda>0$. For the sake of clarity, we are particularly interested in the case in which such opportunities are rare. This assumption allows us to circumvent that future opportunities play a large role in the designer's decision today, which would obfuscate our trade-off of interest.

For our analytical result, we consider environments that (i) are \emph{promising} in the sense that the discount factor $\delta$ and the cost parameter $\eta$ are such that $\delta \rho^\eta(\infty)>1/2$, and (ii) feature optimal one-time moonshots of intermediate length, that is, the optimal moonshot has length $X \in [4q, \min\{2 \widehat{X}^\eta, \widetilde{X}^\eta\}]$.\footnote{Numerically, there is a wide range of parameters $(\delta,\eta)$ satisfying both conditions. \added{For example, when $\delta=0.95$, $\eta\in[0.01,0.5]$ suffices.} The lower bound, $4q$, is directly implied by the environment being promising. } Then, after a moonshot in $t$, the next researcher, $R_{t+1}$, aims to complete the cycle and chooses the midpoint of the area created by that moonshot. 

Having fixed the environment for the single moonshot, continuity implies that, for rare moonshot opportunities, the designer starts a moonshot whenever given the opportunity as long as science did not get stuck before due to failed research. However, the question is less clear when science is stuck. Specifically, we may consider the following scenario. The designer had created a moonshot in the past, but due to an unexpected turn of the truth, the researchers have failed to complete the research cycle. In that case, should the designer abandon the incomplete cycle and start a new one, or, should he use his opportunity to complete the incomplete cycle?

    \begin{proposition}\label{prop:deepen_knowledge}
        Assume \replaced{$\alpha=0$ and a promising environment with moonshots of intermediate length}{the environment is promising and features moonshots of intermediate length}. The designer strictly prefers completing an open cycle over initiating a new research cycle when interventions are sufficiently rare.
    \end{proposition}

The intuition behind \cref{prop:deepen_knowledge} is straightforward. The environment is promising and researchers are unlikely to fail while expanding research. If, in addition, the designer is patient, his effective discount factor is high. A high effective discount factor implies that leaving a cycle incomplete leads to a severe reduction in long-run research output. Moreover, the immediate payoff from completing a cycle is large. 

Providing analytical solutions for the general case proves difficult, because the researcher's problem is two-dimensional and the disclosure probabilities $\rho^\eta(\infty)$ can only be implicitly defined. Numerically, however, we can show that the logic of \cref{prop:deepen_knowledge} holds if the initial optimal moonshot is sufficiently large. The reason is intuitive. 
If the moonshot is large, there are two benefits. First, once filled up, the knowledge landscape is better than if it evolved through knowledge-expanding steps. Second, a moonshot lowers the probability that researchers get stuck \added{by increasing the success probability of the next researchers}. If, instead, the moonshot is short, the sole reason behind the moonshot is to lower the risk of researchers getting stuck. But then, if the optimal moonshot is short, researchers are likely to fail outside of moonshots. Thus, completing the cycle provides little benefits and likely leads to immediate failure thereafter. A new moonshot, on the other hand, improves the chances of a discovery in the next period. Thus, the designer opts for the latter option.

    \section[Final Remarks]{Final Remarks}\label{sec:final_remarks}

    We propose a tractable and flexible model based on three simple premises: (i) the pool of available research questions is large, (ii) questions close to existing knowledge are easier to answer than questions far from existing knowledge, and (iii) society applies knowledge when selecting policies. Our model endogenously links novelty and research output and highlights the importance of existing knowledge for research and knowledge accumulation. A dynamic extension delivers rich insights into how interventions can improve the accumulation of knowledge over time. 

    We close with a discussion along two dimensions: First, we revisit our modeling choices, and discuss extensions and alternative assumptions. Then, we discuss implications of our findings for designing institutions and research environments.

\subsection[Modeling Choices]{Modeling Choices} \label{sub:modeling_choices}

    To model the evolution of knowledge as close as possible to the static setting, we made a set of modeling assumptions that may appear strong. We now revisit these assumptions and briefly discuss alternatives.

\paragraph{Beyond the real line.} For illustrative purposes, we assume that the set of questions is the real line which implies that there are exactly two directions in which the research frontier may be expanded. In reality, we would expect that, at least in some fields, there is a plethora of directions in which a given researcher could expand the frontier. It turns out, however, that as long as the partitioning of the question space into areas is attainable, our analysis and findings remain unchanged. We provide a formal extension in Online \Cref{sub:different_universe_of_questions}. There, we also discuss the case of ``seminal discoveries'' that open up new fields of questions.
    
\paragraph{Observing failure.} In the limit, when researchers can select from a continuum of directions, it becomes hard for others to find previous failed attempts. For example, suppose that there is some cost of finding out whether a past researcher has done expanding research in the direction the current researcher contemplates. Because there is an entire ocean of potential directions, the probability that someone has worked in the contemplated direction is zero. Hence, it is not worth paying the cost and past failures remain unobserved. 
    
While our baseline model is perhaps too restrictive on the set of directions, an ocean may be too open. Thus, indeed, there are settings in which observing past failure becomes relevant. Generally, observing a failed attempt provides information about an unexpected turn of the Brownian motion and, therefore, increases the variance at that point. Assuming some coherence in the search for an answer (for example, by restricting search to a connected interval of answers), digging deeper on previous failed attempts proves unattractive: The failed attempt has revealed that the answer is complicated. In response, researchers would aim to answer a different question rather than resolving that failed attempt.

\paragraph{Getting knocked off the ladder.} \cref{prop:laissez-fair} shows that knowledge expands in steps without interventions. Because $\widehat{X}^\eta>3q$ for all $\eta$, researchers, even if heterogeneous, will not deviate from the ladder structure. Instead, with heterogeneous researchers, we would only get ladders with unequal step sizes.

Serendipity, on the other hand, may lead to researchers leaving the ladder for some time. Imagine some random discovery far from existing knowledge. Then, the next generations will work to connect it to the existing bulk of knowledge by \cref{prop:researchX} until knowledge is dense again. Only then do researchers return to the knowledge-expanding steps of \cref{prop:laissez-fair}.

Other forces with the potential to break the ladder structure are long-lived researchers who would act similar to a cost-internalizing designer and have an incentive to set moonshots, or exogenous shocks to the importance of questions. In our baseline, all questions have the same relevance. However, real-world shocks may cause an elevated interest in particular questions. Our model predicts that such temporary interest can have long-lasting effects. Once the interest fades, researchers use the improved conjectures to bridge old knowledge and new findings.

\paragraph{Other frictions.} We assume that, besides the cost friction, the market for ideas works well and researchers are paid their marginal contribution to the value of knowledge. In reality, several known frictions hinder that process. Well-known examples include publication bias \citep{Kasy2019Bias}, the emphasis on priority \citep{hill2019scooped,10.1093/restud/rdw038,hill2020race}, or researchers' career concerns \citep{Akerlof2018,10.1257/jel.20191574}. While the question of optimal market design is beyond our scope, our framework is flexible enough to incorporate these frictions. It may thus be a stepping stone toward structural models of science funding that include such frictions.

\subsection[Implications]{Implications} \label{sub:implications}

Although our dynamic model is stylized, we reconcile empirical findings in the economics of science \citep[see, for example,][documenting a lack of novelty in research]{Rzhetsky2015,Fortunato2018} and the economics of innovation \citep[e.g.,][documenting the inspiring nature of the original moonshot]{jaffe_evidence_2003}. In reality, a researcher's value of any given discovery, especially in the basic sciences, depends on the institutional framework a researcher operates in. Our findings in \cref{sub:research_dynamic} suggest that focusing only on immediate policy relevance when designing researchers' incentives is suboptimal for patient societies. In the following, we discuss some alternative incentive structures. 

\paragraph{Future-oriented rewards.} No short-lived researcher selects a question novel enough to initiate a research cycle, even when research is costless to her. The reason is that researchers exclusively care about the value of their research today but not about its indirect value in guiding future researchers. 

To incentivize moonshot discoveries, researchers need future-oriented rewards that go beyond the instantaneous value of their findings. Apart from prizes for novel findings, the value of citations for promotion decisions or scientific reputation may serve such purpose. That insight is reminiscent of recent empirical work on firm-level R\&D. \citet{frankel2023evaluation} estimate the value of dynamic spillovers from discoveries of drugs. In line with our model, they provide suggestive evidence that the lack of appropriability of these spillovers harms novelty in pharmaceutical innovation.

\paragraph{Research consortia.} The idea of research consortia has been put forward in some fields of basic science to improve the evolution of knowledge. Research consortia formed by scientists of different backgrounds operate on missions different from the ``publish or perish'' or ``marketability'' paradigms. \citet{hill2020race} document that the incentives and choices of consortia in structural biology differ from those of university researchers. In line with our model, they find that consortia provide more novel discoveries but also discoveries of lower immediate value.

Our findings suggest that the underlying reason is not particular to structural biology. Establishing institutions that alleviate some researchers from the need to provide immediate benefits can guide those in traditional incentive schemes. In fact, our findings in \cref{sub:research_dynamic} suggest that a mix in incentives may be key to improving the evolution of knowledge.

\paragraph{Direction of science.} As we noted above, funding measures that only target the researcher's cost do not alter the way knowledge progresses qualitatively. However, it may alter the direction of science. Recall that a researcher's question choice in our model can be interpreted as choosing from a set of directions and then picking novelty along that direction. Reducing the cost of research in one of the available directions distorts the marginal researcher in favor of the subsidized direction. 

Recent empirical work is consistent with this observation. \citet{nagaraj2023does} suggest that data availability affects the direction of sciences. \citet{kim2023shortcuts} finds that novel technologies can incentivize researchers to focus on more explored rather than unexplored areas. \citet{myers2020elasticity} shows that topic-specific grants can lead to a change in the direction of science. These findings can be interpreted in our model as a change in the cost of research: some directions become less cost intensive than others. 

When a researcher chooses a moonshot in some direction, she affects future generations through the implied research cycle, suggesting that the effect is persistent. Moonshots determine the direction science takes in the medium run. That coordination may provide additional network benefits, which are currently outside our model. A potential counterforce has been identified in the literature in corporate R\&D which emphasizes the role of competition.\footnote{See, e.g., \citet{bryan2017direction,hill2020race,hill2019scooped,hopenhayn2021direction}.} While beyond our scope, combining competition and dynamic spillovers within our framework is an exciting avenue for future research.

\allowdisplaybreaks

\appendix 

\section{Proofs} 
\label{sec:proofs}

Pure algebraic reformulations are relegated to \cref{sub:omitted_proofs}. We review the properties of $\tilde{c}(\rho)=(\operatorname{erf}^{-1}(\rho))^2$ in \cref{sec:Notation}. We use subscripts to denote partial derivatives; $\frac{\mathrm{d} f(x,y)}{\mathrm{d x}}$ for the total derivative; and omit function arguments when clarity is preserved.


\subsection{Proof of Proposition \ref{prop:value_knowledge}} 
\label{sub:proof_of_prop_value_knowledge}
The value of knowing $\mathcal{F}_k$ is \(
	\int_{-\infty}^{\infty} \max\left\{1- \sigma^2_x(y|\mathcal F_k)/q,\:0\right\}\mathrm{d}x.
\)
No matter which point of knowledge $(x,y(x))$ is added to $\mathcal{F}_k$, the value of knowledge outside the frontiers is identical for both $\mathcal{F}_k$ and $\mathcal{F}_k\cup\{(x,y(x))\}$. Area lengths $X_1=X_k=\infty$ do not depend on $\mathcal F_k$ and neither does the variance for a question $x<x_1$ or $x>x_k$ with a given distance $d$ to $\mathcal F_k$. The conjectures about all questions outside $[x_1,x_k]$ deliver a value of 2 $\int_0^q (q-x)/q \mathrm{d}x = q$,
which is independent of $\mathcal F_k$. 

Moreover, if a question $\hat{x}\in [x_i,x_{i+1}]$ is answered, it deepens knowledge 
and only affects questions in that area, i.e., $G(x|\mathcal F_k)=G\left(x|\mathcal F_k \cup \{(\hat{x},y(\hat{x}))\}\right)$ $\forall$ $x\notin (x_{i},x_{i+1})$.

The value of an area $[x_{i},x_{i+1}]$ is (with abuse of notation)
\begin{align*}
	v(X)=\int_0^X \max\left\{\frac{q- \frac{d(X-d)}{X}}{q},0\right\}\mathrm{d}d.
\end{align*}
Note that whenever $X\leq4q$, $d(X-d)/X\leq q$. Hence, we can directly compute the value of any area with length $X\leq 4q$ as $v(X)=X-X^2/(6q)$.
Whenever $X>4q$, a positive value is generated only on a subset of points in the area. As the variance is a symmetric quadratic function with midpoint $X/2$, there is a symmetric area around $X/2$ which has a variance exceeding $q$.  On those points, the decision-maker's losses are limited to zero. The points with variance equal to $q$ are $\overline{d}_{1,2}= X/2 \pm 1/2\sqrt{X}\sqrt{X-4q}$. Hence, the value of an area with $X>4q$ is (due to symmetry)
\begin{align*}
	v(X) &= 2\int_0^{\overline{d}_1} \frac{q-\frac{d(X-d)}{X}}{q} \mathrm{d}d = X- \frac{X^2}{6q} + \frac{X-4q}{6q}\sqrt{X}\sqrt{X-4q}
\end{align*}
If knowledge expands, a new area is created and no area is replaced. The value created is 
\begin{align*}
	V(d;\infty)=v(d)= d-\frac{d^2}{6q}  + \begin{cases}0, &\text{ if } d\leq 4q\\ \frac{d-4q}{6q}\sqrt{d}\sqrt{d-4q}, &\text{ if }d>4q. \end{cases}
\end{align*}

If a knowledge point is added inside an area with length $X$ with distance $d$ to the closest existing knowledge, it generates two new areas with length $d$ and $X-d$ that replace the old area with length $X$. The total value of the two new intervals is
\begin{align*}
	v(d)+v(X-d) = &d-\frac{d^2}{6q} &&+ \begin{cases}0, &\text{ if } d\leq 4q\\ \frac{d-4q\sqrt{d}\sqrt{d-4q}}{6q},\phantom{ABCDe}  &\text{ if }d>4q\end{cases} \\
	&+X-d -\frac{(X-d)^2}{6q} &&+ \begin{cases}0, &\text{ if } X-d\leq 4q\\ \frac{X-d-4q\sqrt{X-d}\sqrt{X-d-4q}}{6q}, &\text{ if }X-d>4q\end{cases}. 
\end{align*}

The benefits of discovery are then $V(d;X)=v(d)+v(X{-}d)-v(X)$. Noticing that $\sigma^2(d;X)=d(X{-}d)/X$ and replacing results in the expression in the proposition. Taking the limit $X \rightarrow \infty$ implies the expanding value.

\subsection{Proof of Proposition \ref{cor:opt}}
\label{sub:proof_of_cor_opt_expand}

\paragraph{Expanding Knowledge.} The first-order condition for $d\leq4q$ immediately delivers $d=3q$.
Moreover, the benefits decrease in $d$ for $d>4q$ which can be seen from the negative $d$-derivative (invoking \cref{lem:inequalityexpandinglarger4} in Online \cref{sub:omitted_proofs})
\begin{align*}
	&V_d(d;\infty|d> 4q)=-\frac{d}{3q}+1 + \sqrt{\frac{d-4q}{d}} \frac{d-q}{3q}<0,
\end{align*}


\label{sub:proof_of_cor_deep_optimal}
\paragraph{Deepening Knowledge.}~
	\begin{lemma}\label{lem:boundaryXsmaller6}
		$d^0(X)=X/2$ if $X\leq 6q$. 
	\end{lemma}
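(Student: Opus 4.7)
The plan is to split into cases based on whether $X \leq 4q$ or $X \in (4q,6q]$ and use the explicit formula for $V(d;X)$ from \cref{prop:value_knowledge}, exploiting that $d\leq X/2 \leq 3q < 4q$ throughout, so the indicator $\boldsymbol{1}_{d>4q}$ vanishes.

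First, for $X\leq 4q$ every indicator in $V(d;X)$ is zero, so $V(d;X)=\frac{d(X-d)}{3q}$, a strictly concave parabola maximized at $d=X/2$.

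Second, for $X\in(4q,6q]$, the indicator $\boldsymbol{1}_{X>4q}$ equals $1$ and the only remaining active indicator is $\boldsymbol{1}_{X-d>4q}$, which switches on precisely for $d<X-4q$. This splits $[0,X/2]$ into two subintervals. On the right subinterval $[X-4q,X/2]$ the expression reduces to $\frac{1}{6q}\bigl(2d(X-d)-\sqrt{X}(X-4q)^{3/2}\bigr)$, whose $d$-dependence is $2d(X-d)$, and this is strictly increasing on $[X-4q,X/2]$ since $d\leq X/2$. So the maximum on this subinterval is attained at $d=X/2$.

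It remains to rule out that $V(d;X)$ has a larger value on the left subinterval $[0,X-4q]$. The plan is to show strict monotonicity in $d$ on this subinterval; by continuity at $d=X-4q$ the desired inequality then follows. Differentiating, strict monotonicity reduces, after substituting $u:=X-d\in[4q,X]$ and squaring (both sides non-negative since $2u-X\geq 2(X/2)-X=0$), to the polynomial inequality
\[
h(u,X):=3u^{3}+(6q-4X)u^{2}+(X^{2}-9q^{2})u+4q^{3}>0
\]
for all $u\in[4q,X]$ with $X\in(4q,6q]$. I would verify this by
\begin{enumerate}
\item a direct boundary computation showing $h(4q,X)=4q(X-8q)^{2}$, which is strictly positive for $X<8q$; and
\item checking that $\partial_{u}h=9u^{2}+2(6q-4X)u+(X^{2}-9q^{2})$ is positive on $[4q,X]$ when $X\in(4q,6q]$. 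At $u=4q$ one computes $\partial_{u}h(4q,X)=X^{2}-32Xq+183q^{2}$, whose discriminant in $X$ is $1024-732=292$, giving roots $X=(16\pm\sqrt{73})q$; the smaller root exceeds $7q>6q$, so $\partial_{u}h(4q,X)>0$ on the relevant range. A similar computation shows the roots of $\partial_{u}h$ in $u$ lie below $4q$ for $X\leq 6q$, so $\partial_{u}h$ keeps its sign on $[4q,X]$.
\end{enumerate}
These two facts together give $h>0$ on $[4q,X]$, hence $g'(d)>0$ on $[0,X-4q]$, hence $V(\cdot;X)$ is strictly increasing on $[0,X-4q]$ and on $[X-4q,X/2]$, giving $d^{0}(X)=X/2$.

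The main obstacle is the monotonicity check on the left subinterval: the derivative is a difference of two positive terms, so sign control requires squaring and reducing to the cubic $h(u,X)$. The key algebraic miracle is the clean factorization $h(4q,X)=4q(X-8q)^{2}$, which yields a positive boundary value and also foreshadows the breakdown of the result at $X=8q$ that \cref{lem:dinteriorifXlarger8} will later exploit.
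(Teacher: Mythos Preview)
Your proof is correct and follows the same case decomposition as the paper: split at $X=4q$, and for $X\in(4q,6q]$ split again at $d=X-4q$. The only substantive difference is in how you verify $V_d>0$ on the left subinterval $[0,X-4q]$. The paper argues via $V_{ddd}>0$ (so $V_d$ is convex in $d$), checks $V_d>0$ at both endpoints, and then treats any interior minimum of $V_d$ by substituting the condition $V_{dd}=0$ to reduce $V_d$ to a rational expression whose numerator is shown positive by a monotonicity argument. You instead square the inequality $2u-X>(u-q)\sqrt{(u-4q)/u}$ directly and reduce to the cubic $h(u,X)$, controlling it via the boundary value $h(4q,X)=4q(X-8q)^2$ and the sign of $\partial_u h$. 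Your route is more elementary---no third derivative, no interior-critical-point substitution---and has the pleasant feature that the threshold $X=8q$ appears transparently in the factorization of $h(4q,X)$; the paper's route is slightly more structured and reuses the convexity of $V_d$ that figures elsewhere in the proof of \cref{cor:maxinteriorsmaller8}. Both arguments are valid and of comparable length.
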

	\begin{proof}~
		\noindent\textbf{1. Assume $X \leq 4q.$} 
		The benefits of discovery are $V(d;X| X\leq 4q)=(X d - d^2)/(3q)$
		which increase in $d$ for $d\in[0,X/2]$ and are maximized at $d=X/2$. Moreover, $V(X/2;X|X\leq 4q)=X^2/(12q)$ which increases in $X$.

		\textbf{2. Assume $X \in(4q,6q]$.} Case	(i). $d \geq X-4q$ implies (since $d\leq 3q$)
		\[V(d;X)=\frac{1}{6q}\left(2dX - 2d^2-\sqrt{X}(X-4q)^{3/2}\right)\] 
		which are the same as in the first case up to the constant $-\sqrt{X}(X-4q)^{3/2}$. Thus, the optimal $d$ conditional on $d \geq X-4q$ is $d=X/2$.

		Case (ii). For $d\leq X-4q$ the benefits and their derivative are
		\begin{align*}
		V(d;X)&=\frac{2dX - 2d^2+\sqrt{X-d}(X-d-4q)^{3/2}-\sqrt{X}(X-4q)^{3/2}}{6q} \\
		V_d(d;X)&=\frac{1}{3q}\left( X-2d - (X-d-q)\sqrt{\frac{X-d-4q}{X-d}}\right)>0. 
		\end{align*}
		\cref{sub:addendum_to_sub:proof_of_cor_deep_optimal} in \cref{sub:omitted_proofs} implies the last claim. Then, $V_d(d{;}X|d{\leq} X-4q, X{\in}[4q,6q]){>}0$ $\forall$ $d$, $X$ in the considered domain; $d{=}X{-}4q$ maximizes $V(d;X|d {\leq} X-4q,X {\in}(4q,6q]))$ and by (i) $d{=}X/2$ maximizes $V(d{;}X|X {\in} (4q,6q])$.
	\end{proof}

	\begin{lemma}\label{lem:VdatXhalf}
		For any $X<\infty$, $V_d(X/2;X)=0$.
	\end{lemma}
	\begin{proof}
	\[V(d;X)=  \scriptstyle {\frac{1}{6q}} \Big(\underbrace{\scriptstyle {2X \sigma^2(d;X)}}_{(I)} + \scriptstyle {\boldsymbol{1}_{d>4q}} \underbrace{\scriptstyle {\sqrt{d}(d-4q)^{3/2}}}_{(II)}    +\scriptstyle {\boldsymbol{1}_{X-d>4q}} \underbrace{\scriptstyle {\sqrt{X-d} ~(X-d-4q)^{3/2}}}_{(III)} - \scriptstyle {\boldsymbol{1}_{X>4q}} \underbrace{\scriptstyle {\sqrt{X}(X-4q)^{3/2}}}_{(IV)}  \Big).\]
	At $d=X/2$, either both (II) and (III) are active or neither is. Moreover, (IV) is independent of $d$, and we have $\partial (II)/\partial d =-\partial (III)/\partial d$, and $\partial (I)/\partial d= 0$.
	\end{proof}

	\begin{lemma}\label{lem:dinteriorifXlarger8}
		If $X>8q$ then $d^0(X)\neq X/2$. If $d^0(X)\neq X/2$, then $d^0(X)\leq4q$. 
	\end{lemma}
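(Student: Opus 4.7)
The plan is to exploit the symmetry $V(d;X)=V(X-d;X)$, evident from \cref{prop:value_knowledge}, and reduce both halves of the lemma to tractable one-variable statements. In the regime where both indicator terms in that proposition are active (which requires $X>8q$ together with $d\in(4q,X-4q)$), the identity $\tfrac{\mathrm{d}}{\mathrm{d}u}\bigl[\sqrt{u}\,(u-4q)^{3/2}\bigr]=2\psi(u)$ with the auxiliary function $\psi(u):=(u-q)\sqrt{(u-4q)/u}$ yields the compact expressions
\[
V_d(d;X)=\frac{1}{3q}\bigl[(X-2d)+\psi(d)-\psi(X-d)\bigr],\qquad V_{dd}(d;X)=\frac{1}{3q}\bigl[-2+\psi'(d)+\psi'(X-d)\bigr].
\]
Both parts of the lemma will then reduce to a single scalar inequality, namely $\psi'(u)>1$ for all $u>4q$.

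I would establish that inequality first, since it is the crux. Substituting $y:=4q/u\in(0,1)$ rewrites $\psi'(u)=(8-4y-y^2)/\bigl(8\sqrt{1-y}\bigr)$, and $\psi'(u)>1$ becomes $8-4y-y^2>8\sqrt{1-y}$. Both sides are positive on $(0,1)$ (the left side decreases from $8$ to $3$), so squaring is valid and the claim reduces to the polynomial identity $(8-4y-y^2)^2-64(1-y)=y^3(y+8)$, which is manifestly positive on $(0,1)$. Everything else in the proof is bookkeeping built around this bound.

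For the first assertion, symmetry gives $V_d(X/2;X)=0$, and the formula above yields $V_{dd}(X/2;X)=\frac{2}{3q}\bigl(\psi'(X/2)-1\bigr)$. For $X>8q$ we have $X/2>4q$, so $\psi'(X/2)$ lies in the range where the bound applies and $V_{dd}(X/2;X)>0$. Thus $d=X/2$ is a strict local minimum of $V(\cdot;X)$, ruling it out as a maximizer. For the second assertion in the subcase $X>8q$, I would show $V_d<0$ throughout $(4q,X/2)$, which together with symmetry confines $d^0(X)$ to $[0,4q]$. Parameterizing $d=X/2-t$ with $t\in(0,X/2-4q)$ gives $V_d=(2t-f(t))/(3q)$ with $f(t):=\psi(X/2+t)-\psi(X/2-t)$, and since $f(0)=0$ and $f'(t)=\psi'(X/2+t)+\psi'(X/2-t)>2$ by applying the scalar bound at two points in its domain, integration yields $f(t)>2t$ and hence $V_d<0$ on the desired range.

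It remains to handle $X\le 8q$ for the second assertion. In that regime $X/2\le 4q$, and on the subinterval $d\in[\max\{0,X-4q\},X/2]$ neither indicator in \cref{prop:value_knowledge} is active, so $V_d=(X-2d)/(3q)>0$ strictly for $d<X/2$. Hence $V$ is strictly increasing there, and if $d^0(X)\neq X/2$ the maximum must be attained on $[0,X-4q)$; since $X\le 8q$ gives $X-4q\le 4q$, this yields $d^0(X)<4q$ (the case $X\le 6q$ being vacuous by \cref{lem:boundaryXsmaller6}). The only genuinely delicate step in the whole argument is the scalar inequality $\psi'(u)>1$; once that is secured, the rest follows from symmetry and a clean piecewise analysis of $V$.
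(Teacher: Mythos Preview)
Your proof is correct and takes a somewhat different route from the paper's. For the first assertion, the paper exhibits the explicit comparison point $\bar d=4q$ and computes $V(4q;X)-V(X/2;X)>0$ directly; you instead show $V_{dd}(X/2;X)>0$ via $\psi'(X/2)>1$, making $X/2$ a strict local minimum. Both are valid, and yours is arguably slicker. For the second assertion, both arguments establish $V_d<0$ on $(4q,X/2)$; the paper gets there by showing $V_{ddX}<0$, passing to the limit $X\to\infty$, and asserting that the resulting expression for $V_{dd}$ is positive; you reach the same conclusion through the scalar bound $\psi'(u)>1$, which you prove cleanly via the polynomial identity $(8-4y-y^2)^2-64(1-y)=y^3(y+8)$. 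Your argument is more self-contained here and in fact makes explicit a positivity claim the paper only states. One minor remark: your treatment of $X\le 8q$ in the second part is more elaborate than necessary, since in that range $d^0(X)\le X/2\le 4q$ holds trivially; the case is vacuous without any analysis of $V_d$.
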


	\begin{proof}
		Consider $d=\overline{d}=4q<X/2$. We obtain
		\[V(\overline{d};X|\cdot) - V(X/2;X|\cdot) =   \frac{1}{6q} \frac{(X-8q)^{3/2}}{2} \Big( 2\sqrt{X-4q} - \sqrt{X}- \sqrt{(X-8q)}\Big),\]
		which is positive if $4(X-4q)>2X-8q \Leftrightarrow X>4q$, which holds by assumption.

		To establish the second part of the lemma, note that $d>4q$ only occurs for $X>8q$. We will show that $V_d(d;X)<0$ for all $d>4q$ when $X>8q$. Towards this, observe 
		\[V_{ddX}(d;X>8q)=-\frac{4 q^2}{(X-d)^{\frac{5}{2}}(X-d-4q)^{\frac{3}{2}}}<0,\]
		because $X{>}8q$ and $X-d{\geq} X/2{>}4q$. Thus, a lower bound for $V_{dd}(d;X>8q)$ is 
		\[V_{dd}(d;X>8q)|_{\lim_{X\rightarrow \infty}} = \frac{1}{3q} \frac{d^2-d^{\frac{3}{2}}\sqrt{d-4q}-2q(d+q)} {d^{\frac{3}{2}}\sqrt{d-4q}}>0.\]
		Thus, $V_d(d;X)$ is highest for $d=X/2$ which, by \cref{lem:VdatXhalf}, is $0$. Hence, $V(d;X)$ decreases in $d$ for $X>8q$ and $d>4q$. 
	\end{proof}
	\begin{lemma}\label{lem:decreasingbenefitifXincreasing}
		$d^0(X)< X/2$ $\Rightarrow$ $\frac{d V(d^0(X);X)}{dX}<0$.
	\end{lemma}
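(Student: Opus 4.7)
The plan is to invoke the envelope theorem and reduce the claim to a comparison of the marginal single-area value $v'(\cdot)$ evaluated at two different lengths, where $v(z)$ denotes the value of a single isolated area of length $z$ as introduced in the proof of \cref{prop:value_knowledge}. Specifically, since $V(d;X)=v(d)+v(X{-}d)-v(X)$ and $d^0(X)$ satisfies the first-order condition $v'(d^0)=v'(X-d^0)$ whenever it is interior, the envelope theorem gives
\[
\frac{dV(d^0(X);X)}{dX}=V_X(d^0(X);X)=v'(X-d^0(X))-v'(X).
\]
The task is then to show $v'(X-d^0(X))<v'(X)$.

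\textbf{Locating the relevant arguments.} Using the explicit formula for $v$ from the proof of \cref{prop:value_knowledge}, one checks that $v'(z)=1-z/(3q)$ is strictly decreasing on $[0,4q]$ from $1$ to $-1/3$, and that on $(4q,\infty)$ the function $v'$ satisfies $v'(z)<0$ with $v'(z)\to 0^-$ as $z\to\infty$. From \cref{lem:dinteriorifXlarger8}, interior $d^0$ implies $d^0\le 4q$. Moreover, \cref{lem:boundaryXsmaller6} forces $X>6q$ whenever $d^0<X/2$. Combining these: if $d^0\le 3q$, then $v'(d^0)\ge 0$, and the FOC would demand $v'(X-d^0)\ge 0$, which (given $v'<0$ on $(3q,\infty)$) forces $X-d^0\le 3q$ and hence $X\le 6q$, a contradiction. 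So at any interior optimum, $d^0\in(3q,4q]$. Applying the FOC once more: since $v'$ is strictly monotone on $[0,4q]$, having both $d^0$ and $X-d^0$ in $[0,4q]$ with $v'(d^0)=v'(X-d^0)$ would force $d^0=X-d^0$, contradicting interiority. Hence $X-d^0>4q$.

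\textbf{Monotonicity of $v'$ on $(4q,\infty)$.} Writing $g(z)=\sqrt{z}(z-4q)^{3/2}$, one has $v'(z)=1-z/(3q)+g'(z)/(6q)$ on $(4q,\infty)$, so $v''(z)=(g''(z)-2)/(6q)$. A direct computation gives
\[
g''(z)=\frac{2(z^2-2qz-2q^2)}{z^{3/2}(z-4q)^{1/2}},
\]
so the inequality $v''(z)>0$ reduces to $(z^2-2qz-2q^2)^2>z^3(z-4q)$. Expanding both sides, the difference equals $8q^3z+4q^4>0$, establishing that $v'$ is strictly increasing on $(4q,\infty)$. This is the one algebraically substantive step; everything else is bookkeeping.

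\textbf{Conclusion.} Since $4q<X-d^0<X$ and $v'$ is strictly increasing on $(4q,\infty)$, we obtain $v'(X-d^0)<v'(X)$, whence $\frac{dV(d^0(X);X)}{dX}<0$, as claimed. The main obstacle is Step 3, the monotonicity check for $v'$ on $(4q,\infty)$; the remaining work merely combines the earlier lemmata on where $d^0$ can lie with the shape of $v'$ derived from \cref{prop:value_knowledge}.
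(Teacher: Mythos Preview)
Your proof is correct. Both you and the paper reduce to the envelope identity $\tfrac{d}{dX}V(d^0(X);X)=V_X(d^0;X)$ and then establish $V_X(d^0;X)<0$ by a convexity argument; the mathematical core is the same, since $V_{Xd}(d;X)=-v''(X-d)$, so the paper's claim $V_{Xd}<0$ on the relevant range is exactly your claim that $v'$ is increasing on $(4q,\infty)$.

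The organization differs in two places. First, to locate the interior optimizer, the paper argues (via a separate step) that $V_d>0$ on $[X{-}4q,X/2)$ when $X\le 8q$, forcing $d^0\le X-4q$; you instead exploit the first-order condition $v'(d^0)=v'(X-d^0)$ together with the shape of $v'$ (nonnegative on $[0,3q]$, strictly negative beyond, injective on $[0,4q]$) to pin down $d^0\in(3q,4q]$ and $X-d^0>4q$ directly. Second, to obtain the key monotonicity, the paper's \cref{lem:partialValuetoareanegative} differentiates $V_{Xd}$ in the variable $a=X-d$ and passes to the limit $a\to\infty$, whereas you square out the inequality $g''(z)>2$ and reduce it to the algebraic identity $8q^3z+4q^4>0$. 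Your route is a bit more self-contained and avoids the auxiliary lemma; the paper's route has the side benefit of proving $V_X<0$ for \emph{all} $d\in[0,X-4q]$, not only at $d^0$, which it reuses elsewhere.
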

	\begin{proof}
		By the envelope theorem, $\frac{ d V(d^0(X);X)}{d X} = V_X(d^0(X);X)$
		which is negative for $X\geq4q$ and $d\in[0,X-4q]$ by \cref{lem:partialValuetoareanegative}  in Online \cref{sub:omitted_proofs}. If $X\geq8q$, then $d\leq X-4q$ by definition and \cref{lem:partialValuetoareanegative} applies. If $X< 8q$, by \cref{lem:boundaryXsmaller6}, we know that $d^0(X)\neq X/2$ only if $X\geq 6q$. Moreover,
		\[V_d(d;X|X/2>d>X-4q, X<8q)= \frac{X-2d}{3q}>0 \]
		Hence, $d^0(X) \neq X/2 \Rightarrow d^0(X)\leq X-4q$. \cref{lem:partialValuetoareanegative} applies proving \cref{lem:decreasingbenefitifXincreasing}.
	\end{proof} 

	\begin{lemma}\label{interioronceimpliesalwaysinterior}
		$d^0(X)<X/2$ for some $X \in [6q,8q)$ $\Rightarrow$ $d^0(X)<X/2$ for all $X'>X$.
	\end{lemma}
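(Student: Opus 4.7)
The plan is to handle the two subranges $X' \ge 8q$ and $X' \in (X_0, 8q)$ separately: the former is immediate from \cref{lem:dinteriorifXlarger8}, so the real work is to show that the property $d^0(X) < X/2$ propagates forward inside $[X_0, 8q)$. To that end, I would introduce the ``interior margin''
\[
\Phi(X) \;:=\; \max_{d \in [0,\, X-4q]} V(d;X) \;-\; V(X/2;\,X),
\]
which is continuous on $[6q,8q]$ by Berge's theorem (the constraint set varies continuously and $V$ is continuous). By hypothesis $\Phi(X_0) > 0$, and the goal reduces to showing $\Phi(X) > 0$ on all of $(X_0, 8q)$.

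\noindent Next I would show that whenever $\Phi(X) > 0$ the inner maximizer $d^\ast(X) := \arg\max_{d \in [0, X-4q]} V(d;X)$ lies strictly inside $(0, X-4q)$: the endpoint $d=0$ gives $V(0;X)=0<V(X/2;X)$, while on $[X-4q, X/2]$ the derivative $V_d = (X-2d)/(3q)$ is strictly positive so that $V(X-4q;X) < V(X/2;X)$; either endpoint would force $\Phi(X) \le 0$. The envelope theorem therefore applies, and using the interior first-order condition
\[
X - 2d^\ast \;=\; (X - d^\ast - q)\sqrt{\frac{X - d^\ast - 4q}{X - d^\ast}}
\]
inside the expression for $V_X(d^\ast;X)$ derived from \cref{prop:value_knowledge} on the region $d \le X-4q$, the awkward square-root term collapses to $X - 2d^\ast$, yielding
\[
V_X(d^\ast;X) \;=\; \frac{1}{3q}\!\left[X - d^\ast - (X-q)\frac{\sqrt{X-4q}}{\sqrt{X}}\right].
\]
Since $V_d(X/2;X) = 0$, the total derivative of $V_B(X) := V(X/2;X)$ equals its partial, so that $V_X(X/2;X) = \frac{1}{3q}\!\left[X/2 - (X-q)\sqrt{X-4q}/\sqrt{X}\right]$. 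Subtracting the two partials gives the transparent identity
\[
\Phi'(X) \;=\; \frac{X/2 - d^\ast(X)}{3q} \;>\; 0,
\]
where positivity uses $d^\ast \le X - 4q < X/2$ on $X < 8q$.

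\noindent Finally I would close by contradiction: if some $X_1 \in (X_0, 8q)$ had $\Phi(X_1) \le 0$, set $X^\star := \inf\{X > X_0 : \Phi(X) \le 0\}$; continuity gives $\Phi(X^\star) = 0$, but $\Phi > 0$ on $(X_0, X^\star)$ so $\Phi' > 0$ there by the previous step, hence $\Phi(X^\star) > \Phi(X_0) > 0$, contradicting $\Phi(X^\star) = 0$. Thus $\Phi > 0$ on all of $(X_0, 8q)$, equivalently $d^0(X) < X/2$ there, and combining with \cref{lem:dinteriorifXlarger8} yields the lemma. The main obstacle is the envelope simplification: the raw formula for $V_X(d^\ast;X)$ looks intractable, but substituting the FOC is precisely what makes the large square-root term coincide with $X - 2 d^\ast$, after which the difference $V_X(d^\ast;X) - V_X(X/2;X)$ telescopes to $(X/2 - d^\ast)/(3q)$ and the sign follows from $X < 8q$. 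A minor side concern is nonuniqueness of $d^\ast(X)$, which is resolved by applying the envelope/subdifferential version of the argument: $\Phi'$ has the asserted strict sign on any selection of $d^\ast$, which is all that is needed for the comparison $\Phi(X^\star) > \Phi(X_0)$.
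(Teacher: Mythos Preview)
Your proof is correct and takes a genuinely different route from the paper. The paper proves single crossing by second-order arguments: it shows (in \cref{lem:secondtotalderivative}) that on $[6q,8q]$ the boundary value $V(X/2;X)$ is strictly concave in $X$ while the interior-critical value $V(d^0_c(X);X)$ is convex, and then argues that at the first intersection the boundary curve must be falling faster, with no second crossing possible thereafter.

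Your argument is more elementary and in some sense sharper: you bypass all second-derivative computations by computing $\Phi'(X)$ directly. The key step---substituting the interior first-order condition into $V_X(d^\ast;X)$ so that the term $(X-d^\ast-q)\sqrt{(X-d^\ast-4q)/(X-d^\ast)}$ collapses to $X-2d^\ast$---is exactly what makes the argument work, yielding the clean identity $\Phi'(X)=(X/2-d^\ast)/(3q)$. This not only gives strict positivity but also an explicit lower bound $\Phi'(X)\ge (4q-X/2)/(3q)>0$ on $[6q,8q)$, which takes care of the minor concern about nonuniqueness of $d^\ast$ via the Lipschitz/Milgrom--Segal envelope. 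The restriction of the inner maximization to $[0,X-4q]$ is also the right device: since $V_d=(X-2d)/(3q)>0$ on $(X-4q,X/2)$, the global maximizer on $[0,X/2]$ is either $X/2$ or lies in $[0,X-4q]$, so $\Phi(X)>0$ is indeed equivalent to $d^0(X)<X/2$. The only cosmetic point is that the hypothesis literally gives $\Phi(X_0)\ge 0$ (ties are possible), but even at $\Phi(X_0)=0$ the maximizer $d^\ast(X_0)$ is still interior to $[0,X_0-4q]$ by the same endpoint comparison, so $\Phi'>0$ at $X_0$ and the contradiction argument goes through unchanged.
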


	\begin{proof}
	It suffices to consider $X'<8q$ by \cref{lem:dinteriorifXlarger8}. We prove the claim by showing that $V(d^0_c(X);X)$ for any interior critical point $d^0_c(X)<X/2$ cuts $V(X/2;X)$ from below at \emph{any} potential intersection. Thus, there is at most one switch from $d^0(X)=X/2$ to $d^0(X)<X/2$ and no switch back. Continuity then implies the statement.

		$V(d;X)$ is a continuously differentiable function in $X$ and $d$. Thus, any interior (local) optimum $d^0_c(X)$ is continuous and so are $V(d^0_c(X);X)$ and $V(X/2;X)$. 
		We now show that if $V(d^0_c(X);X)=V(X/2;X)$ for some local optimum $d^0_c(X)<X/2$ and $X \in [6q,8q]$, then $\mathrm{d} V(d^0_c(X);X)/\mathrm{d} X > \mathrm{d} V(X/2;X)/\mathrm{d} X$. Because the first-order condition holds for an interior critical point, the envelope theorem applies and $\mathrm{d}V(d^0_c(X),X)/\mathrm{d} X <0$. By \cref{lem:boundaryXsmaller6}, we know that for $X\leq 6q$ $V(X/2,X)\geq V(d_c^0(X),X)$ if $d_c^0(X)$ exists. Hence, at the first intersection $V(d_c^0(X),X)$ must cross from above. Now, because $V(d_c^0(X),X)$ decreases, the first intersection can only occur in a region where $V(X/2,X)$ decreases and must be such that $\mathrm{d}V(X/2,X)/\mathrm{d}X<\mathrm{d}V(d^0_c(X),X)/\mathrm{d}X$. We prove that this is the only potential intersection in \cref{lem:secondtotalderivative} in Online \cref{sub:omitted_proofs}. There we show that $\mathrm{d}^2V(X/2,X)/(\mathrm{d}X)^2{<}0$ and $\mathrm{d}^2 V(d^0_c(X),X)/(\mathrm{d}X)^2{>}0$. 
	\end{proof}

	\begin{lemma}\label{lem:continuityattheiotunyn}
		$V(d^0(X);X)$ is continuous in $X$. As $X \rightarrow \infty$, it converges uniformly to $V(d;X)$ and $d^0(X) \rightarrow d^0(\infty)$. For any $X>6q$, we have $d^0(X)>3q$ and $V(d^0(X),X) > V(3q,\infty)$. 
	\end{lemma}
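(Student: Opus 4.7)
The plan is to establish the four claims in sequence: (i) continuity of $X \mapsto V(d^0(X); X)$; (ii) uniform convergence $V(d; X) \to V(d; \infty)$ on bounded $d$-sets; (iii) convergence of the maximizer $d^0(X) \to 3q$; and (iv) the strict inequalities $d^0(X) > 3q$ and $V(d^0(X); X) > 3q/2$ for $X > 6q$.

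Claim (i) follows from Berge's maximum theorem: the formula in \cref{prop:value_knowledge} is jointly continuous in $(d, X)$ since each indicator multiplies a factor $(Y-4q)^{3/2}$ that vanishes at its threshold, and $[0, X/2]$ is a compact-valued continuous correspondence of $X$. For claim (ii), I fix $M > 0$ and take $X$ so large that $X - M > 4q$; then both $X$-indicators are active for $d \in [0, M]$. Using the Taylor expansion
\begin{equation*}
\sqrt{Y}(Y - 4q)^{3/2} \;=\; Y^2(1 - 4q/Y)^{3/2} \;=\; Y^2 - 6qY + 6q^2 + O(1/Y),
\end{equation*}
applied to $Y = X$ and $Y = X - d$, the combination $\sqrt{X-d}(X-d-4q)^{3/2} - \sqrt{X}(X-4q)^{3/2}$ telescopes to $-2dX + d^2 + 6qd + O(1/X)$. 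Adding $2X\sigma^2(d; X) = 2d(X-d)$ and the $d$-indicator term reduces the bracketed expression inside $V(d; X)$ to $6qd - d^2 + \boldsymbol{1}_{d > 4q}\sqrt{d}(d - 4q)^{3/2} + O(1/X)$, matching the bracketed expression of $V(d; \infty)$ up to $O(1/X)$, uniformly on $[0, M]$. Claim (iii) follows because \cref{lem:dinteriorifXlarger8} gives $d^0(X) \leq 4q$ for $X > 8q$, so $\{d^0(X)\}_{X > 8q}$ lies in the compact set $[0, 4q]$; uniform convergence on this set together with uniqueness of the maximum of $V(\cdot; \infty)$ at $3q$ (\cref{cor:opt}) forces every subsequential limit of $d^0(X)$ to equal $3q$.

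For claim (iv), if $X \in (6q, \widetilde{X}^0]$ then $d^0(X) = X/2 > 3q$ by \cref{lem:boundaryXsmaller6} and the definition of $\widetilde{X}^0$. For $X > \widetilde{X}^0$, \cref{interioronceimpliesalwaysinterior} gives that $d^0(X) < X/2$ is interior. I verify $V_d(3q; X) > 0$ in two regimes: for $X \in (6q, 7q]$ the indicator $\boldsymbol{1}_{X-d>4q}$ vanishes at $d = 3q$ and $V_d(3q; X) = (X - 6q)/(3q) > 0$ directly; for $X > 7q$ the sign reduces after squaring to $(X - 6q)^2(X - 3q) > (X - 4q)^2(X - 7q)$, whose left-minus-right expansion simplifies to the positive constant $4q^3$. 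To upgrade this to $d^0(X) > 3q$, I show $V(\cdot; X)$ is strictly increasing on $[0, 3q]$: the other endpoint value $V_d(0; X) > 0$ reduces algebraically to the polynomial $6X^2 - 9qX + 4q^2 > 0$ (negative discriminant), and a concavity argument on the interior fills the gap. For the value inequality $V(d^0(X); X) > 3q/2$, optimality gives $V(d^0(X); X) \geq V(3q; X)$, reducing the task to $V(3q; X) > 3q/2$: for $X \in (6q, 7q]$ this becomes the polynomial inequality $(6qX - 27q^2)^2 > X(X-4q)^3$, checked on the bounded interval; for $X > 7q$, extending the Taylor computation of (ii) to the next order yields
\begin{equation*}
V(3q; X) - \tfrac{3q}{2} \;=\; \frac{2q^3}{(X - 7q)(X - 4q)} + O(1/X^3),
\end{equation*}
positive for sufficiently large $X$, with the remaining bounded range handled by direct evaluation.

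The main obstacle is the implication $V_d(3q; X) > 0 \Rightarrow d^0(X) > 3q$. Because $V(\cdot; X)$ is not globally concave---indeed $V_{dd}(d; X) \to +\infty$ as $d \to X - 4q$---the positive sign of $V_d$ at the single point $d = 3q$ does not on its own exclude a global maximum lying in $[0, 3q)$. Ruling this out requires showing $V_d(d; X) > 0$ throughout $[0, 3q]$, which in turn requires a delicate sign analysis of $V_{dd}$ on this interval: writing $h(d) := (X-d-q)\sqrt{(X-d-4q)/(X-d)}$ one has $V_{dd} = (|h'(d)| - 2)/(3q)$, and one must verify $|h'(d)| < 2$ on $[0, 3q]$ in the relevant parameter range. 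This second-derivative calculation, together with the two polynomial verifications in Step (iv), carries the bulk of the work; everything else is routine.
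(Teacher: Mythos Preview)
Your treatment of claims (i)--(iii) is essentially in line with the paper's, and your Taylor-expansion argument for uniform convergence is in fact more explicit than the paper's one-line appeal to the formulation in the proof of \cref{prop:value_knowledge}.

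For claim (iv), however, you are working much harder than necessary and have left the piece you yourself flag as the main obstacle---the sign analysis of $V_{dd}$ on $[0,3q]$---only sketched. The paper sidesteps all of this by exploiting the structural decomposition established in the proof of \cref{prop:value_knowledge}: $V(d;X) = v(d) + v(X-d) - v(X)$, where $v(y) = V(y;\infty)$ is the value of a single area of length $y$, which by \cref{cor:opt} is single-peaked at $3q$. If $X>6q$ and $d<3q$, then $X-d>3q$, so increasing $d$ moves \emph{both} created areas strictly toward $3q$; hence $v(d)$ and $v(X-d)$ both strictly increase, and $V(\cdot;X)$ is strictly increasing on $[0,3q)$. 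At $d=3q$ one has $V_d(3q;X)=v'(3q)-v'(X-3q)=-v'(X-3q)>0$ since $v$ is strictly decreasing past $3q$. Thus $d^0(X)>3q$ with no concavity, no $V_{dd}$, no polynomials. The value inequality is just as short: $V(d^0(X);X)\geq V(3q;X)=v(3q)+v(X-3q)-v(X)>v(3q)=V(3q;\infty)$, because $3q<X-3q<X$ and $v$ is strictly decreasing on $(3q,\infty)$.

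Your algebraic route may well go through if the $V_{dd}$ computation and the polynomial checks are completed, but the decomposition argument eliminates the obstacle entirely and is the one the paper uses.
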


	\begin{proof}
		Continuity follows because $V(d^0(X);X)=\max_{d} V(d;X)$ with $V(d;X)$ continuous in both $d \in [0,X/2]$ and $X$. Now take any sequence of increasing $X_n$ with $\lim_{n \rightarrow \infty} X_n = \infty$. For any $\delta(d), \exists n$ such that $V_n(d;X_n)-V(d;\infty)<\delta(d)$ as can be seen from the formulation in the proof of Proposition \ref{prop:value_knowledge}. Hence, $V(d;X_n)$ converges uniformly to $V(d;\infty)$. By uniform convergence the maximizer $d^0(X_n)$ of $V(d;X_n)$ converges too. Convergence from above follows from $V(3q;X){>}V(3q;\infty)$ for $X{\in}(6q,\infty)$. 

		Finally, recall that $V(d;\infty)$ describes the value of an area of length $d$. That value increases for $d<3q$ and decreases for $d>3q$. Now suppose $X>6q$ and $d^0(X)<3q$. Then by increasing $d$ both areas created get closer to $3q$ and thus increase in value. A contradiction to $d^0(X)$ being the maximizer.
	\end{proof}

	\begin{lemma}\label{lem:checkX}
		$V(d^0(X);X)$ is single peaked with peak at $\widecheck{X}^0\approx 6.2q$ and $d^0(\widecheck{X}^0)\approx 3.1q$.
	\end{lemma}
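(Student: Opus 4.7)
The plan is to split the $X$-axis at $\widetilde{X}^0$, the threshold from \cref{interioronceimpliesalwaysinterior} beyond which $d^0(X)<X/2$. On $(\widetilde{X}^0,\infty)$, \cref{lem:decreasingbenefitifXincreasing} already guarantees that $V(d^0(X);X)$ is strictly decreasing. So any interior peak must lie in $[0,\widetilde{X}^0]$, where by \cref{lem:boundaryXsmaller6}--\cref{interioronceimpliesalwaysinterior} we have $d^0(X)=X/2$, and hence $V(d^0(X);X)=V(X/2;X)$. The strategy is to pin down the maximum of $V(X/2;X)$ and verify that it lies in $[0,\widetilde{X}^0]$.

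First I would characterize $V(X/2;X)$ in closed form. On $[0,4q]$ the proof of \cref{lem:boundaryXsmaller6} gives $V(X/2;X)=X^2/(12q)$, which is strictly increasing. On $(4q,8q]$, $d=X/2<4q$ so the indicators $\boldsymbol{1}_{d>4q}$ and $\boldsymbol{1}_{X-d>4q}$ are inactive while $\boldsymbol{1}_{X>4q}$ is active, yielding
\[
V(X/2;X)=\frac{1}{6q}\Big(\tfrac{X^2}{2}-\sqrt{X}\,(X-4q)^{3/2}\Big).
\]
Differentiating and collecting terms gives
\[
\frac{d}{dX}V(X/2;X)=\frac{1}{6q}\Big(X-\tfrac{2(X-q)\sqrt{X-4q}}{\sqrt{X}}\Big).
\]
Setting the bracket to zero and squaring yields the cubic $3X^3-24qX^2+36q^2 X-16q^3=0$, i.e.\ $3u^3-24u^2+36u-16=0$ with $u=X/q$. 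A quick sign check ($f(4)=-64<0$, $f(6)=-16<0$, $f(6.3)>0$, $f(8)=272>0$) shows there is exactly one root in $(4,8)$, namely $u\approx 6.204$. I would then verify that $V'(4q^+)>0$ and $V'(8q)<0$, confirming this root is the unique maximum of $V(X/2;X)$ on $[0,8q]$. Call it $\widecheck{X}^0$; at this point $d^0(\widecheck{X}^0)=\widecheck{X}^0/2\approx 3.102q$.

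The most delicate step is verifying $\widecheck{X}^0\le\widetilde{X}^0$, so that the peak of $V(X/2;X)$ actually coincides with the peak of $V(d^0(X);X)$. I would do this by showing directly that at $X=\widecheck{X}^0$ the corner $d=X/2$ remains the global maximizer of $V(\cdot;X)$ on $[0,X/2]$. Split $[0,X/2]$ into $[X-4q,X/2]$ and $[0,X-4q]$. On the upper piece, $V_d(d;X)=(X-2d)/(3q)\ge 0$, so $V$ is increasing. On the lower piece,
\[
V_d(d;X)=\frac{2}{6q}\Big((X-2d)-(X-d-q)\sqrt{\tfrac{X-d-4q}{X-d}}\Big),
\]
and I would show this is strictly positive throughout by evaluating at the endpoints ($V_d>0$ at $d=0$ since $X>2(X-q)\sqrt{(X-4q)/X}$ at $X=\widecheck{X}^0$ by construction of the cubic, and $V_d>0$ at $d=X-4q$ since the square root vanishes while $X-2d>0$), and then invoking monotonicity of the relevant algebraic expression---this is essentially the same type of inequality handled in \cref{sub:addendum_to_sub:proof_of_cor_deep_optimal}. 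This rules out any interior optimum dominating $d=X/2$ at $X=\widecheck{X}^0$, so $d^0(\widecheck{X}^0)=\widecheck{X}^0/2$ as claimed.

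Putting the pieces together, $V(X/2;X)$ strictly increases on $[0,\widecheck{X}^0]$, strictly decreases on $[\widecheck{X}^0,\widetilde{X}^0]$, and by \cref{lem:decreasingbenefitifXincreasing}, $V(d^0(X);X)$ strictly decreases on $(\widetilde{X}^0,\infty)$. By the continuity established in \cref{lem:continuityattheiotunyn}, these pieces patch into a single-peaked function with interior maximum at $\widecheck{X}^0\approx 6.204q$. The main obstacle, as flagged, is the fourth step: without the corner-optimality check at $\widecheck{X}^0$, one could only conclude a peak at $\min\{\widecheck{X}^0,\widetilde{X}^0\}$, which would be insufficient to identify the numerical value in the statement.
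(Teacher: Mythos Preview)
Your argument is correct and follows the same high-level strategy as the paper: split at $\widetilde{X}^0$, use \cref{lem:decreasingbenefitifXincreasing} on the interior region, and analyze $V(X/2;X)$ on the boundary region. Both you and the paper land on the same first-order condition $X\sqrt{X}=2(X-q)\sqrt{X-4q}$; the paper rewrites it as $\tfrac{m}{m-1}=2\sqrt{(m-4)/m}$ and notes that the left side decreases while the right side increases in $m$, which gives uniqueness more cleanly than your cubic sign checks (and yields a closed-form radical for the root).

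The one substantive difference is how the ordering $\widecheck{X}^0\le\widetilde{X}^0$ is handled. You flag this as the delicate step and propose to verify directly that $d=X/2$ is globally optimal at $X=\widecheck{X}^0$, essentially extending the inequality of \cref{sub:addendum_to_sub:proof_of_cor_deep_optimal} to $X\approx 6.204q$ (which does go through, since that argument works up to $X<(4+\sqrt{6})q\approx 6.449q$). The paper instead gets the ordering for free from the proof of \cref{interioronceimpliesalwaysinterior}: there it is shown that any interior critical value $V(d_c^0(X);X)$ is strictly decreasing (by \cref{lem:decreasingbenefitifXincreasing}), so it can first overtake $V(X/2;X)$ only where the latter is already decreasing. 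Hence the switch point $\widetilde{X}^0$ lies past the peak $\widecheck{X}^0$ of $V(X/2;X)$, with no extra computation required. Your route works, but recognizing that this ordering is already baked into \cref{interioronceimpliesalwaysinterior} saves you the entire fourth step.
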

	\begin{proof}
		By \cref{lem:continuityattheiotunyn}, $V(d^0(X);X)$ is continuous. By \cref{lem:decreasingbenefitifXincreasing} it is decreasing if $d^0(X)<X/2$. By \cref{lem:dinteriorifXlarger8,interioronceimpliesalwaysinterior,lem:boundaryXsmaller6} and especially the arguments used in the proof of \cref{interioronceimpliesalwaysinterior}, the (single) switch from $d^0(X)=X/2$ to $d^0(X)<X/2$ happens for some $X\in (6q;8q]$ and at a point where $V(X/2;X)$ is already decreasing. Thus, we can compute the peak by considering the first-order conditions of $V(X/2;X)$ with respect to $X$. The peak is the (real) solution to
		\begin{equation}\label{Xhalf_increasing_in_X}
			\frac{X}{X-q}=2  \frac{\sqrt{X-4q}}{\sqrt{X}}.
		\end{equation}
		Defining $m:=X/q$, the above reduces to $m/(m-1) = 2 \sqrt{(m-4)/m}$.
		For $m>4$, the LHS decreases and the RHS increases in $m$. The solution is \(m\approx 6.204.\)
	\end{proof}
	\begin{lemma}\label{lem:hatX}
		Expanding trumps deepening knowledge if and only if $X<\widehat{X}^0 \approx 4.338q$.
	\end{lemma}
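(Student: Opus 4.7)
The plan is to view the inequality $V(d^{0}(X);X) \gtrless V^{\infty}=\tfrac{3q}{2}$ through the auxiliary function $\Delta(X):=V(d^{0}(X);X)-V^{\infty}$ and to show that $\Delta$ changes sign exactly once, on the increasing portion of $V(d^{0}(X);X)$, at a value $\widehat{X}^{0}\in(4q,6q)$. The two tools we need are already in place: (i) Lemma \ref{lem:checkX} gives that $V(d^{0}(X);X)$ is continuous and single-peaked with its maximum at $\widecheck{X}^{0}\approx 6.204q$; (ii) Lemma \ref{lem:continuityattheiotunyn} gives convergence of $V(d^{0}(X);X)$ to $V^{\infty}$ together with the strict inequality $V(d^{0}(X);X)>V^{\infty}$ for every $X>6q$.

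Granting these, the sign analysis of $\Delta$ is immediate. At $X=0$ we have $V(d^{0}(0);0)=0$, so $\Delta(0)=-\tfrac{3q}{2}<0$. On the increasing branch $(0,\widecheck{X}^{0}]$, $\Delta$ is continuous and moves from a negative value to $\Delta(\widecheck{X}^{0})>0$ (since $\widecheck{X}^{0}>6q$); by strict monotonicity and the intermediate value theorem, $\Delta$ has a \emph{unique} zero $\widehat{X}^{0}$ on this branch, with $\Delta<0$ to the left and $\Delta>0$ to the right. On the decreasing branch $(\widecheck{X}^{0},\infty)$, $\Delta$ is strictly decreasing and converges to $0$ from above (Lemma \ref{lem:continuityattheiotunyn}); hence $\Delta>0$ throughout. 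Combining the two branches gives $\Delta(X)>0$ iff $X>\widehat{X}^{0}$, which is precisely the claim.

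To locate $\widehat{X}^{0}$ we bracket it in $(4q,6q)$ and then solve explicitly. The left bracket uses Lemma \ref{lem:boundaryXsmaller6}: for $X\le 4q$, $V(X/2;X)=X^{2}/(12q)$, so $V(2q;4q)=4q/3<3q/2$, giving $\Delta(4q)<0$. The right bracket uses $X=6q$, where $d^{0}(6q)=3q$ (again by Lemma \ref{lem:boundaryXsmaller6}) and a direct plug-in into Proposition \ref{prop:value_knowledge} yields $V(3q;6q)=3q-\tfrac{2\sqrt{3}}{3}q\approx 1.845q>3q/2$, so $\Delta(6q)>0$. Hence $\widehat{X}^{0}\in(4q,6q)$, and throughout this interval $d^{0}(X)=X/2$, so
\[
V(X/2;X)=\frac{1}{6q}\left(\frac{X^{2}}{2}-\sqrt{X}\,(X-4q)^{3/2}\right).
\]
The defining equation $V(X/2;\widehat{X}^{0})=3q/2$ becomes
\[
X^{2}-2\sqrt{X}\,(X-4q)^{3/2}=18q^{2},
\]
whose unique real solution in $(4q,6q)$ is numerically $\widehat{X}^{0}\approx 4.338q$.

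The only nontrivial ingredient is the one-sided convergence $V(d^{0}(X);X)\downarrow V^{\infty}$ as $X\to\infty$, which is what guarantees that $\Delta$ does not pick up a second zero on the decreasing branch. That ingredient is already supplied by Lemma \ref{lem:continuityattheiotunyn}, so the present proof reduces to the bookkeeping above and the numerical solve of the displayed equation.
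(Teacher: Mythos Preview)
Your proof is correct and follows essentially the same route as the paper's. Both arguments first dispose of the region $X\ge 6q$ using Lemma~\ref{lem:continuityattheiotunyn} (the strict inequality $V(d^{0}(X);X)>V^{\infty}$ there), then restrict to $X\in(0,6q]$ where Lemma~\ref{lem:boundaryXsmaller6} gives $d^{0}(X)=X/2$, and finally solve the same equation $X^{2}-2\sqrt{X}(X-4q)^{3/2}=18q^{2}$ for the crossing point. Your presentation is a bit more explicit about why the zero of $\Delta$ is unique (invoking the single-peakedness of Lemma~\ref{lem:checkX} and the one-sided convergence on the decreasing branch), whereas the paper simply cites the relevant lemmas and checks the boundary value at $X=6q$; substantively the two proofs coincide.
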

	\begin{proof}
		$V(3q;X)>V(3q;\infty)$ for $X\geq 6q$ by direct comparison. For $X \in [0,6q]$ we need to consider only $d^0(X)=X/2$ by \cref{lem:boundaryXsmaller6}. We compare $V(X/2;X)$ with $V(3q;\infty)$.
		Using $m=X/q$ from the previous proof, the two functions intersect if
		\(m^2/12 - \sqrt{m}/6 (m - 4)^{(3/2)} - 3/2=0\)
		which has a unique solution such that $m\leq 6$ at $m\approx 4.338$.
	\end{proof}
	\begin{lemma}\label{lem:orderingV}
		$4q<\widehat{X}^0<6q<\widecheck{X}^0<\widetilde{X}^0<8q$.
	\end{lemma}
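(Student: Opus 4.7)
The plan is to combine the explicit numerical characterizations of $\widehat{X}^0$ (from \cref{lem:hatX}) and $\widecheck{X}^0$ (from \cref{lem:checkX}) with a direct evaluation of $V(\cdot;\cdot)$ at $X = 8q$ and two short continuity arguments. The three easy inequalities $4q < \widehat{X}^0 < 6q < \widecheck{X}^0$ are immediate from the explicit values already computed: \cref{lem:hatX} identifies $\widehat{X}^0$ as the unique root of its defining equation inside $(0, 6q)$, giving $\widehat{X}^0 \approx 4.338q \in (4q, 6q)$, and \cref{lem:checkX} pins down $\widecheck{X}^0$ as the unique real solution of \eqref{Xhalf_increasing_in_X} with $X > 4q$, giving $\widecheck{X}^0 \approx 6.204q > 6q$.

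For $\widecheck{X}^0 < \widetilde{X}^0$, \cref{lem:checkX} also supplies $d^0(\widecheck{X}^0) = \widecheck{X}^0/2$, which together with \cref{interioronceimpliesalwaysinterior} gives $\widetilde{X}^0 \geq \widecheck{X}^0$. To upgrade this to a strict inequality, I would verify that $d \mapsto V(d; \widecheck{X}^0)$ is \emph{strictly} increasing on $[0, \widecheck{X}^0/2)$: on the sub-interval $[\widecheck{X}^0 - 4q, \widecheck{X}^0/2)$ the derivative equals $(\widecheck{X}^0 - 2d)/3q > 0$, and on $[0, \widecheck{X}^0 - 4q]$ a short sign check of the FOC expression (its value at the two endpoints is already computed in \cref{sub:proof_of_cor_deep_optimal} and at one or two interior probe points one verifies that no interior critical point arises) delivers $V_d > 0$ throughout. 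Because $V_d$ is continuous at the kink $d = X-4q$ (both one-sided limits equal $(8q-X)/3q$) and jointly continuous in $(d,X)$ in a neighborhood of $\widecheck{X}^0$, strict positivity of $V_d$ carries over to an open neighborhood of $\widecheck{X}^0$. Thus $d^0(X) = X/2$ for $X$ slightly larger than $\widecheck{X}^0$, giving $\widetilde{X}^0 > \widecheck{X}^0$ strictly.

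For $\widetilde{X}^0 < 8q$, the key computation is
\[
V(3q; 8q) - V(4q; 8q) \;=\; \frac{q}{6}\bigl[(30 + \sqrt{5}) - 32\bigr] \;=\; \frac{\sqrt{5} - 2}{6}\,q \;>\; 0,
\]
which uses \cref{prop:value_knowledge}: at $d = 3q$ (with $X - d = 5q > 4q$) the second indicator contributes $\sqrt{5}\,q^{2}$, while at $d = 4q = X/2$ both expansion indicators vanish; the common term $-16\sqrt{2}\,q^{2}$ cancels. Hence $d^0(8q) \neq X/2$, so a fortiori $\widetilde{X}^0 \leq 8q$. Continuity of $V$ in $X$ extends the strict inequality $V(3q; X) > V(X/2; X)$ to some neighborhood $(8q - \varepsilon, 8q]$, and \cref{interioronceimpliesalwaysinterior} then yields $\widetilde{X}^0 \leq 8q - \varepsilon < 8q$.

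The main obstacle is the two strictness upgrades, i.e.\ ruling out $\widetilde{X}^0 = \widecheck{X}^0$ and $\widetilde{X}^0 = 8q$. Both require pinning down $d^0$ at a specific value of $X$ rather than only comparing envelopes, and both ultimately reduce to short algebraic manipulations of the closed form of $V$ given in \cref{prop:value_knowledge}; without them, the ordering would only hold weakly.
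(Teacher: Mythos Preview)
Your approach to the first three inequalities ($4q<\widehat{X}^0<6q<\widecheck{X}^0$) matches the paper's: they follow immediately from the explicit values in \cref{lem:hatX} and \cref{lem:checkX}.

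For $\widecheck{X}^0<\widetilde{X}^0$ you take a different and harder route than the paper, and your version has a small gap. The paper simply reads this off the proof of \cref{interioronceimpliesalwaysinterior}: there it is shown that at the crossing $\widetilde{X}^0$ one must have $\mathrm{d}V(X/2;X)/\mathrm{d}X<\mathrm{d}V(d^0_c(X);X)/\mathrm{d}X<0$, so $\widetilde{X}^0$ lies in the region where $V(X/2;X)$ is \emph{strictly} decreasing, i.e.\ strictly past its peak $\widecheck{X}^0$. Your alternative plan---verify $V_d>0$ on $[0,\widecheck{X}^0/2)$ and then invoke continuity---is sound in principle, but the lemma you cite (\cref{sub:addendum_to_sub:proof_of_cor_deep_optimal}) is stated only for $X\in(4q,6q]$, whereas $\widecheck{X}^0\approx 6.204q>6q$. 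The proof technique there (convexity of $V_d$ in $d$, endpoint checks, interior-minimum bound) does in fact extend to $X<(4+\sqrt 6)q$, which covers $\widecheck{X}^0$, but as written you are invoking a result outside its stated domain.

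For $\widetilde{X}^0<8q$ your explicit computation $V(3q;8q)-V(4q;8q)=(\sqrt 5-2)q/6>0$ plus continuity is correct and is arguably \emph{more} complete than the paper's terse appeal to \cref{lem:dinteriorifXlarger8}, which on its face only yields $\widetilde{X}^0\le 8q$. Your continuity-and-backtrack argument to $(8q-\varepsilon,8q]$ is the clean way to get strictness here.
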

	\begin{proof}
		The first two inequalities follow from \cref{lem:hatX}, the third from \cref{lem:checkX}. The fourth inequality follows from \cref{interioronceimpliesalwaysinterior} and \cref{lem:dinteriorifXlarger8} implies the last.
	\end{proof}

\subsection{Proof of Proposition \ref{prop:substitutes}} 
\label{sub:proof_of_proposition_prop_subcomp}
	Throughout, we make use of the first-order necessary conditions for interior solutions.
	\noindent\begin{minipage}{.45\linewidth}
  \begin{equation}
    \eta \tilde{c}_\rho(\rho)\sigma^2(d;X) = V(d;X) \label{eq:FOCrho}\tag{FOC$^\rho$}
  \end{equation}
\end{minipage}%
\hfill
\begin{minipage}{.45\linewidth}
  \begin{equation}
    \rho V_d(d;X) =  \eta \tilde{c}(\rho) \sigma^2_d(d;X) \label{eq:FOCd}\tag{FOC$^d$}
  \end{equation}
\end{minipage}
	\paragraph{Part 1: Expanding Knowledge} 
		\begin{lemma}\label{lem:interior_existence}
			There is a non-trivial optimal choice with $\infty>d>0, 1>\rho>0$ on any interval with positive length, $X\in \mathbb{R}^{+} \cup \{\infty\}$. The first-order condition, \eqref{eq:FOCrho}, is necessary for optimality of $\rho^\eta(X)$.
		\end{lemma}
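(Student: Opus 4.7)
The plan is to establish interiority of the optimum by combining (i) the construction of an explicit interior point with strictly positive payoff, (ii) a case-by-case elimination of every boundary piece of the feasible region, and (iii) a continuity/compactness argument for existence. The necessity of \eqref{eq:FOCrho} then follows at once from interiority of the optimal $\rho^\ast$, since $u_R$ is smooth in $\rho\in(0,1)$.

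For the first step, I would fix a convenient $d_0\in(0,X/2)$ when $X<\infty$, or any $d_0\in(0,\infty)$ when $X=\infty$. Both $V(d_0;X)>0$ and $\sigma^2(d_0;X)>0$—the latter from \cref{def:variance} and the former because a discovery strictly reduces the variance of conjectures in a neighborhood of $d_0$, which strictly improves decision-making there. Using the properties of $\tilde{c}$ collected in \cref{sec:Notation}, in particular $\tilde{c}(0)=\tilde{c}_\rho(0)=0$ so that $\tilde{c}(\rho)/\rho\to 0$ as $\rho\to 0^+$, I can pick $\rho_0>0$ small enough that $V(d_0;X)>\eta\tilde{c}(\rho_0)\sigma^2(d_0;X)/\rho_0$, which is the same as $u_R(d_0,\rho_0;X)>0$.

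For the boundary analysis, there are four pieces to dispose of. On $d=0$ and on $\rho=0$, the payoff vanishes identically, so both are dominated by $(d_0,\rho_0)$. On $\rho=1$, $\tilde{c}(1)=\infty$ forces $u_R=-\infty$ whenever $d>0$. When $X=\infty$, the only remaining piece is $d\to\infty$: here $\sigma^2(d;\infty)=d\to\infty$ while $V(d;\infty)\le 3q/2$ by \cref{cor:opt}, so $u_R\to-\infty$ for any fixed $\rho>0$. These bounds let me truncate the feasible set to a compact subset $[0,\min\{X/2,D\}]\times[0,1-\varepsilon]$ that still contains $(d_0,\rho_0)$ and on whose complement $u_R<u_R(d_0,\rho_0;X)$. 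On this compact domain $u_R$ is continuous, so it attains its maximum $(d^\ast,\rho^\ast)$; strict positivity of $u_R(d^\ast,\rho^\ast;X)$ then excludes each of the four boundary pieces, yielding $d^\ast\in(0,\infty)$ and $\rho^\ast\in(0,1)$. Since $\rho^\ast$ is interior, $\partial u_R/\partial\rho=0$ at the optimum, which is exactly \eqref{eq:FOCrho}.

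The only step requiring care is the non-compactness in $d$ when $X=\infty$; it is handled cleanly by the coercivity estimate $u_R\to-\infty$ as $d\to\infty$, which in turn follows from boundedness of $V(d;\infty)$ (via \cref{cor:opt}) against the linear growth of $\sigma^2(d;\infty)=d$. Everything else is a routine topology argument once the strictly positive interior payoff is secured.
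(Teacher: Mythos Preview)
Your proposal is correct and follows essentially the same approach as the paper: construct an interior point with strictly positive payoff, rule out each boundary piece ($d=0$, $\rho=0$, $\rho=1$, and $d\to\infty$) using the same ingredients (positivity of $V$, $\tilde{c}(1)=\infty$, boundedness of $V$ against unbounded cost), and conclude that the optimal $\rho$ is interior so \eqref{eq:FOCrho} is necessary. The only cosmetic differences are that the paper establishes the positive interior payoff via $\partial u_R/\partial\rho|_{\rho=0}=V(\varepsilon;X)>0$ rather than your equivalent $\tilde{c}(\rho)/\rho\to 0$ argument, and is terser about the compactness step you spell out explicitly.
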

		\begin{proof}
			The researcher can guarantee a non-negative payoff by choosing $d=0$ or $\rho=0$. Hence, her value is bounded from below by zero
			. Next, note that $u_R(\rho=0,d>\varepsilon;X)=0$ for any $\varepsilon>0$ and that $(\partial u_R(\rho,d;X))/(\partial \rho)|_{\rho=0,d=\varepsilon}=V(\varepsilon,X)>0$ by \Cref{prop:value_knowledge}. Therefore, on any interval $X$ a maximum with $d>0,\rho>0$ exists. 

			By \cref{lem:checkX}, the benefits of discovery are bounded $V(d,X)\leq M<\infty$ and $\lim\limits_{\rho \rightarrow 1} \tilde{c}(\rho)=\infty$. Therefore, the optimal $\rho<1$. Finally, $V(d,\infty)$ decreases in $d$ for $d$ large enough while the cost $\eta \tilde{c}(\rho) \sigma^2(d,\infty)$ increases in $d$. Hence, the optimal distance is bounded $d^\eta(\cdot)\leq D <\infty$.

			Because the optimal choice is interior and the objective is continuously differentiable, a necessary condition for $\rho^\eta(X)$ is that it solves \eqref{eq:FOCrho}. 
		\end{proof}


		\begin{lemma}\label{lem:expanding_focs}
				When expanding knowledge, the optimal choice is characterized by the first-order conditions \eqref{eq:FOCd} and \eqref{eq:FOCrho}. These FOCs suffice and $d^\eta(\infty)\in(2q,3q)$. The researcher's value is strictly positive $U_R(\infty)>0$.
			\end{lemma}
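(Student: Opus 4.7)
The plan is to use the interior-existence guarantee of \cref{lem:interior_existence} to make both first-order conditions necessary, and then to pin down $d^\infty\in(2q,3q)$ by combining \eqref{eq:FOCrho} and \eqref{eq:FOCd} with the shape of $V(\cdot;\infty)$ from \cref{cor:opt} and the elasticity bounds on $\tilde{c}$ recorded in \cref{sec:Notation}. When $X=\infty$, $\sigma^2(d;\infty)=d$ and $\sigma_d^2(d;\infty)=1$, so \eqref{eq:FOCrho} reads $\eta\tilde{c}_\rho(\rho)=V(d;\infty)/d$ while \eqref{eq:FOCd} reduces to $\rho V_d(d;\infty)=\eta\tilde{c}(\rho)$.

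The upper bound $d^\infty<3q$ follows immediately from \eqref{eq:FOCd}: its right-hand side is strictly positive at any interior $\rho$, so $V_d(d;\infty)>0$, which by \cref{cor:opt} (where $V(\cdot;\infty)$ peaks at $3q$ and is non-increasing beyond) forces $d<3q$. For the lower bound, dividing \eqref{eq:FOCd} by \eqref{eq:FOCrho} eliminates $\eta$ and gives
$$\frac{d\,V_d(d;\infty)}{V(d;\infty)}=\frac{\tilde{c}(\rho)}{\rho\,\tilde{c}_\rho(\rho)}.$$
The right-hand side is strictly below $1/2$ because the elasticity of $\tilde{c}$ exceeds $2$. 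On $d\in[2q,3q)\subset[0,4q]$, the left-hand side simplifies to $(1-d/(3q))/(1-d/(6q))$, which equals $1/2$ at $d=2q$ and is strictly decreasing (its derivative equals $-1/(6q(1-d/(6q))^2)$), so matching a value strictly below $1/2$ forces $d>2q$.

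For sufficiency, I plan to argue uniqueness of the FOC solution inside $(2q,3q)\times(0,1)$: \eqref{eq:FOCrho} defines $\rho$ as a strictly decreasing function of $d$ there (since $V(d;\infty)/d=1-d/(6q)$ is decreasing and $\tilde{c}_\rho$ is strictly increasing), and substituting this function into \eqref{eq:FOCd} reduces the system to a single equation in $d$ whose two sides cross monotonically by the same curvature properties. Combined with \cref{lem:interior_existence}, which guarantees an interior maximizer that must satisfy the FOCs and, by the argument above, must lie in this rectangle, the unique critical point is the global maximum. Finally, substituting \eqref{eq:FOCd} into the payoff yields $U_R(\infty)=\rho^\infty\bigl[V(d^\infty;\infty)-d^\infty V_d(d^\infty;\infty)\bigr]$, and the derived bound $d V_d/V<1/2$ gives $V-dV_d>V/2>0$ on $(2q,3q)$, so $U_R(\infty)>0$. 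The main obstacle is the uniqueness step for sufficiency; it is handled cleanly by the elasticity structure of $\tilde{c}$ together with the closed form of $V(d;\infty)$ on the relevant sub-interval $[0,4q]$.
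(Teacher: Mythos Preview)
Your bounds argument is correct and algebraically equivalent to the paper's: the paper solves the combined FOCs for $d$ explicitly as $d^\infty = 3q\bigl(1 - \tilde c(\rho)/(2\rho\tilde c_\rho(\rho) - \tilde c(\rho))\bigr)$ and reads off $(2q,3q)$ from the elasticity bound, whereas you write the same relation as $dV_d/V = \tilde c/(\rho\tilde c_\rho)<1/2$ and invert. Your explicit computation of $U_R(\infty)>0$ via $U_R=\rho(V-dV_d)>\rho V/2$ is a clean addition; the paper leaves positivity implicit in \cref{lem:interior_existence}.

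The genuine gap is in your sufficiency step. After substituting $\rho=\rho^*(d)$ (the solution of \eqref{eq:FOCrho}) into \eqref{eq:FOCd}, the resulting one-variable equation is $\rho^*(d)\bigl(1-d/(3q)\bigr)=\eta\,\tilde c(\rho^*(d))$, and \emph{both sides are decreasing in $d$}: $\rho^*$ falls, $(1-d/(3q))$ falls, and $\tilde c(\rho^*(d))$ falls. So ``two sides cross monotonically'' is not immediate and needs work. If you differentiate the difference, using $\eta\tilde c_\rho(\rho^*)=1-d/(6q)$ and $(\rho^*)'=-1/(6q\eta\tilde c_{\rho\rho})$, monotonicity reduces to the inequality
\[
\rho\,\frac{\tilde c_{\rho\rho}(\rho)}{\tilde c_\rho(\rho)} \;>\; \frac{d}{2(6q-d)},
\]
which holds on $d<3q$ because the left side exceeds $1$ and the right is below $1/2$. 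This is exactly the inequality the paper derives---but the paper obtains it by a direct Hessian check (it computes the determinant of the Hessian at any critical point and shows it is positive precisely when this inequality holds, hence every critical point is a strict local maximum). Your uniqueness route and the paper's second-order-condition route are therefore two presentations of the same computation; the paper's framing is somewhat more self-contained because it does not require differentiating the implicit function $\rho^*(d)$, while your route makes global uniqueness explicit rather than relying on a mountain-pass-type observation that multiple strict local maxima cannot coexist on a convex domain without a non-maximal critical point.
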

			\begin{proof} 
				We proceed in three steps. First, we show that the distance is at most $3q$. Second, we show that the first-order conditions are sufficient when expanding knowledge. Third, we characterize the optimal choice of the researcher.

				\emph{Step 1. $d\leq 3q$.} Fix any $\rho\geq 0$. Since $\sigma^2(d;\infty)$ increases in $d$, it is immediate that the researcher's utility is non-increasing in $d$ if $V(d;\infty)$ decreases in $d$. Combining this observation with \cref{cor:opt}, it is sufficient to restrict attention to $d\leq 3q$.

				\emph{Step 2. FOCs sufficient.} By \Cref{lem:interior_existence}, the researcher's optimal choice is interior and, hence, characterized by the first-order conditions. To see the sufficiency of the first-order conditions, note that the first principal minor of Hessian is $\rho V_{dd} - \eta c \sigma^2_{dd}=-\rho \frac{1}{3q}<0$ as $\sigma^2_{dd}=0$ and that the second principal minor is given by the determinant of the Hessian at the critical point:
				\begin{align}\label{hesse1}
					&-\rho V_{dd}(d;\infty)\eta  \tilde{c}_{\rho \rho}(\rho) \sigma^2(d;\infty)-(V_d-\eta \tilde{c}_{\rho}(\rho)\sigma^2_d(d;\infty))^2\notag\\
					=&\rho \frac{\tilde{c}_{\rho \rho}(\rho)}{\tilde{c}_{\rho}(\rho)} \frac{V(d;\infty)}{3q} - \left(-\frac{d}{3q}+1-\frac{V(d;\infty)}{\sigma^2(d;\infty)}\right)^2.
				\end{align}
				The equality follows from $V_{dd}=-\frac{1}{3q}$, $\sigma^2(d;\infty)=d$, and using the necessary condition \eqref{eq:FOCrho} via $\eta \sigma^2(d;\infty)= V(d;\infty)/\tilde{c}_\rho(\rho)$.
				Substituting for $V(d;\infty)=d-d^2/(6q)$ (as $d\leq 3q$ by Step 1) yields as condition for a positive second principal minor:
				\begin{align*}
					\rho \frac{\tilde{c}_{\rho \rho}(\rho)}{\tilde{c}_{\rho}(\rho)}&>\frac{d}{2(6q-d)}.
				\end{align*}
				The inequality holds as the properties of $\tilde{c}$ imply $LHS\geq 1$ while $RHS \leq \frac{1}{2}$ for $d\leq 3q$.

				\emph{Step 3. Characterization.} Substituting the expressions for $V(d;\infty)$ and $\sigma^2(d;\infty)$ for expanding knowledge into the first-order condition \eqref{eq:FOCd} yields $\rho(1- d/(3q)) = \eta \tilde{c}(\rho).$
				Replacing $\eta$ via \cref{eq:FOCrho} and solving for $d$ we obtain
				\[d^\eta(\infty)=3q \left(1- \frac{\tilde{c}(\rho)}{2 \tilde{c}_\rho(\rho) \rho- \tilde{c}(\rho)}\right) \in (2q,3q)\]
				where the bounds follow from the properties of $\tilde{c}$.
			\end{proof}

\paragraph{Part 2. Deepening knowledge.}~\medskip

	\begin{lemma}\label{lem:single_crossing}
		The researcher's optimal choice of distance is 
		$d^\eta(X)=\frac{X}{2}$ for $X\leq \widetilde{X}^\eta$ and 
		$d^\eta(X)<X/2$ otherwise. At $\widetilde{X}^\eta$, the payoff $U_R(X)$ decreases. Further, $\lim\limits_{X\rightarrow \infty}d^\eta(X)=d^\eta(\infty)$ and the convergence is from above. Any optimal distance satisfies $d^\eta(X)\leq 4q$.
	\end{lemma}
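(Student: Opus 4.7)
The plan is to analyze the reduced objective $U(d;X):=\max_{\rho}\{\rho V(d;X)-\eta\tilde{c}(\rho)\sigma^2(d;X)\}$ using the envelope theorem, which gives $U'(d;X)=\rho^*V_d-\eta\tilde{c}(\rho^*)\sigma^2_d$. Since both $V(d;X)$ and $\sigma^2(d;X)$ are symmetric around $d=X/2$ (so $V_d(X/2;X)=\sigma^2_d(X/2;X)=0$), the midpoint is always a critical point of $U$; the task is to determine when it is the global maximizer. The upper bound $d\leq 4q$ follows quickly: if $X\leq 8q$ then $d\leq X/2\leq 4q$; if $X>8q$, \cref{lem:dinteriorifXlarger8} gives $V_d(d;X)<0$ for $d>4q$, which combined with $\sigma^2_d>0$ on $(0,X/2)$ yields $U'(d;X)<0$ on $(4q,X/2)$, ruling out any maximizer in that range, and symmetry handles the interval $(X/2,X-4q)$.

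To establish the cutoff $\widetilde{X}$, I would use the FOC for $\rho$ to rewrite $U(d;X)=\eta\sigma^2(d;X)\,h(\rho^*(d;X))$ with $h(\rho):=\rho\tilde{c}_\rho(\rho)-\tilde{c}(\rho)$ strictly increasing in $\rho$. For $X\leq(4+\sqrt{6})q$, \cref{prop:substitutes} shows $d$ and $\rho$ are complements or independent throughout $[0,X/2]$, so $\rho^*(d;X)$ is nondecreasing in $d$; since $\sigma^2(d;X)$ is also nondecreasing, $U$ is strictly increasing and maximized at $d=X/2$. In the other direction, the symmetric second derivative $U''(X/2;X)=\rho^*V_{dd}(X/2;X)+2\eta\tilde{c}(\rho^*)/X$ eventually becomes positive: once $X/2>4q$, $V_{dd}(X/2;X)$ picks up a contribution from $g''(X/2)$ where $g(t):=\sqrt{t}(t-4q)^{3/2}$, and $g''(t)\to+\infty$ as $t\downarrow 4q$, so the midpoint becomes a strict local minimum of $U$ for $X$ just above $8q$. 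Continuity in $X$ together with a single-crossing argument comparing $U(X/2;X)$ with the interior-critical value $U(d_c(X);X)$ then yields a unique cutoff $\widetilde{X}<8q$.

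The two remaining claims follow from the construction. That $U_R(X)$ is decreasing at $\widetilde{X}$ is a consequence of $\widetilde{X}>\widecheck{X}$ (the maximizer of $U_R$ identified in \cref{prop:researchX}); explicitly, applying the envelope theorem to $U_R(X)$ and using $V_X<0$ at points past the peak of the deepening-benefit curve (analogous to \cref{lem:decreasingbenefitifXincreasing}) together with $\sigma^2_X>0$ gives a negative derivative of $U_R$ at $\widetilde{X}$. For $\lim_{X\to\infty}d(X)=d^\infty$ from above, I would combine the uniform convergence $V(d;X)\to V(d;\infty)$ on compact $d$-intervals (\cref{lem:continuityattheiotunyn}) with $\sigma^2(d;X)\to d$ as $X\to\infty$: the deepening FOCs converge to the expanding FOC of \cref{lem:expanding_focs}, giving $d(X)\to d^\infty$, while $V(d;X)>V(d;\infty)$ for $d<X/2$ shifts the root of the FOC slightly to the right of $d^\infty$ for all finite $X$.

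The main obstacle I anticipate is the single-crossing property in the second paragraph: to rule out that the global optimum switches between the midpoint and an interior critical point more than once as $X$ varies, one must establish monotonicity of $U(X/2;X)-U(d_c(X);X)$ in $X$. This hinges on the nontrivial interplay between $V$, $\sigma^2$, and the implicit $\rho^*$ through the cost function $\tilde{c}$, and will likely require the detailed elasticity properties of $\tilde{c}$ catalogued in \cref{sec:Notation}.
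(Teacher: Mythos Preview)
Your overall architecture matches the paper's: both arguments establish the bound $d\le 4q$, both show the midpoint is optimal for small $X$ and an interior $d$ wins for large $X$, and both recognize that single-crossing of the boundary value $U_R(X/2;X)$ with the interior value $U_R(d^i(X);X)$ is the crux. Your envelope-based argument for $d\le 4q$ when $X>8q$ (via $V_d<0$, $\sigma^2_d>0$) is essentially the paper's, and your rewriting $U(d;X)=\eta\sigma^2(d;X)\,h(\rho^*)$ is a clean route to the boundary case for $X\le(4+\sqrt{6})q$ that is, if anything, more elegant than the paper's elasticity computation in its Step~2.2.

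The real gap is exactly the one you flag: you propose to obtain single-crossing by showing monotonicity of the difference $U_R(X/2;X)-U_R(d_c(X);X)$, but have no mechanism for it. The paper's key idea here is different and worth knowing. Rather than studying the difference directly, it shows that on $[4q,8q]$ the boundary value $X\mapsto U_R(X/2;X)$ is \emph{concave} in $X$ (via the envelope theorem and $V_{XX}(X/2;X)<0$, $\sigma^2_{XX}(X/2;X)=0$), while the interior value $X\mapsto U_R(d^i(X);X)$ is \emph{convex} in $X$ (a longer Hessian computation using the implicit function theorem and $V_{XX}(d^i;X)>0$, $\sigma^2_{XX}(d^i;X)<0$). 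A concave curve can cross a convex curve at most once from above, which together with ``boundary optimal at $X=4q$'' and ``interior optimal at $X=8q$'' pins down a unique $\widetilde{X}\in(4q,8q)$. This replaces your proposed monotonicity argument entirely and avoids having to track the implicit $\rho^*$ through the elasticity properties of~$\tilde{c}$ in the way you anticipate.

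Two smaller points. First, your local-minimum argument at $X\downarrow 8q$ via $g''(X/2)\to+\infty$ is correct but unnecessary once you have $U'(d;X)<0$ on $(4q,X/2)$ for all $X>8q$; the paper instead compares $V(4q;X)$ with $V(X/2;X)$ directly. Second, invoking $\widetilde{X}>\widecheck{X}$ from \cref{prop:researchX} to conclude $U_R$ is decreasing at $\widetilde{X}$ is circular, since the ordering in \cref{prop:researchX} is itself proved using this lemma; your fallback envelope argument ($V_X<0$, $\sigma^2_X>0$ at the interior optimum) is the right one and is what the paper uses.
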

	\begin{proof}
		Define $d^b:=X/2$---the boundary solution---, and $d^i$ as the solution $d$ to \eqref{eq:FOCd} assuming $d<X/2$ (whenever it exists)---the interior solution.\medskip 

		\emph{Step 1. $d^b$ always a candidate solution.} Note first that the choice $d^b$ always constitutes a local maximum as the marginal cost of distance is zero at this point, $\partial \sigma^2 (d,X)/\partial d=1-2d/X$. Moreover, by \cref{lem:VdatXhalf} also the marginal benefit is zero at $d=X/2$. Finally, for any choice of $d$, there is a unique $\rho$ that solves \eqref{eq:FOCrho} because, given $d$, \eqref{eq:FOCrho} has a continuous, strictly increasing, left-hand side with $\tilde{c}_{\rho}(0)=0$ and $\lim_{\rho \rightarrow 1}\tilde{c}_\rho(\rho)=\infty$, and has a constant right-hand side. Hence, the boundary solution with $d^b$ is always a candidate solution. 

		\emph{Step 2. $d^\eta(X)=X/2$ if $X\leq 4q$.}	Recall $\eqref{eq:FOCrho}$ and $\eqref{eq:FOCd}$. Assuming an interior solution $d^i$, replacing $\eta$ via \eqref{eq:FOCrho} in \eqref{eq:FOCd} we obtain 
		\begin{align*}
			 	{\frac{V_d(d,X)}{\sigma^2_d(d,X)}}\Bigg/{\frac{V(d,X)}{\sigma^2(d,X)}}={\frac{\tilde{c}(\rho)}{\rho}}\Big/{\tilde{c}_{\rho}(\rho)}.
		\end{align*} 
		It follows from the properties of $\tilde{c}(\rho)$ that the $RHS\in[0,1/2]$ and decreasing. Thus, if the $LHS>1/2$, it is beneficial to increase $d$ if possible and the boundary choice $d^b$ is optimal. For short areas, $X\leq 4q$, the boundary choice is optimal as
		\begin{align*}
			{\frac{V_d}{\sigma^2_d}}\Bigg/{\frac{V}{\sigma^2}} ={\frac{2(X-2d)}{\frac{X-2d}{X}}}\Bigg/{\frac{2(dX-d^2)}{\frac{d(X-d)}{X}}}=1.
		\end{align*}

		\emph{Step 3. $d^\eta(X)<X/2$ if $X> 8q$.} Note first that the variance of the boundary question is always greater than for any interior question. 
		Hence, if the benefits of a discovery, $V$, are larger for an interior question than for the boundary question, the researcher can obtain a higher payoff by choosing an interior question with the same $\rho$ as for the boundary question. 
		The benefits of discovery on the boundary of an area with $X>8q$ are always smaller than for some interior distance by \cref{lem:dinteriorifXlarger8}. Hence, an interior choice is optimal for $X>8q$.

		\emph{Step 4. If $d^i$ is optimal it must be that $d^i<4q$ and that $X-d^i>4q$.} 
		
				For $X\in (4q,8q)$ and $X-d<4q$, 
				\begin{align*}
					{\frac{V_d(d,X)}{\sigma^2_d(d,X)}}\Bigg/{\frac{V(d,X)}{\sigma^2(d,X)}}=\frac{2d(X-d)}{-2d^2+2dX-\sqrt{X}(X-4q)^{3/2}}
				\end{align*}
				which decreases in $d$ with $\lim_{d \rightarrow X/2} ~ X^2/2 /(X^2/2-\sqrt{X}(X-4q)^{3/2}) $,
				which, in turn, increases in $X$ and is one for $X=4q$. Hence, any interior solution must be such that $X-d>4q$ for the same reasons given in Step 2 of this proof. For $X-d<4q$, $\partial u_R(\rho,d;X)/\partial $ 
				is always positive. 
				
				For $X>8q$, $X-d^i>4q$. $d^i<4q$ follows as $V(d;X)$ decreases in $d$ when $d>4q$ by \cref{lem:dinteriorifXlarger8}.

		\emph{Summary Step 1-4.} We know that (i) in areas with $X<4q$, the researcher's distance choice on the deepening area will be $d^b$, (ii) in areas with $X>8q$ the researcher's distance choice will be $d^i$, (iii) in areas with $X\in[4q,8q]$ the researcher's distance choice may $d^i$ or $d^b$, but (iv) if the solution is $d^i$, it has to satisfy $X-d^i>4q$ and $d^i<4q$. The latter two imply $d^i<X/2$ in this case.
		\medskip

		\emph{Step 5. Single crossing of the payoffs.} With three observations, we show that the payoffs, $U_R(d^b;X)$ and $U_R(d^i;X)$, cross once assuming $\rho(d,X)$ is chosen optimally. 
		\begin{enumerate}
			\item At area length $X$ for which $U_R(d^b;X)=U_R(d^i;X)$, the payoff at the boundary solution must be decreasing faster than at the interior solution. 
			\item On the interval $[4q,8q]$, the payoff of the boundary solution has a strictly lower second derivative with respect to $X$ for all $X$ than that of the interior solution. Hence, the two values can cross at most once on this interval.
			\item $U_R(d^b;X) \leq U_R(d^i;X)$ if $X\geq 8q$.
		\end{enumerate} 

		The first observation follows because the first switch is from the boundary to the interior solution, by continuous differentiability of all terms and $d^\eta(X)=X/2$ for $X<4q$. The third observation follows from Step 3 above.

		The second observation follows from totally differentiating $U_R$ for both local maxima. Using envelope conditions, we obtain that the payoff is concave in the boundary solution and convex in the interior solution implying the second observation. Define $\varphi(X):=\max_\rho u(d=X/2,\rho,X)$ for the boundary; we show in \cref{lem:URconcaveBound} that $\varphi(X)$ is concave. In \cref{URconvexIN}, we show that $U_R(X)=\max_{\rho,d} u(d,\rho,X)$ is convex in $X$ if the maximizer satisfies $d^\eta(X)<X/2$. 

		\emph{Step 6. Asymptotics.} 
		As $X\rightarrow \infty$, $V(d,X)$ converges to $V(d,\infty)$ and $\sigma^2(d,X)$ to $\sigma^2(d,\infty)$ and the researcher's optimization on the deepening interval converges to that on the expanding interval which has a unique maximum at $(d^\eta(\infty),\rho^\eta(\infty))$. In particular, if such an optimum exists, the envelope condition implies that $\mathrm{d} U_R(d^i(X);X)/\mathrm{d} X=\rho V_X(d^i,X) - \eta \tilde{c}(\rho)\sigma^2_X(d^i,X)<0$
		as $V_X(d,X)<0$ according to \cref{lem:partialValuetoareanegative} for $X>4q$ and $X-d>4q$ and $\sigma^2_X(d,X)>0$. Hence, the payoff of any optimal interior choice decreases in $X$.
	\end{proof} 
\subsection{Proof of Proposition \ref{prop:researchX}}
\label{sub:proof_of_prop_researchX}
We prove the statements in \cref{prop:researchX} in reverse order. A side-product of this proof is that we show: $4q<\widecheck{X}^\eta\leq \widecheck{X}^0$, $\dot{X} \approx 4.548q$ and $\widehat{X}^\eta< \widehat{X}^0$.
	\paragraph{Step 1: Proof of \Cref{item:check}.}
		We use a series of lemmata to show that a local maximum, $\widecheck{X}^\eta$, exists (\cref{lem:decreasing_if_V_decreasing,lem:boundary_peak}) and that it is global (\cref{lem:single_peaked}).
		\begin{lemma}\label{lem:decreasing_if_V_decreasing}
			Fix $d=X/2$ and assume that an interior optimum exists. Then $U_R(X|d=X/2)$ is maximal only if the total differential $\mathrm{d} V(d=X/2;X)/\mathrm{d} X\geq 0$.
		\end{lemma}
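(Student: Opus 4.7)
The plan is to apply the envelope theorem to $U_R(X\mid d{=}X/2)$ and then observe that the variance term is strictly increasing in $X$, so that any decrease in $V$ would force $U_R$ to strictly decrease as well, contradicting maximality.

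First I would note that because an interior optimum exists, the optimal $\rho^\ast(X)$ on the constraint $d=X/2$ is pinned down by \eqref{eq:FOCrho} (see \Cref{lem:interior_existence}), i.e.\ $\rho^\ast(X)\in(0,1)$ and varies smoothly in $X$. The envelope theorem then yields
\[
\frac{\mathrm{d} U_R(X\mid d=X/2)}{\mathrm{d} X}
= \rho^\ast(X)\,\frac{\mathrm{d} V(X/2;X)}{\mathrm{d} X}
- \eta\,\tilde c(\rho^\ast(X))\,\frac{\mathrm{d} \sigma^2(X/2;X)}{\mathrm{d} X},
\]
so the $\rho$-derivative drops out and only the total derivatives of $V$ and $\sigma^2$ along the curve $d=X/2$ remain.

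Next I would compute the variance term explicitly: by \Cref{def:variance}, $\sigma^2(X/2;X)=\tfrac{(X/2)^2}{X}=X/4$, hence $\tfrac{\mathrm{d}\sigma^2(X/2;X)}{\mathrm{d} X}=\tfrac14>0$. Combined with the fact that at an interior $\rho^\ast$ we have $\rho^\ast>0$ and $\tilde c(\rho^\ast)>0$ (and $\eta>0$), the second term in the envelope expression is strictly negative. Consequently, if $\tfrac{\mathrm{d} V(X/2;X)}{\mathrm{d} X}<0$, both contributions to $\tfrac{\mathrm{d} U_R}{\mathrm{d} X}$ are negative, so $U_R(\cdot\mid d{=}X/2)$ is strictly decreasing at $X$ and $X$ cannot be a (local) maximum. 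Contrapositively, maximality of $U_R(X\mid d{=}X/2)$ requires $\tfrac{\mathrm{d} V(X/2;X)}{\mathrm{d} X}\geq 0$, which is the claim.

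There is no real obstacle here; the only subtlety worth flagging is that one must apply envelope in $\rho$ only (since $d=X/2$ is a binding constraint, not an unconstrained optimizer) and then pick up the non-envelope contribution that comes from how $d=X/2$ itself varies with $X$—that contribution is exactly $\tfrac{\mathrm{d}\sigma^2(X/2;X)}{\mathrm{d} X}=1/4$ and is the force that pushes $U_R$ down whenever $V$ fails to rise.
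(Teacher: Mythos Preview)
Your proposal is correct and follows essentially the same approach as the paper: both apply the envelope theorem in $\rho$ only, use $\sigma^2(X/2;X)=X/4$ so that $\tfrac{\mathrm{d}\sigma^2}{\mathrm{d}X}=\tfrac14$, and conclude from the sign of the cost term that $\tfrac{\mathrm{d}V(X/2;X)}{\mathrm{d}X}\geq 0$ is necessary at a maximum. The paper phrases the last step via the first-order condition $\rho\,\tfrac{\mathrm{d}V}{\mathrm{d}X}=\tfrac{\eta}{4}\tilde c(\rho)\geq 0$, while you argue by contraposition that $\tfrac{\mathrm{d}V}{\mathrm{d}X}<0$ would make $U_R$ strictly decreasing; these are equivalent.
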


		\begin{proof}
			Under the assumption that $d=X/2$, $U_R(X)$ is defined and continuously differentiable for all $X \in [0,\infty)$ despite the indicator functions.\footnote{Note that the terms appearing in the indicator functions are of the form $\sqrt{a}(a-4q)^{3/2}$. Taking the limit of their derivative from above to $4q$ yields zero such that the left and right derivative coincide at the point at which the indicator functions become active.} Because $X=0$ implies $U_R(X=0)=0$, because $U_R(X)$ declines for $X$ large enough and because \cref{lem:interior_existence} holds, there is an interior $X$ at which $U_R(X)$ is maximized.

			As $U_R(X)$ is differentiable and maximized at some interior $X$ it satisfies $\partial U_R/\partial X = 0$. By assumption, $d(\widecheck{X}^\eta)=X/2$ and \eqref{eq:FOCrho} holds. Thus,
		\(\rho \frac{\mathrm{d} V(d=X/2;X)}{\mathrm{d} X}= \frac{\eta}{4} \tilde{c}(\rho).\)
		The right-hand side is non-negative, implying the desired result.\footnote{The RHS is only $0$ if $\eta=0$, $\rho^\eta(X)=1$ and $U_R(X)=V(X)$.} 
		\end{proof}

		\begin{lemma}\label{lem:boundary_peak}
			$U_R(X;d^b=X/2)$ peaks at $X=\widecheck{X}^\eta\in(4q,\widecheck{X}^0]$.
		\end{lemma}
		\begin{proof}
		Define $\widehat{U}_R(X)=U_R(X;d^b= X/2)$. Note that $\widehat{U}_R'(X)>0$ for $X\in [0,4q]$. This follows because in this case $\widehat{U}_R(X)=\rho X^2/(12q)-\eta \tilde{c}(\rho)X/4$ and, hence, $\widehat{U}_R'(X)=\rho X/(6q)-\eta \tilde{c}(\rho) /4$. Using optimality of $\rho$ via the \eqref{eq:FOCrho}, we obtain $X/(6q)=\eta \tilde{c}_{\rho}(\rho)/2$
			which yields $\widehat{U}'_R(X) =\tilde{c}_{\rho}(\rho)\rho \eta /4  \left(2-\tilde{c}(\rho)/(\rho \tilde{c}_{\rho}(\rho)) \right) >0$,
			where the inequality follows again from the properties of $\tilde{c}(\rho)$.

			Moreover, $\widehat{U}_R(X)$ is strictly concave on $[4q,8q]$ as $\widehat{V}(X):=V(d=X/2,X)$ is concave by \cref{lem:secondtotalderivative} and $\mathrm{d}\mathrm{d}\sigma^2(d=X/2,X)/(\mathrm{d}X\mathrm{d}X)=0$ implying $\widehat{U}_R''(X)=\rho \widehat{V}_{XX}<0$.\footnote{Where $\rho'(x)=0$ by optimality and the property of the first-order condition.} For $X>\widecheck{X}^0$, $\mathrm{d} V(d=X/2;X)/\mathrm{d} X<0$ by the definition of $\widecheck{X}^0$ implying that for $X>\widecheck{X}^0$ the researcher's value decreases. By \cref{lem:decreasing_if_V_decreasing}, it follows that the value-maximizing area length $\widecheck{X}^\eta\in(4q,\widecheck{X}^0]$.
		\end{proof}

		\begin{lemma}\label{lem:single_peaked}
			$U_R(X)$ is single peaked in $X$ with the maximum attained at $\widecheck{X}^\eta$.
		\end{lemma}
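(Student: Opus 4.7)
The plan is to combine three ingredients already in place: the single-crossing of the two candidate branches derived in Step 2.5 of the proof of \cref{prop:substitutes}, the peak location of the boundary branch from \cref{lem:boundary_peak}, and the monotonicity of the interior branch from Step 2.6 of the same proof. Writing $U_R(X)=\max\{U_R(d^b;X),U_R(d^i;X)\}$ wherever both branches are defined, the single-crossing argument together with \cref{lem:single_crossing} identifies a unique switchover length---exactly the $\widetilde{X}$ of \cref{prop:researchX}---such that the boundary branch $d^b=X/2$ is globally optimal on $[0,\widetilde{X}]$ and the interior branch $d^i<X/2$ on $(\widetilde{X},\infty)$, with equality at $\widetilde{X}$. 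In particular $U_R(X)$ is continuous.

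Next I would locate the peak. By \cref{lem:boundary_peak} the boundary branch is strictly increasing on $[0,\widecheck{X}]$ and strictly decreasing on $[\widecheck{X},8q]$, with $\widecheck{X}\in(4q,\widecheck{X}^0]$; by the envelope calculation in Step 2.6 of \cref{prop:substitutes}, the interior branch satisfies $\mathrm{d}U_R(d^i(X);X)/\mathrm{d}X<0$ everywhere on its domain. Invoking the ordering $\widecheck{X}\leq\widecheck{X}^0<\widetilde{X}^0\leq\widetilde{X}$ (from \cref{lem:orderingV}, with the final inequality reflecting that a positive cost $\eta>0$ does not push the boundary-to-interior switch below its benefit-only counterpart $\widetilde{X}^0$), the peak $\widecheck{X}$ lies strictly inside the region on which the boundary branch is globally optimal.

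Splicing these monotonicity statements across $\widecheck{X}$ and $\widetilde{X}$ yields a function strictly increasing on $[0,\widecheck{X}]$ and strictly decreasing on $[\widecheck{X},\infty)$, i.e., single-peaked at $\widecheck{X}$. The main obstacle I anticipate is the strict inequality $\widecheck{X}<\widetilde{X}$; everything else is bookkeeping of lemmas already in hand. If the direct transfer from the benefit-only cutoffs is not immediate, the fallback is to use the curvature contrast from Step 2.5 of \cref{prop:substitutes}---boundary branch concave, interior branch convex in $X$ on the overlap window $[4q,8q]$---together with the fact that the boundary branch attains its local maximum at $\widecheck{X}$ while the interior branch is monotone decreasing, so that the two branches cannot meet at or before $\widecheck{X}$ without violating the single-crossing already proved.
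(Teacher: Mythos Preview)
Your overall architecture matches the paper's: split $U_R(X)$ into the boundary branch (peak at $\widecheck{X}$, \cref{lem:boundary_peak}) and the interior branch (monotone decreasing by the envelope computation), invoke the single crossing at $\widetilde{X}$ from \cref{lem:single_crossing}, and conclude once you know $\widecheck{X}<\widetilde{X}$.

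The gap is in your primary justification of $\widecheck{X}<\widetilde{X}$. You write the chain $\widecheck{X}\le\widecheck{X}^0<\widetilde{X}^0\le\widetilde{X}$ and justify the last step by asserting that a positive cost $\eta>0$ ``does not push the boundary-to-interior switch below its benefit-only counterpart $\widetilde{X}^0$.'' That inequality is not established anywhere in the paper, and the heuristic is suspect: an interior choice has strictly lower variance than the midpoint, so adding cost makes the interior branch \emph{relatively more} attractive, which if anything suggests $\widetilde{X}\le\widetilde{X}^0$. You should not rely on this route.

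Your fallback is the right idea and is exactly what the paper does, but you should state it directly rather than via the concave/convex detour. \cref{lem:single_crossing} already records that $U_R(X)$ is \emph{decreasing} at $\widetilde{X}$; equivalently, Observation~1 of Step~2.5 says that at the crossing the boundary branch decreases strictly faster than the interior branch, and the interior branch is itself strictly decreasing, so the boundary branch is strictly decreasing at $\widetilde{X}$. Since the boundary branch is single-peaked at $\widecheck{X}$, this forces $\widetilde{X}>\widecheck{X}$. With that in hand, the splicing you describe goes through verbatim and matches the paper's three-observation proof.
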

		\begin{proof}
			The result follows from three observations: First, $\widetilde{X}^\eta>\widecheck{X}^\eta>4q$ by \cref{lem:single_crossing,lem:boundary_peak}. Second, $V_X(d;X)<0$ if $X>4q$ and $d<X/2$ by \cref{lem:decreasingbenefitifXincreasing}. Third, by the envelope theorem, if $d^\eta(X)<X/2$, then $\partial U_R(X)/\partial X= \rho^\eta(X) V_X(d^\eta(X);X) - \eta\tilde{c}(\rho^\eta(X)) \sigma_X(d^\eta(X);X)<\rho^\eta(X) V_X(d^\eta(X);X)$.
			Thus, the interior-solution payoff intersects the boundary-solution payoff from below while both decrease. 
		\end{proof}

	\paragraph{Step 2. Proof of \Cref{item:dot}.}~

		\noindent\underline{Step 2.1 Maximum of $d^\eta(X)$ at $\widetilde{X}^\eta$.} By \cref{lem:single_crossing}, $d^\eta(X)$ increases for $X<\widetilde{X}^\eta$. By Step 4 in the proof of \cref{lem:single_crossing}, we know that any interior solution $d^i$ is such that $d^i<4q<X-d^i$ and thus strictly smaller than $X/2$. Thus, $d^\eta(X)$ decreases when it switches from the boundary to an interior solution. 

		\noindent\underline{Step 2.2 Maximum of $\rho^\eta(X)$ at $\dot{X}$.} We guess (and verify in Step 4) that a maximum of $\rho^\eta(X)$ exists in the range $[\widehat{X}^\eta,\widetilde{X}^\eta]$, that is the region in which it is optimal to deepen knowledge and to select $d=X/2$.
		\begin{lemma}\label{lem:dotX}
			Suppose $d=X/2$ is optimal for a range \([\underline{X},\overline{X}]\) such that $d^\eta(X)=X/2$. Then, the optimal $\rho^\eta(X)$ is single peaked. It is highest at $\dot{X}=8 \cos\left(\pi/18\right)/\sqrt{3}$.
		\end{lemma}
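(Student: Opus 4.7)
The plan is to reduce the problem to showing that a single explicit scalar function of $X$ is single-peaked, and then to solve the resulting cubic by the trigonometric method. Throughout the range $[\underline{X},\overline{X}]$ the distance is pinned to $d(X)=X/2$, so $\sigma^2(X/2;X)=X/4$, and the first-order condition \eqref{eq:FOCrho} becomes
\begin{equation*}
\eta\,\tilde{c}_\rho(\rho(X))=\frac{V(X/2;X)}{X/4}=\frac{4}{X}\,V(X/2;X).
\end{equation*}
Since $\tilde{c}_\rho$ is strictly increasing on $[0,1)$, the function $\rho(X)$ is single-peaked if and only if $g(X):=V(X/2;X)/X$ is single-peaked, and $\rho(X)$ attains its maximum exactly where $g(X)$ does. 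The whole task therefore reduces to analyzing $g$.

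Next, I would use \cref{prop:value_knowledge}. Since $d=X/2$ is optimal only for $X\leq\widetilde{X}<8q$, only the indicator $\mathbf{1}_{X>4q}$ can be active. This splits the argument into two regimes. For $X\leq 4q$ one obtains $V(X/2;X)=X^2/(12q)$, so $g(X)=X/(12q)$ is strictly increasing. For $X\in(4q,8q]$ one has
\begin{equation*}
V(X/2;X)=\frac{1}{6q}\!\left(\frac{X^2}{2}-\sqrt{X}\,(X-4q)^{3/2}\right),\qquad g(X)=\frac{1}{6q}\!\left(\frac{X}{2}-\frac{(X-4q)^{3/2}}{\sqrt{X}}\right).
\end{equation*}
A direct differentiation gives
\begin{equation*}
g'(X)=\frac{1}{6q}\!\left(\frac{1}{2}-\sqrt{\frac{X-4q}{X}}\cdot\frac{X+2q}{X}\right),
\end{equation*}
so $g'(4q^+)=1/(12q)>0$ and $g'(8q)<0$, and $g$ is continuously differentiable across $X=4q$ (the indicator contribution and its derivative vanish from the right at $4q$). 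Setting $g'(X)=0$ and squaring yields $X^3/4=(X-4q)(X+2q)^2$. Substituting $m:=X/q$ gives the depressed cubic
\begin{equation*}
3m^3-48m-64=0,\qquad\text{equivalently}\qquad m^3-16m-\tfrac{64}{3}=0.
\end{equation*}

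To finish, I would show (i) that this cubic has a unique root in $(4,8)$ and (ii) that it equals $8\cos(\pi/18)/\sqrt{3}$. For (i), the derivative $9m^2-48$ has its only positive zero at $m=4/\sqrt{3}<4$, so on $(4,\infty)$ the cubic is strictly increasing; combined with $g'(4q^+)>0>g'(8q)$, this pins down a unique $\dot X\in(4q,8q)$ with $g'(\dot X)=0$, and monotonicity of $g'$ across the join at $4q$ makes $\dot X$ the unique critical point of $g$ on $[\underline X,\overline X]$, hence a strict maximum. For (ii), apply Viète's trigonometric formula to $t^3+pt+q=0$ with $p=-16$, $q=-64/3$: one computes $2\sqrt{-p/3}=8/\sqrt{3}$ and $\frac{3q}{2p}\sqrt{-3/p}=\sqrt{3}/2$, so $\arccos(\sqrt3/2)=\pi/6$ and the principal root is
\begin{equation*}
\dot X/q=\frac{8}{\sqrt{3}}\cos\!\left(\frac{1}{3}\cdot\frac{\pi}{6}\right)=\frac{8\cos(\pi/18)}{\sqrt{3}},
\end{equation*}
as claimed. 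The main obstacle is the algebraic reduction to the right cubic and the verification that $\dot X$ lies in the cubic's branch where $m>4/\sqrt{3}$ so that the trigonometric root formula selects the correct cosine; once that is in place, the single-peakedness of $\rho(X)$ and the closed form for $\dot X$ both follow from the monotonicity of $\tilde{c}_\rho$ and the uniqueness of the critical point of $g$.
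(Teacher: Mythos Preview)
Your proposal is correct and follows essentially the same route as the paper: both reduce the question, via \eqref{eq:FOCrho} and the strict monotonicity of $\tilde c_\rho$, to the single-peakedness of $g(X)=V(X/2;X)/X$, and both locate the peak at $\dot X=\tfrac{8}{\sqrt3}\cos(\pi/18)$. The only substantive difference is that the paper establishes single-peakedness by asserting that $g$ is concave on the relevant range (which is true, since $g''(X)=-\tfrac{1}{6q}h'(X)$ with $h'(X)/h(X)=\tfrac{24q^2}{2X(X-4q)(X+2q)}>0$), whereas you instead show directly that $g'$ has a unique zero by reducing $g'(X)=0$ to the cubic $3m^3-48m-64=0$ and checking that this cubic is strictly increasing on $(4,\infty)$; you then go further and supply the Vi\`ete derivation of the closed form, which the paper simply states. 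Both arguments are sound; yours is the more explicit of the two.
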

		\begin{proof}
			By \cref{lem:decreasing_if_V_decreasing}, we know that $\mathrm{d} V(d=X/2;X)/\mathrm{d} X \geq0$ and, by \cref{lem:boundary_peak}, $\overline{X}>\widehat{X}^0$. Moreover, recall $\sigma^2(d=X/2;X)=X/4$. The first-order condition with respect to $\rho$ becomes $V(X/2;X)/X= \eta\tilde{c}_\rho(\rho)/4$,
			with $V(X/2;X)/X= X/(12 q) - \textbf{1}_{X>4q} (X-4q)^{3/2}/(\sqrt{X} 6q)$.

				The latter is continuous and concave. Since $\tilde{c}(\rho)$ is an increasing, twice continuously differentiable and convex function, $\rho$ increases in $X$ if and only if $V(X/2;X)/X$ increases in $X$. By concavity of $V(X/2;X)/X$ that implies single peakedness.

			Thus, $\dot{X}$ is independent of $\eta$ and given by
			 \(\dot{X} =8 \cos\left(\pi/18\right)/\sqrt{3}\approx 4.548 q. \)
		\end{proof}

	\paragraph{Step 3. Proof of \Cref{item:hat}.} 
		\begin{lemma}\label{lem:hatXsmallerhatX0}
			$\widehat{X}^\eta$ exists, $\lim_{X \searrow \widehat{X}^\eta} \rho^\eta(X) >\rho^\eta(\infty)$, and $\widehat{X}^\eta$ decreases in $\eta$.
		\end{lemma}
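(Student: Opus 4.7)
The plan is to prove the three claims of the lemma in order, reducing each to a clean comparison between the deepening payoff $U_R(X)$ (in the regime $d(X)=X/2$) and the expanding payoff $U_R^\infty$.

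\textbf{Step 1 (Existence of $\widehat{X}$).} I would apply the intermediate value theorem to $\Phi(X):=U_R(X)-U_R^\infty$. By \cref{lem:single_peaked}, $U_R(X)$ is continuous and single-peaked at $\widecheck{X}$, with $U_R(X)\to 0$ as $X\to 0$; since $U_R^\infty>0$ by \cref{lem:expanding_focs}, $\Phi(X)<0$ near $0$. It then suffices to exhibit one $X^\ast$ at which $\Phi(X^\ast)>0$; the natural choice is $X^\ast=6q$ with $d=3q$. Direct substitution in \cref{prop:value_knowledge} shows $V(3q;6q)>V(3q;\infty)=3q/2$ while $\sigma^2(3q;6q)=3q/2<3q=\sigma^2(3q;\infty)$, so plugging $\rho^\infty$ into the deepening objective at $(6q,3q)$ strictly dominates $U_R^\infty$. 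Hence $U_R(\widecheck{X})\ge U_R(6q)>U_R^\infty$, and continuity plus single-peakedness of $U_R(X)$ deliver a unique $\widehat{X}\in(0,\widecheck{X})$ with $U_R(\widehat{X})=U_R^\infty$.

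\textbf{Step 2 ($\lim_{X\searrow\widehat{X}}\rho(X)>\rho^\infty$).} Since $\widehat{X}<\widecheck{X}<\widetilde{X}$, \cref{lem:single_crossing} gives $d(X)=X/2$ just to the right of $\widehat{X}$, so $\sigma_D^2:=\sigma^2(d(\widehat{X});\widehat{X})=\widehat{X}/4$. I would rewrite each payoff in the dual form
\[
U_R = \sigma^2\,g\!\left(\tfrac{V}{\sigma^2}\right),\qquad g(\lambda):=\max_{\rho}\bigl[\rho\lambda-\eta\tilde{c}(\rho)\bigr],
\]
where $g$ is strictly increasing and the envelope condition gives the FOC $\eta\tilde{c}_\rho(\rho^\ast)=\lambda$. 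Because $\widehat{X}<\widecheck{X}<\widecheck{X}^0<8q$ (\cref{lem:orderingV}) and $d^\infty>2q$ (\cref{lem:expanding_focs}), we have $\sigma_D^2=\widehat{X}/4<2q<d^\infty=\sigma_E^2$. The indifference $\sigma_D^2 g(\lambda_D)=\sigma_E^2 g(\lambda_E)$ then forces $g(\lambda_D)>g(\lambda_E)$, hence $\lambda_D>\lambda_E$, and monotonicity of $\tilde{c}_\rho$ delivers $\rho^D>\rho^\infty$.

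\textbf{Step 3 (Monotonicity in $\eta$).} I would apply the implicit function theorem to $\Phi(X,\eta):=U_R(X,\eta)-U_R^\infty(\eta)=0$. Because $U_R$ is strictly increasing at $\widehat{X}<\widecheck{X}$, $\Phi_X>0$. The envelope theorem gives $\Phi_\eta=-\tilde{c}(\rho^D)\sigma_D^2+\tilde{c}(\rho^\infty)\sigma_E^2$. Using both FOCs together with $U_R(\widehat{X})=U_R^\infty$ to eliminate $\eta$ yields
\[
\sigma_D^2\,m(\rho^D)\;=\;\sigma_E^2\,m(\rho^\infty),\qquad m(\rho):=\rho\tilde{c}_\rho(\rho)-\tilde{c}(\rho).
\]
Substituting this ratio into $\Phi_\eta$, the condition $\Phi_\eta>0$ is equivalent to
\[
\frac{m(\rho^D)}{\tilde{c}(\rho^D)}\;>\;\frac{m(\rho^\infty)}{\tilde{c}(\rho^\infty)},
\]
i.e., to monotonicity of $\rho\tilde{c}_\rho(\rho)/\tilde{c}(\rho)-1$. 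This is exactly the elasticity property of $\tilde{c}$ recorded in \cref{sec:Notation}, combined with $\rho^D>\rho^\infty$ from Step 2. Therefore $\widehat{X}'(\eta)=-\Phi_\eta/\Phi_X<0$.

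\textbf{Where the work lies.} Step 1 is essentially bookkeeping on top of \cref{lem:single_peaked}. The real content sits in Step 2, whose conclusion $\rho^D>\rho^\infty$ feeds Step 3: the key insight is to recognise that, at indifference, the deepening area has strictly smaller variance than the expanding area, so the convex-dual representation $U_R=\sigma^2 g(V/\sigma^2)$ forces a strictly larger $g$-value, hence a strictly larger $\lambda$, hence a strictly larger $\rho$. The obstacle is ensuring $\sigma_D^2<\sigma_E^2$ robustly in $\eta$, which we obtain from the uniform bounds $\widehat{X}<8q$ and $d^\infty>2q$ that hold for every $\eta>0$.
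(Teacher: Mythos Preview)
Your proposal is correct and, in Step~2, genuinely more economical than the paper's argument. The paper also proves existence via a continuity/IVT argument and derives monotonicity in $\eta$ from the envelope theorem together with the increasing-elasticity property of $\tilde{c}$, exactly as you do in Steps~1 and~3. The difference lies in how the inequality $\rho(\widehat{X})>\rho^\infty$ is obtained. The paper compares $4V(\widehat{X}/2;\widehat{X})/\widehat{X}$ with $V(d^\infty;\infty)/d^\infty$ directly from the two \eqref{eq:FOCrho} conditions and then runs a three-case analysis on the size of $\widehat{X}$ (including a separate verification that $\widehat{X}>2q$), establishing the inequality only conditionally on $\widehat{X}<6q$; it then closes the loop by combining Claim~1 with Claim~2 and the fact that $\widehat{X}^0<6q$. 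Your convex-dual representation $U_R=\sigma^2 g(V/\sigma^2)$ sidesteps all of this: once $\sigma_D^2=\widehat{X}/4<2q<d^\infty=\sigma_E^2$ is secured from the uniform bounds $\widehat{X}<\widecheck{X}\le\widecheck{X}^0<8q$ (\cref{lem:boundary_peak,lem:checkX}) and $d^\infty>2q$ (\cref{lem:expanding_focs}), the indifference $\sigma_D^2 g(\lambda_D)=\sigma_E^2 g(\lambda_E)$ forces $\lambda_D>\lambda_E$ and hence $\rho^D>\rho^\infty$ in one line, with no case split and no bootstrap between Claims~1 and~2. The trade-off is that the paper's route extracts the sharper bound $\widehat{X}>2q$ along the way, whereas yours only uses $\widehat{X}<8q$; but for the lemma as stated your argument is both shorter and unconditional in $\eta$.
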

		\begin{proof}
			As $X \rightarrow 0$, $d^\eta(X) \rightarrow 0$ and thus $U_R(X) \rightarrow 0$. By \cref{lem:expanding_focs}, $U_R(\infty)>0$. Thus, by continuity of $U_R(X)$, $\exists \widehat{X}^\eta>0$ such that expanding research dominates deepening research for all $X<\widehat{X}$. Cost are increasing in $X$ and by \cref{cor:maxinteriorsmaller8}, $V(d;X\in (\widehat{X}^0,\infty))>V(d;\infty)$ which implies $U_R(X\in(\widehat{X}^0,\infty))>U_R(\infty)$. By \cref{lem:single_peaked} and again continuity of $U_R(X)$, that payoff is maximal at $\widecheck{X}$. Thus, we obtain that $\widehat{X^\eta}$ exists and that $\widehat{X}^\eta<\widecheck{X}^\eta$. 

			We now show that $\lim_{X \searrow \widehat{X}^\eta} \rho^\eta(X) >\rho^\eta(\infty)$ holds if $\widehat{X}^\eta<6q$, then we show that $\widehat{X}^\eta$ decreases in $\eta$ which, together with the observation that $\widehat{X}^0<6q$, proves the lemma. 
			At $\widehat{X}^\eta$ we have $U_R(\widehat{X})=U_R(\infty)$:
			\begin{equation} \label{indifferencehatX}
			\rho(\widehat{X}^\eta) V(\widehat{X}^\eta/2;\widehat{X}^\eta) - \eta \tilde{c}(\rho(\widehat{X}^\eta)) \widehat{X}^\eta /4 =\rho^\eta(\infty) V(d^\eta(\infty);\infty) - \eta \tilde{c}(\rho^\eta(\infty)) d^\eta(\infty),
			\end{equation}
			where the fact that $d(\widehat{X}^\eta)=\widehat{X}^\eta/2$ follows from \cref{lem:single_peaked,lem:single_crossing,lem:boundary_peak}. Moreover, the following holds by optimality
			\begin{align}
				V(d^\eta(\infty);\infty) &= \eta \tilde{c}_\rho(\rho^\eta(\infty))d^\eta(\infty)\tag{FOC $\rho^\eta(\infty)$}\\
				V(\widehat{X}^\eta/2;\widehat{X}^\eta) &= \eta \tilde{c}_\rho(\rho(\widehat{X}^\eta))\frac{\widehat{X}^\eta}{4}.\tag{FOC $\rho(\widehat{X})$}
			\end{align} 

			\noindent\underline{Claim 1: $\rho^\eta(\infty)<\rho(\widehat{X}^\eta)$ if $\widehat{X}^\eta<6q$.}
				Using (FOC $\rho^\eta(\infty)$) and (FOC $\rho(\widehat{X}^\eta)$), we obtain from the properties of the error function $\rho(\widehat{X}^\eta)>\rho^\eta(\infty)$ if and only if
				\[4 \frac{V(\widehat{X}^\eta/2;\widehat{X}^\eta/2)}{\widehat{X}^\eta} > \frac{V(d^\eta(\infty);\infty)}{d^\eta(\infty)}.\]
				\emph{Case 1: $\widehat{X}^\eta>4q$.}
				Substituting for the $V(\cdot)$'s the above becomes $d^\eta(\infty)+ 2\widehat{X}^\eta - 4 (\widehat{X}^\eta-4q)^{3/2}/\sqrt{\widehat{X}^\eta}>6q$.
					A sufficient condition for this to hold is 
					\( d^\eta(\infty) -2 \widehat{X} + 10q>0\). 
					Using that $d^\eta(\infty)>2q$ by \cref{lem:expanding_focs}, we obtain that a sufficient condition for $\rho(\widehat{X}^\eta)>\rho^\eta(\infty)$ is $\widehat{X}^\eta<6q$. 

				\emph{Case 2: $\widehat{X}^\eta \in (2q,4q]$.} Performing the same steps assuming that $\widehat{X}^\eta \in [2q,4q]$, the claim holds if and only if \(\widehat{X}^\eta/(3q) > 1- d^\eta(\infty)/(6q) \:
						\Leftrightarrow \: 2\widehat{X}^\eta > 6q - d^\eta(\infty)> 4q\)
					implying the result.

				\emph{Case 3: $\widehat{X}^\eta<2q$.} We show that Case 3 never occurs, that is $\widehat{X}^\eta>2q$. To do so, we compare $U_R(d=2q;\infty)$ with $U_R(d=1q;X=2q)$ and show that the former is always larger. Hence, $X=2q<\widehat{X}^\eta$ for any $\eta$. For $X=d=2q$, we have \(X/(3q) = 1- d/(6q)\), and thus $\rho(X=2q)=\rho(d;\infty)=\rho$ (cf. Case 2). Moreover, we have
					\(V(1q;2q)=q/3\) and \(V(2q;\infty)= 4q/3\). (FOC $\rho^{X}$) implies
					\(4 V(1q;2q)/2q = 2/3 = {\eta} \tilde{c}_\rho(\rho)\). Since $\tilde{c}_\rho(\rho)>\tilde{c}(\rho)/\rho $ for any $\rho>0$, these imply $\eta \tilde{c}(\rho)/\rho<2/3$.
					Note that $U_R(d=2q;\infty) - U_R(X=2q) = q \left(\rho - 3/2\eta \tilde{c}(\rho)\right)$,
					which is positive as $\eta \tilde{c}(\rho)/\rho<2/3$. Thus, $U_R(d=2q;\infty)>U_R(X=2q)$ and therefore $\widehat{X}^\eta>2q$.\medskip

			\noindent\underline{Claim 2: If $\rho^\eta(\infty)<\rho(\widehat{X}^\eta)$ then $\widehat{X}^\eta$ decreases in $\eta$.}

				Using (FOC $\rho^\eta(\infty)$) and (FOC $\rho(\widehat{X}^\eta)$) to replace $V(\cdot)$s in \cref{indifferencehatX}, we obtain
				\(d^\eta(\infty)\left(\rho^\eta(\infty) \tilde{c}_\rho(\rho^\eta(\infty)) - \tilde{c}(\rho^\eta(\infty))\right) = \widetilde{X}^\eta/4 \left(\rho(\widehat{X}^\eta)  \tilde{c}_\rho(\rho(\widehat{X}^\eta)) -\tilde{c}(\rho(\widehat{X}))\right)   \),	yielding
				\[\widehat{X}^\eta/4 = d^\eta(\infty) \frac{\left(\rho^\eta(\infty) \tilde{c}_\rho(\rho^\eta(\infty)) - \tilde{c}(\rho^\eta(\infty))\right)}{\left(\rho(\widehat{X}^\eta)\tilde{c}_\rho(\rho(\widehat{X}^\eta)) -\tilde{c}(\rho(\widehat{X}^\eta))\right)}.\]

				The envelope theorem gives
				\[\frac{\partial }{\partial \eta}\left(U_R(\widehat{X}^\eta)- U_R(\infty)\right) = -\tilde{c}(\rho(\widehat{X}^\eta)) \frac{\widehat{X}^\eta }{4} + \tilde{c}(\rho^\eta(\infty)) d^\eta(\infty).\]
				Replacing for $\widehat{X}^\eta$ implies that the RHS is positive if and only if
				\[\left(\tilde{c}(\rho^\eta(\infty))\right)- \tilde{c}(\rho(\widehat{X}^\eta))\frac{\rho^\eta(\infty) \tilde{c}_\rho(\rho^\eta(\infty))-\tilde{c}(\rho^\eta(\infty))}{\rho(\widehat{X}^\eta) \tilde{c}_\rho(\rho(\widehat{X}^\eta))-\tilde{c}(\rho(\widehat{X}^\eta))}>0.\]
				Using that $\rho\tilde{c}_\rho(\rho)>\tilde{c}(\rho)$ by the properties of the inverse error function and factoring out the denominator of the first term, the above holds if and only if
				\[\tilde{c}(\rho^\eta(\infty)) \rho^\eta(\infty) \tilde{c}_\rho(\rho(\widehat{X}^\eta)) - \tilde{c}(\rho(\widehat{X}^\eta) \rho^\eta(\infty) \tilde{c}_\rho(\rho^\eta(\infty)) >0 \Leftrightarrow
				\frac{\rho(\widehat{X}^\eta) \tilde{c}_\rho(\rho(\widehat{X}^\eta))}{\tilde{c}(\rho(\widehat{X}^\eta))}>\frac{\rho^\eta(\infty) \tilde{c}_\rho(\rho^\eta(\infty))}{\tilde{c}(\rho^\eta(\infty))}\]
				which holds if and only if $\rho(\widehat{X}^\eta)>\rho^\eta(\infty)$ by the properties of the error function. Thus, $\widehat{X}^\eta$ decreases if $\rho(\widehat{X}^\eta)>\rho^\eta(\infty)$.

			\noindent\underline{Conclusion:} Since $\widehat{X}^0\in [2q,6q]$, $\rho^\eta(\infty)<\rho(\widehat{X}^\eta)$ implying that $\widehat{X}^\eta$ decreases in $\eta$. 
		\end{proof}

	\paragraph{Step 4.} 
	\begin{lemma}\label{lem:ordereta}
	 $\widehat{X}^\eta<\dot{X}<\widecheck{X}^\eta<\widetilde{X}^\eta$.
	\end{lemma}\begin{proof}
		We first show that $\widecheck{X}^\eta>\dot{X}$. 
		By the envelope theorem, we need at $X=\widecheck{X}^\eta$ that 
		$\rho\: \mathrm{d} V(d=\widecheck{X}^\eta/2;\widecheck{X}^\eta)/\mathrm{d} X = \eta \tilde{c}(\rho)/4$.
		The FOC for $\rho$ at $X=\widecheck{X}^\eta$ gives
		\(V/\widecheck{X}^\eta =  \eta \tilde{c}_{\rho}(\rho)/4\).
		Assume for a contradiction that $\rho(\widecheck{X}^\eta)$ increases. Then, $V/\widecheck{X}^\eta$ must be increasing which holds if and only if \(\mathrm{d} V(d=\widecheck{X}^\eta/2;\widecheck{X}^\eta)/\mathrm{d} X > V(d=\widecheck{X}^\eta/2;\widecheck{X}^\eta)/ \widecheck{X}^\eta\).
		Combining this inequality with the above properties  leads to a contradiction as $\tilde{c}_{\rho}(\rho)>\tilde{c}(\rho)/\rho$ by the properties of $\tilde{c}(\rho)$.
	
		\noindent\underline{Ordering.}
			By \cref{lem:hatXsmallerhatX0} we know that $\widehat{X}^\eta<\widehat{X}^0$. Thus, because $\widehat{X}^0<\dot{X}$ $\Rightarrow$ $\widehat{X}^\eta<\dot{X}$. Moreover, $\widetilde{X}^\eta>\widecheck{X}^\eta$ by \cref{lem:single_crossing} which concludes the proof.\end{proof}


\subsection{Proof of Proposition \ref{prop:laissez-fair}} 
\label{sub:proof_of_proposition_prop:laissez-fair}

By assumption, knowledge is dense initially and the base step is satisfied. We show the induction step assuming $R_t$ chooses $d^\eta(\infty)$. We have to show that $\widehat{X}^\eta>d^\eta(\infty)$ to show that knowledge is dense in $t+1$. Suppose the opposite holds, then from \eqref{eq:FOCrho} we know
\((6q-d^\eta(\infty))/(6q)=V(d^\eta(\infty);\infty)/\sigma^2(d^\eta(\infty);\infty)=\eta \tilde{c}_\rho(\rho^\eta(\infty))\) and \(d^\eta(\infty)/(6q)=V(d^\eta(\infty)/2;d^\eta(\infty))/\sigma^2(d^\eta(\infty)/2;d^\eta(\infty))=\eta \tilde{c}_\rho(\rho(d^\eta(\infty)))\)
implying
\[\rho^\eta(\infty) 
= \operatorname{erf}\left(\begin{gathered}\sqrt{\frac{W\left(\frac{36q^2-12qd^\eta(\infty)+(d^\eta(\infty))^2}{18q^2 \eta^2 \pi}\right)}{2}}\end{gathered}\right) \textrm{ and } \rho(d^\eta(\infty))=\operatorname{erf}\left(\sqrt{\frac{W\left(\frac{(d^\eta(\infty))^2}{18q^2 \eta^2 \pi}\right)}{2}}\right),\]
where $W(\cdot)$ is the Lambert W function. 
Because $d^\eta(\infty)$ is linear in $q$ by \cref{homoq} and $d^\eta(\infty)<3q$ by \cref{lem:expanding_focs}, it follows that $36 q^2-12qd^\eta(\infty)>0$ which implies that $\rho^\eta(\infty)>\rho(d^\eta(\infty))$ by the monotonicity of the Lambert W function.

By \cref{lem:dotX}, we know that $\rho^\eta(X)$ increases for $X<\dot{X}=8 cos(\pi/18)/\sqrt{3}$. By \cref{lem:hatXsmallerhatX0}, we know $\rho(\widehat{X}^\eta)>\rho^\eta(\infty)$. By \cref{lem:ordereta}, $\widehat{X}^\eta<\dot{X}$. Thus, $d^\eta(\infty)<\widehat{X}^\eta$ which implies that if $R_t$ expands knowledge, so does $R_{t+1}$. 

The proof of the comparative statics result follows directly from the implicit function theorem and the first-order conditions; see Online \Cref{sec:cs_expanding} for details.

\subsection{Proof of Proposition \ref{prop:moonshots}} 
\label{sec:proof_of_prop:moonshots}

We begin with the negative benchmark results for $\eta \rightarrow \infty$ and $\eta=0$. If $\eta \rightarrow \infty$, research becomes infinitely costly. Hence, $\rho \rightarrow 0$ and $u_R(d^\eta(X);\rho^\eta(X);X) \rightarrow 0$ for any $X$. Thus, absent interventions, research creates no value. Any disclosure by the designer should maximize the immediate payoff $V(d;\infty)$.

If $\eta=0$, $\rho^0(\cdot)=1$ and $u_R(d,\rho^0;X)=V(d;X)$. Thus, each researcher maximizes $V(d;X)$. By construction, maximizing the per-period $V(d;X)$ corresponds to maximizing the long-run objective of the decision-maker. 

For intermediate ranges, it suffices to show that selecting a moonshot of length $6q$ is preferred to selecting the myopically optimal $d=3q$ for some $(\underline{\eta},\overline{\eta})$ and $\delta(\eta)<1$. 

We restrict attention to $\eta$-levels such that $d^\eta(6q)=3q.$ These levels exist due to $\widetilde{X}^0>6q$ by \cref{lem:orderingV}. A moonshot is preferred if and only if
\[\frac{V(6q;\infty) +  \delta\rho^\eta(6q)\Big(V(3q;6q) +\delta  CV\Big)}{1-\delta} \geq \frac{V(3q;\infty) + \delta \rho^\eta(\infty) \Big(V(d^\eta(\infty);\infty) + \delta CV\Big)}{1-\delta},\]
where $CV$ is the (common) continuation value conditional on researchers not being stuck at $t=3$. Because $\rho^\eta(6q)>\rho^\eta(\infty)$ by \cref{prop:substitutes} a sufficient condition for the above is that
\begin{equation}\label{eq:moonshotrelevantineq}
	V(6q;\infty) +  \delta\rho^\eta(6q)V(3q;6q)\geq V(3q;\infty) + \delta \rho^\eta(\infty) V(d^\eta(\infty);\infty)
\end{equation}
which we will now show for some $\eta$.

For the moonshot $d=6q$, we obtain as value in $t=1$, $V(6q;\infty)=2/\sqrt{3}q$, and in $t=2$, $V(3q; 6q)=\left(3-2/\sqrt{3}\right)q$. The success probability in $t=2$ follows from \Cref{prop:substitutes} for deepening research on a research area with $X=6q$ while being on the boundary distance $d(X=6q)=X/2$. 
For the myopic optimum $d=3q$, we obtain as value in $t=1$, $V(3q;\infty)=3/2q$, and in $t=2$, $V(d^\eta(\infty); \infty)=d^\eta(\infty)(1 - d^\eta(\infty)/(6q))$. The distance and success probability in $t=2$ follow from \Cref{prop:substitutes} for expanding knowledge.
Plugging into \eqref{eq:designer_prefers_fill_up};
\[2/\sqrt{3}q  + \delta \rho^\eta(6q)(3-2/\sqrt{3})q \geq 3/2 q + \delta \rho^\eta(\infty)d^\eta(\infty)(1-d^\eta(\infty)/(6q)).\]
By continuity in $\eta$ and $\delta$, it suffices to show that, for $\delta=1$ and some $\eta>0$ this inequality is strict.
Numerically solving for $d^\eta(\infty),\rho^\eta(\infty),\rho^\eta(6q)$ using $(\eta=0.25,q=1$) verifies strict inequality.\footnote{In this case, $\rho^\eta(6q)=0.8457, \rho^\eta(\infty)=0.7174, d^\eta(\infty)=2.3988$.} As $d^\eta(\infty)$ is linear in $q$ (see \cref{homoq}), $\rho^\eta(\infty)$ is constant in $q$. Linearity of distance and invariance of probability in the moonshot are immediate. Thus, restricting attention to $q=1$ is without loss.


\subsection{Proof of Proposition \ref{prop:deepen_knowledge}} 
\label{sub:proof_of_proposition_prop:deepen_knowledge}


\begin{proof}

We first state a relationship between different types of knowledge for the dynamic setting. We begin by defining a (forward-looking) notion of equivalence. 

\begin{definition}[Forward Equivalence of Knowledge]
Knowledge $\mathcal F_t$ and $\mathcal F'_t$ are considered forward equivalent at time $t$, if---absent a disclosure opportunity at time $t$---the expected future generated payoff is the same.
\end{definition}
Note that the current value of $\mathcal F_t$ and $\mathcal F'_t$ need not be the same, but are inconsequential for the designer's decisions moving on.

\begin{lemma}\label{lem:equivalence}
Knowledge $\mathcal F$ is forward equivalent to $\mathcal F_0$ if it is dense, and there have been no failures in the past.
\end{lemma}

\begin{proof}
If $\mathcal F$ is dense, no researcher  ever deepens knowledge inside areas existing under $\mathcal F$ because $X<\widehat{X}^\eta$. No designer would disclose a question to deepen knowledge, because $X<\widehat{X}^0$. Hence behavior and the generated payoffs are the same.
\end{proof}

\begin{lemma}\label{lem:full_up}
Suppose $\mathcal F_t = \{(0,y(0)),(x_1,y(x_1))\}$ with $x_1\in [4q,\min\{\widetilde{X}^\eta,2\widehat{X}^\eta\}]$ and that there was a failure at $x_1/2$. 
Further, assume the designer can disclose one answer at $t+1$ for free and expects no future disclosure. Then, if $(\eta,\delta)$ is such that $\delta \rho^\eta(\infty)>1/2$, the designer strictly prefers to disclose the answer to $x_1/2$ than to $2 x_1$.
\end{lemma}

\begin{proof}
If the designer discloses the answer to $x_1/2$ her continuation payoff is
\[V(x_1/2;x_1) + \sum_{k=1}^\infty (\delta \rho^\eta(\infty))^k V(d^\eta(\infty);\infty)\]

Alternatively, the designer could disclose the answer to question $2x_1$ thereby starting a new research cycle. Her expected payoff from that action is 
\(V(x_1;\infty) + \delta\rho^\eta(x_1) V(x_1/2;x_1)\)
because after the new research cycle is completed, researchers revert back to trying to find the answer to question $x_1/2$ and fail due to symmetry. 

The designer strictly prefers the disclosure of $x_1/2$ if

\begin{equation}\label{eq:designer_prefers_fill_up}
\begin{split}
    &V(\frac{x_1}{2};x_1) + \sum_{k=1}^\infty (\delta \rho^\eta(\infty))^k V(d^\eta(\infty);\infty) >  V(x_1;\infty) + \delta\rho^\eta(\frac{x_1}{2};x_1) V(\frac{x_1}{2};x_1) \\
    \Leftrightarrow &\sum_{k=1}^\infty (\delta \rho^\eta(\infty))^k > \frac{V(x_1;\infty)- (1-\delta \rho^\eta(x_1/2;x_1)) V(x_1/2;x_1)}{V(d^\eta(\infty);\infty)}.
\end{split}
\end{equation}
Notice that $V(x_1;\infty)/V(d^\eta(\infty);\infty)<V(x_1;\infty)/V(2q;\infty)\leq 1$ because $x_1\geq 4q$. As $\delta \rho^\eta(x_1/2;x_1)<1$ and $V(x_1/2;x_1)>0$ an upper bound for the RHS is $1$. Inequality \eqref{eq:designer_prefers_fill_up} holds if \(\delta \rho^\eta(\infty)/(1-\delta \rho^\eta(\infty))\geq 1\),
which is true if $\rho^\eta(\infty)\delta >1/2$.
\end{proof}
Combining \cref{lem:equivalence,lem:full_up} with the fact that payoffs converge to the ``no future disclosure opportunity'' payoffs as $\lambda \rightarrow 0$ yields the proposition.\footnote{Use the parameters $(\eta=0.25,q=1)$ from the proof of \cref{prop:moonshots}, assume  $\delta\geq \rho^\eta({\infty})$ so the environment is promising, and observe that for $\delta=\rho^\eta({\infty})$ a moonshot of around $\approx 5.3$ is optimal. Local continuity implies a non-knife-edge case.}
\end{proof}

\setlength\bibitemsep{1pt}
\begin{refcontext}[sorting=nyt]
\printbibliography
\end{refcontext}

\newpage
\setcounter{page}{1}
\renewcommand*{\thepage}{S.\arabic{page}}

\renewcommand{\partname}{}
\renewcommand{\thepart}{}
\begin{refsection}
\part{Online Appendix} 
\label{sec:omitted_steps_in_the_proofs}

\section{Different Universe of Questions} 
\label{sub:different_universe_of_questions}

Our baseline model assumes that the universe of questions can be represented on the real line. That is, we implicitly assume an order on questions. In this part, we show that all our results extend to a more general question space.

To begin with, consider our baseline model and fix some knowledge $\mathcal F_m$. As described in \cref{sec:model}, knowledge pins down $\mathcal X_k$\textemdash a set composed of (half-)open intervals: bounded intervals $[x_i,x_{i+1})$ of length $X_i$ each, and two unbounded intervals $(-\infty,x_1)$ and $[x_k,\infty)$ of length $\infty$. All our results survive if knowledge partitions the question space into a set of intervals on the real line (possibly of infinite length).

To see this, consider any set $\widehat{ \mathcal{ X}}_m=\widehat{ \mathcal{ X}}_k \cup \widehat{ \mathcal{ X}}_n$ that contains $k+n$ elements: $k \geq 0$ convex-valued and bounded intervals on $\mathbb{R}$ with Euclidean distance between its upper and lower bound, $X_{i\in \widehat{\mathcal{X}}_k}$, and $n>0$ convex-valued but unbounded intervals on $\mathbb{R}$ of infinite length, $X_{i\in \widehat{\mathcal{X}}_n}=\infty$. For any tuple $(d,X)$ with $X \in \widehat{ \mathcal{ X}}_m$ and $d\in[0, X/2]$ all our definitions and expressions for benefits and cost are well-defined regardless of how $\widehat{ \mathcal{ X}}_m$ was generated.

For any given set $\widehat{ \mathcal{ X }}_m$ generated by some existing knowledge $\mathcal{F}_m$, suppose that the truth-generating process $Y$ is such that the answer to question $x$ characterized by $(d,X)$ is normally distributed with a variance of $\sigma^2(d;X)$.\footnote{Note that the dependence of the variance of the conjecture depends solely on $d$ and $X$. Thus, the truth-generating process has to satisfy a Markov property as the Brownian motion on the real line in our main model. Moreover, note that the specification of the expected value of the answer is not relevant for our results as long as it is well-defined given $\mathcal{F}_m$} Then, all of our results continue to hold. 

Using this formulation, it becomes clear which formal requirements we impose on the set of questions: (i) There are no circular paths in the set of questions $\widehat{\mathcal X}_m$, (ii) the set of questions is piecewise convex-valued, (iii) there is at least one unbounded area. One way to interpret these requirements is to assume a forest network in which the set of nodes represents knowledge and each edge represents an area. We augment this network with (at least) one ``frontier''---a standard Wiener process, and define Brownian bridges over each edge of the network.

\Cref{fig:generalX} depicts different question spaces. While the left panel is our baseline setting, the other two provide alternatives, in the middle there are a number of different directions at which we could expand the frontier starting from the origin. In each direction, the truth would be defined by an independent Wiener process starting at $(0, y(0))$. The right panel shows the limit in which the number of directions becomes a continuum. Still, in each direction there is an independent Wiener process governing the truth over this part of the question space such that circles are excluded. 

\begin{figure}[htb]
    \centering
	\subfloat[Baseline]{
        \includestandalone[width=.28\textwidth]{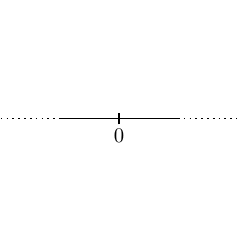}
        }\hfill
    \subfloat[Tree]{
        \includestandalone[width=.28\textwidth]{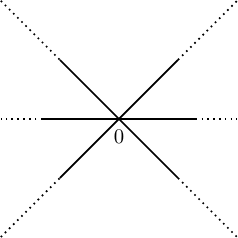}
        }\hfill
    \subfloat[Ocean]{
     \includestandalone[width=.28\textwidth]{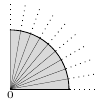}
     }
	\caption{\emph{Different Question Spaces.}}\label{fig:generalX}
\end{figure}

For our static considerations (\cref{sec:the_value_of_increasing_knowledge,sub:selecting_a_research_questions}), all of these models are equivalent because at least one area of infinite length exists at all times allowing the researcher to expand knowledge. Because there are no circles, knowledge partitions the question space into (conditionally) independent segements, just like in the baseline version.

For our dynamic considerations (\cref{sub:research_dynamic}), all models are identical as well as long as we focus on symmetric \emph{pure} strategies. Because researchers ignore previous failures and always pick the same direction, the number of additional directions is not important. If we allow for \emph{mixed} strategies instead, the models differ. For example, when randomizing over the direction from the continuum of directions of the ocean model,  the probability that the researcher draws a direction that a previous generation had selected (and failed at) is zero. 
As we discuss in \cref{sec:final_remarks}, such a model may provide a microfoundation for our assumption of ``non-observed failures.''

We now describe two specific extensions to our baseline setting to illustrate the abstract discussion above.
 
\subsection{General Universe of Questions}

Here, we show a mapping from a model with an $n-$dimensional question space. Suppose that the set of research questions consists of $n$ real lines, $\mathcal{I}=\{I_1, \dots, I_n\}$. In addition, the answers on each of these real lines are determined by a realized path of a one-dimensional standard Brownian motion, such that the truth-generating process is an $n$-dimensional independent Brownian motion $W_z=(W_z^1,...,W_z^n)$.\footnote{Each process starts at an initial point $(0,0)$, has a drift of zero, a variance of one, and independent, normal increments.} Suppose $\mathcal F_{j(i)}^i$ is the finite set of $j(i)$ known realizations of the Brownian path in dimension $i$ and $\mathcal F_k = \cup_{i=1}^n \mathcal F_{j(i)}^i$ is knowledge. As described in \cref{sec:model}, each $\mathcal F_{j(i)}^i$ determines a partition of the domain of $W_z^i$ denoted by $\mathcal X_{j(i)}^i$ with $j(i)+1$ elements. As in the baseline case, the knowledge in dimension $i$ decomposes the dimension-$i$ process into $j(i)-1$ independent Brownian bridges each associated with a length $X^i_l$, $l=\{1,...,j(i)-1\}$ and two independent Brownian motions. Therefore, the union $\mathcal F_k$ determines $k=\left(\sum_{i=1}^n j(i)\right)-n$ independent Brownian bridges of length $X^i_l$ each and $2n$ Brownian motions. By the martingale property of the Brownian motion and the fact that realizations are not directly payoff relevant, the setting is isomorphic to one in which we have $k$ independent \emph{standard} Brownian bridges of length $X_l^i$ each and $2n$ \emph{standard} Brownian motions. Thus, the set $\widehat{\mathcal X}_k=\{X_{l(i)}^i\} \cup \{\infty\}$ is a sufficient statistic to calculate any of the results in the text. However, the set $\widehat{\mathcal X}_k=\{X_{l(i)}^i\} \cup \{\infty\}$ can also be generated with an appropriate realized path of a one-dimensional Brownian motion with a corresponding $\mathcal F_k$. 

\subsection{Seminal Discoveries}
We conclude this part by presenting a model with \emph{seminal discoveries}---discoveries that open new fields of research---that builds on the multidimensional universe of questions described above. For example, Friedrich Miescher's isolation of the ``nuclein'' in 1869 was initially intended to contribute to the study of neutrophils, yet, in addition, it opened up the new and, to a large extent, orthogonal field of DNA biochemistry. 

Formally, consider the following model of the evolution of knowledge. Initially, there is a single field of research $A$ and a single known question-answer pair, $(x_0,y(x_0))=(0,0)$. The set of all questions in field $A$ is known to be one-dimensional and represented by $\mathbb{R}$. The truth is known to be generated by a standard Brownian path $Y$ passing through $(0,0)$. However, with an exogenous probability $p \in [0,1]$ any discovery $(x,y(x))$ is \emph{seminal} and opens a new, independent field of research $B_x$. A seminal discovery is a question-answer pair $(x,y(x))$ that is an element of two independent Brownian paths crossing only at $(x,y(x))$. Thus, upon occurrence, a seminal discovery generates knowledge in multiple dimensions. Because it is a priori unknown whether a discovery is seminal, the payoff from generating knowledge in another dimension is constant in expected terms---it does not influence a researcher's (or designer's) choices. After the seminal discovery, the updated model of truth and knowledge is the one described above with the multi-dimensional universe of questions. As we argued above, that model can, in turn, be mapped into our baseline. The special case with $p=0$ is our baseline model.

It should become clear from our discussion that even the case in which the probability of a seminal discovery depends on the question is qualitatively similar to what we discuss in the baseline model. The quantitative differences in such a model come from the fact that questions which are likely to be a seminal discovery are more attractive to address for all parties involved.


\section[Comparison to Kuhn]{Comparison to \citet{kuhn2012structure}}

In this part, we compare our model and findings in detail to the work of \citet{kuhn2012structure}. We demonstrate which aspects we cover and where we differ from his seminal ideas.

\paragraph{Similarities.} Kuhn himself claims that the \begin{quote}
    paradigm as shared example[s] is the central element of the most novel and least understood aspect of `The Structure of Scientific Revolutions.' \begin{flushright}
        \citep[p. 187]{kuhn2012structure}.
    \end{flushright}
\end{quote} 
Our concept of ``conjectures'' based on existing knowledge offers an economic interpretation of this idea. The paradigm in a specific field is determined by discoveries (that constitute knowledge in our model) and their implications (which follow from the conjectures derived from knowledge). The discoveries and derived conjectures in our model provide comprehensive information on how society and researchers approach questions.

Furthermore, we believe that our framework and the reoccuring phases of expanding research in our model (see \Cref{prop:laissez-fair}) closely resemble an economic approach to \citet{kuhn2012structure}'s idea of \emph{normal science} (described in \citet{kuhn2012structure}, Chapter 3). Normal science in the Kuhnian sense consists of researchers solving puzzles in the context of a given paradigm. Similarly, researchers in our model build on existing knowledge (and the truth following a Brownian motion) to form conjectures about the location of answers and search for answers where they expect them to be. 
Whenever a researcher finds an answer, she finds it close to where it was expected to be. 


Our analysis of this model then gives rise to a formalized economic theory that explains ``how little [researchers] aim to produce major novelties, conceptual or phenomenal'' \citep[p. 35]{kuhn2012structure}. 


\paragraph{Differences.} Much in the spirit of \citet{kuhn2012structure}, there are times at which normal science in our model fails to advance knowledge: The Brownian motion eventually takes an unexpected turn, and researchers will fail to recognize them when expanding knowledge step-by-step building on their conjectures. \citet{kuhn2012structure}'s somewhat radical idea then is that anomalies \citep[Chapter 6]{kuhn2012structure} pile up to a crisis \citep[Chapter 7]{kuhn2012structure} in which normal science desperately tries to connect seemingly contradictory information to an old model of the world. At least in our baseline model, such chaos is absent. Instead, while research produces discoveries during the step-by-step expansion of knowledge, conjectures shift only gradually because the discoveries realize within the search intervals chosen by the researcher, and therefore, close to where they were expected. This gradual revision of conjectures appears closer to \citet{toulmin1970does}'s evolutionary model of science, with constant adaptation and revision of theories. 
When a researcher fails finding an answer building on her conjecture, science gets stuck as researchers keep repeating the same mistakes applying the old, misleading conjecture. Only some exogenous shock (for example, a moonshot or a serendipitous discovery) can take them out of their misery by providing new guidance. Such exogenous shocks may be closer to what \citet{kuhn2012structure} has in mind as it follows a period of little progress and sparks a sudden appearance of highly productive research. Thus, while \citet{kuhn2012structure} depicts ``revolutions'' as settings where researchers go out of their way and wildly experiment with no discipline to uncover the new paradigm, such phases in our setting are either directed (through moonshots) or happen by chance when normal science seizes to advance knowledge.




\paragraph{Summary.} Our model can capture many of the observations \citet{kuhn2012structure} makes. Yet, there are important differences in the mechanics: While \citet{kuhn2012structure} diagnoses that ``[w]ithout commitment to a paradigm there could be no normal science'' \citep[p.100]{kuhn2012structure}, our paradigms evolve as researchers go along. While our findings are in line with \citet{kuhn2012structure}'s statement that ``surprises, anomalies, or crises \ldots are just the signposts that point the way to extraordinary science,” (p.101), our framework does not describe that researchers respond to crisis by ``searching at random, trying experiments just to see what will happen'' \citep[p.87]{kuhn2012structure}. 

In our world, it is not that there is a missing link in the set of problems to cover, but a missing connection to the problems down the line that needs to be discovered. 
In \citet{kuhn2012structure}'s world, crisis is the (endogenous) driver behind radical change. He claims that:

\begin{quote}
Confronted with anomaly or with crisis, scientists take a different attitude toward existing paradigms, and the nature of their research changes accordingly.\begin{flushright}
    \citet[p. 90f]{kuhn2012structure}.
\end{flushright}
\end{quote} 

Because the researcher is freed from paradigmatic discipline when in crises, his mind evolves more freely, yet the resolution is then modeled as the ``stroke of genius'' or, as he puts it:

\begin{quote}
   More often no such structure is consciously seen in advance. Instead, the new paradigm, or a sufficient hint to permit later articulation, emerges all at once, sometimes in the middle of the night, in the mind of a man deeply immersed in crisis. What the nature of that final stage is—-how an individual invents (or finds he has invented) a new way of giving order to data now all assembled—-must here remain inscrutable and may be permanently so. \begin{flushright}
    \citep[p. 89f]{kuhn2012structure}.
   \end{flushright}
\end{quote}

While the consequences of a stroke of genius are similar in our model to what we believe \citet{kuhn2012structure} has in mind, we do not model the crisis-plagued researcher. Partially, that is because it remains also opaque within \citet{kuhn2012structure} how such thinking comes about. Instead, we ask whether and how well-designed funding institutions (absent in \citet{kuhn2012structure}) can help to start research cycles that reduce the risk of crisis and the need for a genius. Here, we connect to \citet{merton1957priorities} or \citet{PARTHA1994487} and take the general freedom of scientists as given, but also acknowledge that they respond to incentives \citep[see, e.g.,][]{myers2020elasticity}, a notion completely absent in \citet{kuhn2012structure}. 

\section{The Cost of Research and Proof of Lemma \ref{lem:cost}} 
\label{sec:the_cost_of_research}

In this section, we provide a detailed derivation of the cost of research. \Cref{lem:cost} is a corollary to the results we obtain. The cost implies an endogenous measure of the productivity of research. We model research as sampling a set of candidate answers to question $x$ with the goal of discovering the actual answer, $y(x)$.

Formally, we assume that, conditional on a question $x$, the sampling decision consists of selecting an interval $[a,b] \in \mathbb{R}$. If the true answer lies inside the chosen interval, such that $y(x) \in [a,b]$, research succeeds and a discovery is made. If $y(x) \notin [a,b]$, research fails and no discovery is made. Thus, the choice of the research interval entails an ex-ante probability of successful research. Restricting the sampling decision to a single interval $[a,b]$ comes without loss for our purposes, as conjectures $G_x(y|\mathcal F_k)$ follow a normal distribution.

We now characterize the cost of research in terms of the three variables of interest: the research area, $X$, the novelty of the question, $d$, and the expected output, $\rho$. 

We begin by defining a prediction interval and characterize it based on the conjecture $G_x(y|\mathcal F_k)$.

\begin{definition}[Prediction Interval]
The prediction interval $\alpha(x,\rho)$ is the shortest interval $[a,b] \subseteq \mathbb{R}$ such that the answer to question $x$ is in the interval $[a,b]$ with probability $\rho$.
\end{definition}


\begin{figure}\centering
\includestandalone[width=.45\textwidth]{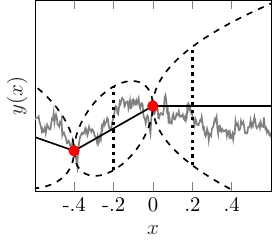}
\caption{\emph{Cost of research and interference.} \footnotesize{The dotted vertical lines represent the 95\% prediction intervals for the answers to questions $x=-0.2$ and $x'=0.2$, assuming the answer to questions $0$ and $-0.4$ are known. 
}}\label{fig:pred_int}
\end{figure}

\begin{proposition}\label{lem:pred_interval}
Suppose $\alpha(x,\rho)$ is the prediction interval for probability $\rho$ and question $x$ when answer $y(x)$ is normally distributed with mean $\mu$ and standard deviation $\sigma$. Then, any prediction interval has the following two features:
\begin{enumerate}
	\item The interval is centered around $\mu$.
	\item The length of the prediction interval is $2^{3/2} \operatorname{erf}^{-1}(\rho) \sigma$, where $\operatorname{erf}^{-1}$ is the inverse of the Gaussian error function.
\end{enumerate} 
\end{proposition}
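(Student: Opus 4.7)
The plan is to establish the two claims in turn, using only standard properties of the univariate normal density $\phi$ of $\mathcal{N}(\mu,\sigma^2)$. Begin with the centering claim. Suppose $[a,b]$ is a length-minimising interval subject to $\int_a^b\phi(y)\,dy=\rho$. I would first argue that $\mu\in[a,b]$: otherwise the interval lies in a tail where $\phi$ is strictly monotone, and sliding $[a,b]$ toward $\mu$ strictly reduces the length required to capture mass $\rho$, contradicting minimality. Next, I would show $\phi(a)=\phi(b)$; otherwise, if $\phi(a)>\phi(b)$, a perturbation that pushes $b$ inward by $\varepsilon$ and $a$ inward by $\varepsilon\,\phi(b)/\phi(a)<\varepsilon$ preserves the mass $\rho$ to first order while strictly shrinking $b-a$, again contradicting minimality. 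By strict unimodality together with symmetry of $\phi$ about $\mu$, the equality $\phi(a)=\phi(b)$ with $\mu\in[a,b]$ forces $a=\mu-t$ and $b=\mu+t$ for some $t\geq 0$.

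For the length claim, writing the interval as $[\mu-t,\mu+t]$ and applying the substitution $z=(y-\mu)/(\sqrt{2}\,\sigma)$ gives
\begin{equation*}
\rho \;=\; \int_{\mu-t}^{\mu+t}\frac{1}{\sqrt{2\pi}\,\sigma}\,e^{-(y-\mu)^2/(2\sigma^2)}\,dy \;=\; \frac{2}{\sqrt{\pi}}\int_{0}^{t/(\sqrt{2}\,\sigma)}e^{-z^{2}}\,dz \;=\; erf\!\left(\frac{t}{\sqrt{2}\,\sigma}\right).
\end{equation*}
Inverting yields $t=\sqrt{2}\,\sigma\,erf^{-1}(\rho)$, so the length of the prediction interval is $2t=2^{3/2}\sigma\,erf^{-1}(\rho)$, exactly as claimed.

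The main obstacle---a minor one---is making the centering argument fully rigorous rather than treating it as a classical fact. Existence of a length-minimiser over the one-parameter family of intervals of given mass follows from a routine compactness argument once one parametrises an interval by its left endpoint (with $b=b(a)$ determined implicitly by the mass constraint). The substantive ingredient is the variational comparison ruling out non-symmetric intervals, which relies crucially on the strict unimodality of $\phi$ on each side of $\mu$. Once symmetry is established, the length computation is purely mechanical, and no property specific to the value of $\mu$ or $\sigma$ beyond standardisation is invoked.
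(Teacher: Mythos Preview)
Your argument is sound and parallels the paper's, which for the centring step simply invokes symmetry and unimodality of the normal density in one sentence, and for the length step computes via the CDF identity $\Phi(z)=\tfrac{1}{2}\bigl(1+erf((z-\mu)/(\sigma\sqrt{2}))\bigr)$ rather than the direct integral substitution you use; the two computations are equivalent. Your variational treatment of the centring claim is more explicit than the paper's terse assertion, which is a reasonable way to make that step rigorous.

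One correction: your perturbation for the case $\phi(a)>\phi(b)$ is mis-stated. Pushing \emph{both} endpoints inward loses mass at both ends and cannot preserve $\rho$. The intended move is to push $b$ inward by $\varepsilon$ while pushing $a$ \emph{outward} by $\varepsilon\,\phi(b)/\phi(a)$; then the first-order mass change is $-\phi(b)\,\varepsilon+\phi(a)\cdot\varepsilon\,\phi(b)/\phi(a)=0$, and the length changes by $-\varepsilon+\varepsilon\,\phi(b)/\phi(a)=-\varepsilon\bigl(1-\phi(b)/\phi(a)\bigr)<0$. With this fix the argument goes through.
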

\begin{proof}
The normal distribution is symmetric around the mean with a density decreasing in both directions starting from the mean. It follows directly that the smallest interval that contains the realization with a particular likelihood is centered around the mean. 

Take an interval $[z_l,z_r]$ of length $Z<\infty$ that is symmetric around the mean $\mu$ and let it be such that it contains a total mass of $\rho<1$ in the interval. Then, a probability mass of $(1-\rho)/2$ lies to the left of the interval by symmetry of the normal distribution. Moreover, the left bound $z_l$ of the interval has (by symmetry of the interval around the mean $\mu$) a distance $\mu-Z/2$ from the mean. From the properties of the normal distribution, \[\Phi(z_l) = 1/2 \left(1+\operatorname{erf}\left(\frac{z_l-\mu}{\sigma \sqrt{2}}\right)\right)=1/2 \left(1+\operatorname{erf}\left(\frac{-Z/2}{\sigma \sqrt{2}}\right)\right).\]
Solving, using the symmetry of $\operatorname{erf}$, yields
\[1/2 \left(1-\operatorname{erf}\left(\frac{Z}{\sigma 2^{3/2}}\right)\right)= \frac{1- \rho}{2} \: \Leftrightarrow \: \operatorname{erf}\left(\frac{Z}{\sigma 2^{3/2}}\right) = \rho 
\Leftrightarrow  Z= 2^{3/2} \operatorname{erf}^{-1}(\rho) \sigma.\qedhere\]
\end{proof}

\Cref{fig:pred_int} indicates that the prediction interval depends on the location of the question. Two questions with the same distance from existing knowledge (that is, distance from question $x=0$) have different 95\% prediction intervals depending on whether research deepens knowledge or expands it. That difference translates into different costs. 

\cref{lem:pred_interval} implies that if the cost function is homogeneous of any degree in interval length $(b-a)$, we can represent it with an alternative cost function proportional to $c(\rho,d,X)$ that is multiplicatively separable in $(d,X)$ and $\rho$ without having to keep track of the exact location of the search interval $[a,b]$, which proves to be convenient.

It also implies that, fixing $\rho$, the changes in the cost with respect to distance $d$ and area length $X$ vary in their effect on $\sigma(d;X)$ only. Similarly, holding distance and area length constant, changes in $\rho$ translate into cost changes according to a function of $\operatorname{erf}^{-1}(\rho)$\textemdash a convex increasing function. 

\cref{lem:pred_interval} intuitively links the cost of research effort to the probability of a discovery. Because the inverse error function is increasing and convex, the cost of finding an answer with probability $\rho$ is increasing and convex in $\rho$. 

In the paper we assume the cost to be proportional to $(a-b)^2$. As should be clear from \cref{lem:pred_interval}, the quadratic formulation is for convenience only. What matters for our results qualitatively is that the cost is (i) homogeneous, (ii) increasing, and (iii) convex in the sampling interval $(a-b)$. Under the quadratic assumption, the cost function is characterized by a simple corollary to \cref{lem:pred_interval}: \cref{lem:cost}. 

\begin{samepage}
\begin{corollary}
For knowledge $\mathcal{F}_k$, probability $\rho$, and question $x$, the minimal cost of obtaining an answer to question $x$ with probability $\rho$ is proportional to
\(c(\rho,d;X)= \tilde{c}(\rho) \sigma^2(d;X).\)
\end{corollary}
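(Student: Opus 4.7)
The plan is to derive the corollary as an almost immediate consequence of \cref{lem:pred_interval} combined with the imposed quadratic sampling cost $(a-b)^2$. The key observation is that to achieve success probability $\rho$ at question $x$ with minimum cost, the researcher must choose the shortest interval $[a,b]$ whose probability mass under the conjecture $G_x(y|\mathcal F_k)$ equals $\rho$: any longer interval with the same probability content is (weakly) dominated because the cost is strictly increasing in length, and any interval of different shape but length less than the prediction interval cannot contain mass $\rho$ by definition of $\alpha(x,\rho)$.

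First I would invoke \cref{lem:pred_interval}, which characterizes the minimum-length interval containing mass $\rho$ as the one centered at $\mu_x(Y|\mathcal F_k)$ with length
\[\ell(\rho,x) = 2^{3/2}\, erf^{-1}(\rho)\, \sigma_x(Y|\mathcal F_k).\]
Since the standard deviation of the conjecture depends on the question $x$ only through its distance $d = d(x)$ and the length $X$ of the area containing $x$, we can write $\sigma_x(Y|\mathcal F_k) = \sigma(d;X)$, with $\sigma^2(d;X)$ as introduced in \cref{def:variance}. Thus the minimum sampling length depends on $(d;X)$ only through $\sigma(d;X)$ and on $\rho$ only through $erf^{-1}(\rho)$.

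Next I would square this length and apply the quadratic cost $(a-b)^2$, yielding
\[
\ell(\rho,x)^{2} = 8\,(erf^{-1}(\rho))^{2}\,\sigma^{2}(d;X) = 8\,\tilde c(\rho)\,\sigma^{2}(d;X),
\]
using the definition $\tilde c(\rho) = (erf^{-1}(\rho))^{2}$. The factor $8$ is an innocuous normalization constant, so the minimum cost is proportional to $\tilde c(\rho)\,\sigma^{2}(d;X)$, which is exactly the claim $c(\rho,d;X) = \tilde c(\rho)\,\sigma^{2}(d;X)$.

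There is no real obstacle: the multiplicative separability in $\rho$ and $(d;X)$ is built into \cref{lem:pred_interval} (the prediction interval decomposes into an $erf^{-1}(\rho)$ factor times a $\sigma$ factor), and squaring preserves that separation. The only point I would flag explicitly is that the restriction to \emph{single} intervals $[a,b]$ is without loss: because the conjecture is unimodal normal, any measurable set of mass $\rho$ has total Lebesgue measure at least $\ell(\rho,x)$, with equality only for the centered symmetric interval—so allowing unions of intervals would not reduce the minimal sampled length and hence not reduce the minimal cost under any monotone increasing cost of total sampled measure. This justifies restricting attention to single intervals throughout, and completes the derivation.
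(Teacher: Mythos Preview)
Your proposal is correct and takes essentially the same approach as the paper: the paper treats this corollary as an immediate consequence of \cref{lem:pred_interval} together with the quadratic sampling cost $(a-b)^2$, without spelling out the details, and you have simply filled in the computation (squaring the prediction-interval length $2^{3/2}\,erf^{-1}(\rho)\,\sigma$) and the justification for restricting to single intervals that the paper leaves implicit.
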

\end{samepage}


\section[Properties of c(rho)]{Properties of $\tilde{c}$} 
\label{sec:Notation}

\paragraph{Summary.} The function $\tilde{c}(\rho)$ is convex and increasing on $[0,1)$ with $\tilde{c}(0)=0$ and $\lim_{\rho \rightarrow 1}\tilde{c}(\rho)=\infty$.\footnote{Due to this limit and the researcher's ability to choose $\rho=1$, we augment the support of the cost function to include $\rho=1$ with $\tilde{c}(1)=\infty$. However, the optimal $\rho$ is always strictly interior unless the cost parameter $\eta$ is chosen to be zero in which case we assume that $\eta \tilde{c}(\rho=1)=0$.}
The derivative
\(\tilde{c}_\rho(\rho)=\sqrt{\pi} \operatorname{erf}^{-1}(\rho) e^{\tilde{c}(\rho)}\)
is increasing and convex with the same limits.

We use that, for $\rho \in (0,1)$, $\tilde{c}(\rho)$ has a convex and increasing elasticity bounded below by $2$ and unbounded above. Its derivative $\tilde{c}_\rho(\rho)$ has an increasing elasticity bounded below by $1$ and unbounded above. We want to emphasize that these properties are not special to our quadratic cost assumption. To the contrary, $\operatorname{erf}^{-1}(x)^k$ for any $k\geq 2$ admits similar properties with only the lower bounds changing. The following properties are invoked in the proofs:

\noindent \begin{minipage}{0.45\linewidth}
\begin{align*}
\rho \frac{\tilde{c}_{ \rho}(\rho)}{\tilde{c}(\rho)} &\in (2,\infty) \text{ and increasing,} \\
\rho \frac{\tilde{c}_{\rho \rho}(\rho)}{\tilde{c}_\rho(\rho)} &\in (1,\infty) \text{ and increasing,}
\end{align*}
\end{minipage}
\begin{minipage}{0.45\linewidth}
\begin{align*}
\rho \tilde{c}_\rho(\rho) -\tilde{c}(\rho) &\in (0,\infty) \text{ and increasing,}\\
\tilde{c}_\rho^{-1}(x)&= \operatorname{erf}\left(\sqrt{\frac{W(2x^2/\pi)}{2}}\right).
\end{align*}
\end{minipage}

with $W(\cdot)$ the principal branch of the Lambert W function. We formally prove the properties that do not directly follow from the definition of the inverse of the error function below.

\subsection[Properties of c(rho)]{Proofs of properties of $\tilde{c}(\rho)$}

Here, we provide the formal proofs. To simplify notation, we suppress the argument $\rho$ and denote the inverse error function by $\iota:=\operatorname{erf}^{-1}(\rho)$.
\begin{lemma}\label{lem:inv_err_properties}
	The derivatives of the inverse error function satisfy (i) $\frac{d}{d\rho} \iota  = \frac{1}{2}\sqrt{\pi}e^{\left(\iota^2\right)}$, (ii) $\frac{d^2}{d\rho^2}\iota  = 2 \iota \iota'^2$, and (iii) $\frac{d^3}{d\rho^3} \iota  = 2\iota'^3\left(1+4\iota^2\right)$.
\end{lemma}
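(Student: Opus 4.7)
The plan is a straightforward application of the inverse function theorem followed by two rounds of chain-rule differentiation. All three claims are purely computational, with no genuine obstacle beyond careful bookkeeping.

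First I would derive (1). Starting from the definition $\mathrm{erf}(x) = \tfrac{2}{\sqrt{\pi}}\int_0^x e^{-t^2}\,dt$, so that $\mathrm{erf}'(x)=\tfrac{2}{\sqrt{\pi}}e^{-x^2}$, the inverse function theorem applied to the identity $\mathrm{erf}(\iota(\rho))=\rho$ gives
\[
\iota'(\rho) \;=\; \frac{1}{\mathrm{erf}'(\iota)} \;=\; \frac{\sqrt{\pi}}{2}\, e^{\iota^2},
\]
which is exactly (1).

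Next I would obtain (2) by differentiating (1) in $\rho$ via the chain rule:
\[
\iota''(\rho) \;=\; \frac{\sqrt{\pi}}{2}\,e^{\iota^2}\cdot 2\iota\cdot\iota'(\rho) \;=\; \iota'\cdot 2\iota\,\iota' \;=\; 2\iota\,\iota'^{\,2},
\]
using (1) to recognise $\tfrac{\sqrt{\pi}}{2}e^{\iota^2}$ as $\iota'$.

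Finally, for (3), I would differentiate (2):
\[
\iota'''(\rho) \;=\; 2\iota'\cdot\iota'^{\,2} + 2\iota\cdot 2\iota'\iota'' \;=\; 2\iota'^{\,3} + 4\iota\,\iota'\,\iota''.
\]
Substituting $\iota''=2\iota\,\iota'^{\,2}$ from (2) yields
\[
\iota'''(\rho) \;=\; 2\iota'^{\,3} + 8\iota^2\,\iota'^{\,3} \;=\; 2\iota'^{\,3}\bigl(1+4\iota^2\bigr),
\]
which is (3). The only things to watch are keeping the chain-rule factor $\iota'$ in each differentiation and using (1) and (2) as intermediate simplifications so that the final expressions are stated in terms of $\iota$ and $\iota'$ only, as the lemma requires.
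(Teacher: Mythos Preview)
Your proposal is correct. The paper's own proof of this lemma consists solely of a citation to \citet{Dominici2008}, whereas you give a direct, self-contained derivation via the inverse function theorem and two applications of the chain rule. Your approach is more elementary and avoids the external reference; the paper's approach is simply shorter on the page. Both are perfectly valid for a result that is, as you note, purely computational.
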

\begin{proof}
	See \citet{Dominici2008}.
\end{proof}

\begin{lemma}\label{lem:helpful_limit} (i) $\lim_{\rho\rightarrow 0} \rho \frac{\iota'}{\iota}=1$, (ii) $\lim_{\rho\rightarrow 1} \rho \frac{\iota'}{\iota}=\infty$, (iii) $\lim_{\rho\rightarrow 0} \frac{d}{d\rho}\left(\rho \frac{\iota'}{\iota} \right) = 0$, and (iv) $\lim_{\rho \rightarrow 0}\frac{d^2}{d\rho^2}\left(\rho \frac{\iota'}{\iota} \right) = \frac{\pi}{3}$.
\end{lemma}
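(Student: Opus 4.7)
} The plan is to reduce every limit to a power-series computation by exploiting the inverse relationship $\rho = \operatorname{erf}(\iota)$ and the explicit derivative $\iota' = \tfrac{\sqrt{\pi}}{2}e^{\iota^{2}}$ provided by \cref{lem:inv_err_properties}.

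For item 1, I would factor $\rho \tfrac{\iota'}{\iota} = \tfrac{\rho}{\iota}\cdot \iota'$. As $\rho \to 0$ we have $\iota \to 0$, so $\tfrac{\rho}{\iota} = \tfrac{\operatorname{erf}(\iota)}{\iota} \to \operatorname{erf}'(0) = \tfrac{2}{\sqrt{\pi}}$ while $\iota' \to \tfrac{\sqrt{\pi}}{2}e^{0} = \tfrac{\sqrt{\pi}}{2}$; the product equals $1$. For item 2, as $\rho \to 1$ we have $\iota \to \infty$, and $\tfrac{\iota'}{\iota} = \tfrac{\sqrt{\pi}}{2}\cdot\tfrac{e^{\iota^{2}}}{\iota}$ diverges because the exponential dominates any polynomial; since $\rho \to 1$, the product diverges as well.

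For items 3 and 4 the plan is to establish the full expansion $\rho\tfrac{\iota'}{\iota} = 1 + \tfrac{\pi}{6}\rho^{2} + O(\rho^{4})$ around $\rho = 0$, from which the two limits drop out by differentiating term by term. Starting from $\rho\tfrac{\iota'}{\iota} = \tfrac{\sqrt{\pi}}{2}\,e^{\iota^{2}}\,\tfrac{\operatorname{erf}(\iota)}{\iota}$, I would substitute the standard series $\operatorname{erf}(\iota) = \tfrac{2}{\sqrt{\pi}}\bigl(\iota - \tfrac{\iota^{3}}{3} + \tfrac{\iota^{5}}{10} - \cdots\bigr)$ and $e^{\iota^{2}} = 1 + \iota^{2} + \tfrac{\iota^{4}}{2}+\cdots$ and multiply to obtain $\rho\tfrac{\iota'}{\iota} = 1 + \tfrac{2}{3}\iota^{2} + O(\iota^{4})$. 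Inverting $\rho = \operatorname{erf}(\iota)$ near zero yields $\iota = \tfrac{\sqrt{\pi}}{2}\rho + O(\rho^{3})$, so $\iota^{2} = \tfrac{\pi}{4}\rho^{2} + O(\rho^{4})$ and the claimed expansion in $\rho$ follows. Differentiating once gives $\tfrac{d}{d\rho}\bigl(\rho\tfrac{\iota'}{\iota}\bigr) = \tfrac{\pi}{3}\rho + O(\rho^{3}) \to 0$, and differentiating twice gives $\tfrac{d^{2}}{d\rho^{2}}\bigl(\rho\tfrac{\iota'}{\iota}\bigr) = \tfrac{\pi}{3} + O(\rho^{2}) \to \tfrac{\pi}{3}$.

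No step presents a conceptual obstacle; the only real work is careful bookkeeping of the $\rho^{2}$ coefficient, because both factors $e^{\iota^{2}}$ and $\operatorname{erf}(\iota)/\iota$ contribute at that order, and because the composition with $\iota(\rho)$ must be tracked to second order in $\rho$. A useful sanity check along the way is to verify that items 1 and 3 are consistent with the constant and linear behavior extracted from the expansion.
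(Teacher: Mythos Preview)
Your proposal is correct but takes a genuinely different route from the paper's argument. The paper proves all four items by repeated application of L'H\^opital's rule together with the derivative identities $\iota'' = 2\iota\iota'^{2}$ and $\iota''' = 2\iota'^{3}(1+4\iota^{2})$ from \cref{lem:inv_err_properties}; item~4 in particular requires several nested L'H\^opital steps and careful algebraic cancellation. You instead observe that $\rho\,\iota'/\iota = \tfrac{\sqrt{\pi}}{2}e^{\iota^{2}}\cdot\operatorname{erf}(\iota)/\iota$ is analytic at $\rho=0$, compute its Taylor expansion to order $\rho^{2}$ via the known series for $e^{\iota^{2}}$, $\operatorname{erf}(\iota)/\iota$, and $\iota(\rho)$, and then read off all four limits at once. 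Your approach is shorter and yields more information (the full second-order behaviour rather than three separate limits), while the paper's approach avoids any appeal to analyticity or series inversion and stays entirely within elementary differentiation. The one point worth making explicit in your write-up is the justification for differentiating the expansion term by term: since $\operatorname{erf}^{-1}$ is analytic at $0$ with $\iota'(0)\neq 0$, the function $\rho\,\iota'/\iota$ extends analytically across $\rho=0$, so its Maclaurin series converges on a neighbourhood and termwise differentiation is legitimate.
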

	\begin{proof}
	We will make use of L'H\^opital's rule and the properties from \Cref{lem:inv_err_properties}.

	\noindent (i) follows from 
	\begin{align*}
		 \lim_{\rho\downarrow 0} \rho \frac{\iota'}{\iota} &= \lim_{\rho\downarrow 0} \frac{\iota'+\rho  \iota''}{\iota'} =\lim_{\rho\downarrow 0} \frac{\iota'+2\rho \iota \iota'^2}{\iota'} =\lim_{\rho\downarrow 0} (1+\rho \iota \iota') =1.
	\end{align*}
	(ii) follows from
		\begin{align*}
		 \lim_{\rho\uparrow 1} \rho \frac{\iota'}{\iota} &= \lim_{\rho\uparrow 1} \frac{\iota'+\rho  \iota''}{\iota'} =\lim_{\rho\uparrow 1} \frac{\iota'+2\rho \iota \iota'^2}{\iota'} =\lim_{\rho\uparrow 1} (1+2\rho \iota \iota') =\infty.
	\end{align*}
	(iii) follows from 
		\begin{align*}
		\lim_{\rho\rightarrow 0} \frac{d}{d\rho}\left(\rho \frac{\iota'}{\iota} \right) &= \lim_{\rho\rightarrow 0}\frac{\iota'}{\iota}\left(1-\rho \frac{\iota'}{\iota}\right) +\lim_{\rho\rightarrow 0} \rho \frac{\iota''}{\iota} = \underbrace{\lim_{\rho\rightarrow 0}\iota'}_{=\sqrt{\pi}/2} \lim_{\rho\rightarrow 0}\frac{\iota-\rho \iota'}{\iota^2} +\underbrace{\lim_{\rho\rightarrow 0} 2\rho \iota'^2}_{ = 0} \\ & =-\lim_{\rho\rightarrow 0}\frac{\sqrt{\pi}}{2}  \frac{\rho \iota''}{2 \iota \iota'} = - \lim_{\rho\rightarrow 0}\frac{\sqrt{\pi}}{2}  \frac{\rho \iota (\iota')^2}{2 \iota \iota'} =-\lim_{\rho\rightarrow 0} \frac{\sqrt{\pi}}{2}  \rho \iota' = 0.
	\end{align*}
	(iv) follows from\footnote{To arrive at the first line let $\lambda:=\iota'/\iota$ and observe that $(\rho \lambda)''= (\lambda+ \rho \lambda')' = 2 \lambda' + \rho \lambda''$ and $\lambda' = 2(\iota')^2 - \lambda^2$ which implies $\lambda''= 4 \iota' \iota'' - 2 \lambda \lambda'$.}
	\begin{align*}
		\lim_{\rho \rightarrow 0}\frac{d^2}{d\rho^2}\left(\rho \frac{\iota'}{\iota} \right) &=\lim_{\rho \rightarrow 0} 2\frac{\iota'' \iota-\iota'^2}{\iota^2}\left(1-\rho\frac{\iota'}{\iota}\right) +\underbrace{\lim_{\rho \rightarrow 0} 4 \rho \underbrace{\iota' \iota''}_{=2(\iota')^3 \iota}}_{=0} \\ & =\lim_{\rho \rightarrow 0} 2\frac{\iota'' \iota-\iota'^2}{\iota^2}\left(1-\rho\frac{\iota'}{\iota}\right)=\lim_{\rho \rightarrow 0} 2\frac{\iota'' \iota}{\iota^2}\left(1-\rho\frac{\iota'}{\iota}\right) -2\lim_{\rho \rightarrow 0}\frac{\iota'^2}{\iota^2}\left(1-\rho\frac{\iota'}{\iota}\right) \\ & =\underbrace{\lim_{\rho \rightarrow 0} 4 \iota'^2 \left(1-\rho\frac{\iota'}{\iota}\right)}_{=0} -2\lim_{\rho \rightarrow 0}\frac{\iota'^2}{\iota^2}\left(1-\rho\frac{\iota'}{\iota}\right) =-2\lim_{\rho \rightarrow 0}\left(\rho\frac{\iota'}{\iota}\right)^2 \frac{\iota-\rho\iota'}{\rho^2\iota} \\ &=2\lim_{\rho \rightarrow 0} \frac{\rho\iota''}{2\rho \iota+\rho^2\iota'} =4\lim_{\rho \rightarrow 0} \frac{\iota'^2}{2+\rho\frac{\iota'}{\iota}} =\frac{4}{3}\lim_{\rho \rightarrow 0} \iota'^2= \frac{\pi}{3}. \quad \qedhere
	\end{align*}
	\end{proof}
	
\begin{lemma}\label{lem:c_properties}
The following hold: (i) For all $\rho\in(0,1)$, $\frac{d}{d\rho}\left(\rho\tilde{c}_\rho(\rho)-\tilde{c}(\rho)\right)>0$. (ii) For all $\rho\in(0,1)$, $\rho\tilde{c}_\rho(\rho)-\tilde{c}(\rho)>0$. (iii) $\lim_{\rho \rightarrow 0} \rho \frac{\tilde{c}_\rho(\rho)}{\tilde{c}(\rho)}=2$. (iv) $\lim_{\rho \rightarrow 1} \rho \frac{\tilde{c}_\rho(\rho)}{\tilde{c}(\rho)}=\infty$.
\end{lemma}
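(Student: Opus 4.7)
The plan is to reduce all four claims to identities in $\iota := \mathrm{erf}^{-1}(\rho)$ and then invoke the derivative formulas of \cref{lem:inv_err_properties} and the limits collected in \cref{lem:helpful_limit}. The two key substitutions I will use throughout are $\tilde{c}(\rho) = \iota^2$, $\tilde{c}_\rho(\rho) = 2\iota \iota'$, and $\iota'' = 2\iota (\iota')^2$.

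For items 3 and 4, I would first observe that
\[
\rho\,\frac{\tilde{c}_\rho(\rho)}{\tilde{c}(\rho)} \;=\; \rho\,\frac{2\iota\iota'}{\iota^2} \;=\; 2\,\rho\,\frac{\iota'}{\iota},
\]
so that the two desired limits $\lim_{\rho\to 0} 2\rho \iota'/\iota = 2$ and $\lim_{\rho\to 1} 2\rho \iota'/\iota = \infty$ are immediate consequences of items 1 and 2 of \cref{lem:helpful_limit}, respectively. This part is essentially bookkeeping.

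For item 1, I would differentiate $f(\rho) := \rho\tilde{c}_\rho(\rho) - \tilde{c}(\rho) = 2\rho\iota\iota' - \iota^2$ directly. Using $(\iota^2)' = 2\iota\iota'$, the $2\iota\iota'$ terms cancel and one gets
\[
f'(\rho) = 2\rho(\iota')^2 + 2\rho\iota\iota''.
\]
Substituting $\iota'' = 2\iota(\iota')^2$ from \cref{lem:inv_err_properties} then yields $f'(\rho) = 2\rho(\iota')^2\bigl(1+2\iota^2\bigr)$, which is strictly positive for $\rho\in(0,1)$ since $\iota'>0$. Item 2 then drops out as a corollary: since $\iota(0)=0$ we have $f(0)=0$, and strict monotonicity from item 1 upgrades this to $f(\rho)>0$ for $\rho\in(0,1)$.

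There is no real obstacle here; the proof is a short derivative computation for item 1, followed by a one-line monotonicity argument for item 2 and a direct reduction to already-established limits for items 3 and 4. The only mild care needed is to make sure the cancellation in the computation of $f'(\rho)$ is performed cleanly before substituting the expression for $\iota''$, so that the resulting factored form $2\rho(\iota')^2(1+2\iota^2)$ makes positivity manifest.
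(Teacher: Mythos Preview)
Your proposal is correct and matches the paper's proof almost exactly: items 3 and 4 are handled identically via $\rho\,\tilde c_\rho/\tilde c = 2\rho\,\iota'/\iota$ and \cref{lem:helpful_limit}, and item 2 follows from item 1 plus the value at $\rho=0$ in both. The only cosmetic difference is in item 1: the paper writes the derivative abstractly as $\rho\,\tilde c_{\rho\rho}(\rho)>0$ and appeals to convexity, whereas you expand in $\iota$ to get the equivalent expression $2\rho(\iota')^2(1+2\iota^2)$.
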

\begin{proof}
	(i) holds because
	\(\frac{d}{d\rho}\left(\rho\tilde{c}_\rho(\rho)-\tilde{c}(\rho)\right)= \rho \tilde{c}_{\rho \rho}(\rho)>0 \)
	by convexity of the inverse error function.	(ii) holds due to (i) and $\left(\rho\tilde{c}_\rho(\rho)-\tilde{c}(\rho)\right)|_{\rho=0}=0$. 	(iii) holds as the elasticity is equal to $2\rho \frac{\iota'}{\iota}$ and (i) in \Cref{lem:helpful_limit}. 	(iv) holds by the same observations and (ii) in \Cref{lem:helpful_limit}.
\end{proof}

\begin{lemma}\label{lem:elasticity_c}
	The elasticity of $\tilde{c}(\rho)$, $\rho \frac{\tilde{c}_\rho(\rho)}{\tilde{c}(\rho)}$, is increasing in $\rho$.
\end{lemma}
\begin{proof}
	Recall that $\rho \frac{\tilde{c}_\rho(\rho)}{\tilde{c}(\rho)}=2 \rho \frac{\iota'}{\iota}$. Therefore, it is sufficient to prove that the inverse error function has an increasing elasticity.

	Note that 
	\( \frac{d}{d\rho}\left(\rho \frac{\iota'}{\iota} \right) =\frac{\iota'}{\iota} + \rho\frac{\iota''\iota-\iota'^2}{\iota^2}. \)	From \Cref{lem:helpful_limit} know that 
	\begin{align*}
		\lim_{\rho\rightarrow 0} \frac{d}{d\rho}\left(\rho \frac{\iota'}{\iota} \right) = 0 \qquad 
		\lim_{\rho \rightarrow 0}\frac{d^2}{d\rho^2}\left(\rho \frac{\iota'}{\iota} \right) = \frac{\pi}{3}. 
	\end{align*}
	
	Thus, there exists an $\varepsilon>0$ such that the elasticity is increasing for $\rho \in (0,\varepsilon)$. To show that it is increasing for all $\rho\in(0,1)$ suppose --toward a contradiction-- that the derivative of the elasticity crosses 0. In this case, it has to hold that 
	\( \frac{\iota''\iota-\iota'^2}{\iota^2} =-\frac{\iota'}{\rho \iota}.  \)

		Consider the second derivative of the elasticity at such a critical point
		\begin{align*}
			\frac{d^2}{d\rho^2}\left(\rho \frac{\iota'}{\iota} \right)|_{\frac{d}{d\rho}\left(\rho \frac{\iota'}{\iota} \right)=0} &= 2\frac{\iota''\iota-\iota'^2}{\iota^2}\left(1-\rho \frac{\iota'}{\iota}\right)+\rho\frac{\iota'''\iota-\iota''\iota'}{\iota^2} \\
			&=-2\frac{\iota'}{\iota \rho}\left(1-\rho \frac{\iota'}{\iota}\right)+\rho\frac{\iota'''\iota-\iota''\iota'}{\iota^2}=2\frac{\iota'}{\iota \rho}\left(\rho \frac{\iota'}{\iota}-1\right)+2\rho\frac{\iota'^3}{\iota}4\iota^2 >0
		\end{align*}
		where the last inequality follows because the elasticity is weakly greater than one and all other terms are positive.

		Thus, any critical point must be a minimum. However, the elasticity is continuous and increasing at $\rho\in(0,\varepsilon)$. Thus, there is no interior maximum and the elasticity is increasing throughout.
		\end{proof}
		\begin{lemma}
			The elasticity of $\tilde{c}_\rho(\rho)$, $\rho \frac{\tilde{c}_{\rho\rho}(\rho)}{\tilde{c}_\rho(\rho)}$, is increasing in $\rho$.
		\end{lemma}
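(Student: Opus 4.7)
The plan is to reduce the claim to the previous lemma (increasing elasticity of $\tilde{c}$) together with monotonicity of the inverse error function itself. First, I would compute $\tilde{c}_\rho$ and $\tilde{c}_{\rho\rho}$ in closed form using the identities from \cref{lem:inv_err_properties}. With $\iota := erf^{-1}(\rho)$, we have $\tilde{c}=\iota^{2}$, so $\tilde{c}_\rho=2\iota\iota'$ and, using $\iota''=2\iota(\iota')^{2}$,
\[
\tilde{c}_{\rho\rho}=2(\iota')^{2}+2\iota\iota''=2(\iota')^{2}\bigl(1+2\iota^{2}\bigr).
\]

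Substituting into the elasticity gives the factorization
\[
\rho\,\frac{\tilde{c}_{\rho\rho}(\rho)}{\tilde{c}_\rho(\rho)}
=\rho\,\frac{2(\iota')^{2}(1+2\iota^{2})}{2\iota\iota'}
=\Bigl(\rho\,\frac{\iota'}{\iota}\Bigr)\bigl(1+2\iota^{2}\bigr).
\]
The first factor is exactly (half of) the elasticity of $\tilde{c}$, which was just shown to be strictly increasing on $(0,1)$ and strictly positive there. The second factor is also strictly positive, and since $\iota=erf^{-1}(\rho)$ is strictly increasing in $\rho$ (as $\iota'>0$), the map $\rho\mapsto 1+2\iota^{2}$ is strictly increasing as well. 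The product of two strictly positive and strictly increasing functions is strictly increasing, which yields the claim.

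I do not foresee a genuine obstacle: once the algebraic simplification $\tilde{c}_{\rho\rho}/\tilde{c}_\rho=(\iota'/\iota)(1+2\iota^{2})$ is in place, the result follows from monotonicity properties already established or immediate from the definition of $\iota$. The only care needed is to record that both factors are positive on the open interval $(0,1)$, which is clear because $\iota(0)=0$ and $\iota(\rho)>0$ for $\rho>0$.
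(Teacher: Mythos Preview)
Your proof is correct and takes a cleaner route than the paper. The paper works directly with the derivative of the elasticity, essentially computing $\frac{d}{d\rho}\bigl(\rho\,\iota''/\iota'\bigr)$ and then arguing that the resulting expression is positive by minimizing the auxiliary quantity $1+2\iota(2\iota-1)$ over $\rho$ and checking its value numerically at the minimizer. Your approach instead factors the elasticity as
\[
\rho\,\frac{\tilde{c}_{\rho\rho}}{\tilde{c}_\rho}=\Bigl(\rho\,\frac{\iota'}{\iota}\Bigr)\bigl(1+2\iota^{2}\bigr),
\]
and then observes that both factors are strictly positive and strictly increasing on $(0,1)$: the first by the immediately preceding lemma, the second because $\iota$ is increasing. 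This buys you a shorter argument with no sign analysis or numerical evaluation, and it makes the dependence on the previous lemma explicit rather than re-deriving monotonicity from scratch. The paper's approach, by contrast, is self-contained in the sense that it does not invoke the previous lemma, but at the cost of a more involved computation.
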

		\begin{proof}
			Note that the elasticity of $\tilde{c}_\rho (\rho)$ is equal to 
			\begin{align*}
				\rho \frac{\tilde{c}_{\rho\rho}(\rho)}{\tilde{c}_\rho(\rho)}&=\rho \frac{\frac{d}{d\rho}\left( 2 \iota \iota' \right)}{2 \iota \iota'}\\
				&= \rho \frac{2 \iota \iota'' + 2\iota'^2}{2 \iota \iota'}\\
				&= \rho \frac{\iota'}{\iota}\left(2 \iota^2 +1 \right),
			\end{align*}
			where the last equality follows by replacing $\iota''=2 \iota \iota'^2$ from \Cref{lem:inv_err_properties} and factoring out $\iota'^2$. The derivative of this elasticity is 
			\begin{align*}
				\frac{d}{d\rho} \left(\rho \frac{\tilde{c}_{\rho\rho}(\rho)}{\tilde{c}_\rho(\rho)} \right) = \frac{d}{d\rho}\left(\rho \frac{\iota'}{\iota} \right)\left(2 \iota^2 +1 \right) + \frac{d}{d\rho}\left(2 \iota^2 +1 \right) \rho \frac{\iota'}{\iota}.
			\end{align*}
			Note that the second term is unambiguously positive as $\iota'>0$ and $\iota>0$. The sign of the first term is determined by the sign of $ \frac{d}{d\rho}\left(\rho \frac{\iota'}{\iota} \right)$: the derivative of the inverse error function elasticity. It is
			\begin{align*}
				\frac{d}{d\rho}\left(\rho\frac{\iota'}{\iota}\right)&=\frac{\iota''}{\iota'}+\rho \frac{\iota'''\iota'-\iota''^2}{\iota'^2}=\frac{\iota''}{\iota'}+2\rho \iota''^2 (1+2\iota(2\iota-1)).
			\end{align*}
			We know that $\iota''>0$ and $\iota'>0$. Thus, we only need to show that $ 1+2\iota(2\iota-1)>0$. Note that this is a convex function of $\rho$ with a minimum at $\iota \iota'=\frac{1}{4}$ which is solved by $\rho=\operatorname{erf}\left(\sqrt{\frac{W\left(\frac{1}{2\pi}\right)}{2}}\right)\approx 0.29$ where $W$ denotes the principal branch of the Lambert-W function. Evaluating $ 1+2\iota(2\iota-1)$ at this minimum yields 
			\begin{align*}
				1+ \left(\sqrt{2 W\left(\frac{1}{2\pi}\right)}-1\right)\sqrt{2 W\left(\frac{1}{2\pi}\right)}\approx 0.75.
			\end{align*}
			Hence, we can conclude that $ \frac{d}{d\rho}\left(\rho \frac{\iota'}{\iota} \right)$ is positive and the result follows. 
		\end{proof}

\section{Comparative Statics of Expanding Knowledge}
\label{sec:cs_expanding}
	Recall that the optimal distance $d^\eta(\infty)$ and the optimal success probability $\rho^\eta(\infty)$ when expanding knowledge are implicitly defined by the system of first-order conditions. The comparative statics then follow from applying the implicit function theorem. In particular, we obtain

	\begin{align}
		\left(\begin{matrix} \frac{d}{d\eta} d^\eta(\infty) \\ \frac{d}{d\eta} \rho^\eta(\infty) \end{matrix}\right) = - \frac{1}{det(\mathcal{H})} \left(\begin{matrix} f_{d\eta} f_{\rho \rho} - f_{\rho \eta}f_{d\rho} \\ f_{\rho \eta}f_{dd} - f_{d\eta}f_{d\rho} \end{matrix}\right)
	\end{align}
	where we use the shorthand  $f=u_R(d,\rho;X)$ for the researcher's value from expanding knowledge with distance $d$ and success probability $\rho$.  We obtain 
	\begin{align*}
		f_{d\eta} &= - \tilde{c}(\rho) \\
		f_{\rho \eta} &= - \tilde{c}_\rho(\rho) d \\
		f_{dd} &= \rho V_{dd}(d;\infty) \\ 
		f_{\rho \rho} &= - \eta \tilde{c}_{\rho \rho}(\rho)d \\
		f_{d\rho} &= V_d(d;\infty) - \eta \tilde{c}_\rho(\rho).
	\end{align*}
	Suppressing the point of evaluation and plugging in, the comparative statics yields at the optimal distance $d^\eta(\infty)$ and success probability $\rho^\eta(\infty)$
	\begin{align*}
		\frac{d}{d\eta} d^\eta(\infty) &= - \frac{1}{det(\mathcal{H})} \left( \tilde{c}(\rho) \eta \tilde{c}_{\rho \rho}(\rho)d + \tilde{c}_\rho(\rho) d (V_d(d;\infty) - \eta \tilde{c}_\rho(\rho))  \right) \\ 
		&= - \frac{\eta d}{det(\mathcal{H})} \left(\tilde{c}(\rho) \tilde{c}_{\rho \rho}(\rho) + \tilde{c}_\rho(\rho)(\tilde{c}(\rho)/\rho - \tilde{c}_\rho (\rho))  \right) \\
		&< 0,
	\end{align*}
	where the first equality follows from the first-order condition with respect to $d$ and factoring out $\eta d$. The inequality follows from the fact that the determinant of the Hessian is positive and that the elasticity of $\tilde{c}(\rho)$ is increasing in $\rho$ (\Cref{lem:elasticity_c}).

	For the optimal success probability, we obtain at the optimal distance $d^\eta(\infty)$ and success probability $\rho^\eta(\infty)$
	\begin{align*}
		\frac{d}{d\eta} \rho^\eta(\infty) &= - \frac{1}{det(\mathcal{H})} \left(- \tilde{c}_\rho(\rho) d\rho V_{dd}(d;\infty) +  \tilde{c}(\rho)(V_d(d;\infty) - \eta \tilde{c}_\rho(\rho)) \right) \\ 
		&= - \frac{d/(3q) \tilde{c}(\rho)}{det(\mathcal{H})} \left( \frac{\tilde{c}_\rho(\rho)\rho}{\tilde{c}(\rho)} + \frac{V_d(d;\infty)-V(d;\infty)/d}{d/(3q)} \right) \\ 
		&< 0,
	\end{align*}
	where the equality follows from the first-order condition with respect to $\rho$, plugging in the expression for $V_{dd}(d;\infty)$ (noticing that $d^\eta(\infty)<4q$) and factoring out $d/(3q) \tilde{c}(\rho)$. The inequality follows from the fact that the determinant of the Hessian is positive, that the elasticity of $\tilde{c}(\rho)$ increases in $\rho$ and that it is greater than two (\Cref{lem:c_properties}), and that the term in parentheses is equal to negative one half after plugging in.  

	Hence, both the optimal distance and the optimal success probability decrease in the cost parameter $\eta$ when expanding knowledge.

\section{Omitted Proofs} 
\label{sub:omitted_proofs}

\begin{lemma}\label{lem:inequalityexpandinglarger4}
	$\frac{\partial V(d;\infty|d> 4q)}{\partial d}<0.$
	\end{lemma}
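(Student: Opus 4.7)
The plan is to compute $\partial V(d;\infty)/\partial d$ in closed form for $d>4q$ and then reduce the claimed inequality to an elementary polynomial inequality. First I would differentiate the product $\sqrt{d}(d-4q)^{3/2}$ to obtain $2(d-q)\sqrt{d-4q}/\sqrt{d}$, so that
\[
\frac{\partial V(d;\infty|d>4q)}{\partial d}
= 1 - \frac{d}{3q} + \frac{d-q}{3q}\sqrt{\frac{d-4q}{d}},
\]
which is exactly the expression appearing in the body of the proof of \Cref{cor:opt}. Thus proving the lemma amounts to showing
\[
(d-q)\sqrt{\frac{d-4q}{d}} < d - 3q \qquad \text{for all } d>4q.
\]

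The second step is to observe that for $d>4q$ both sides are strictly positive (since $d-q>3q>0$ and $d-3q>q>0$), so the inequality is equivalent to its squared version after clearing the denominator:
\[
(d-4q)(d-q)^{2} < d\,(d-3q)^{2}.
\]
Expanding both sides gives $d^{3}-6qd^{2}+9q^{2}d-4q^{3}$ on the left and $d^{3}-6qd^{2}+9q^{2}d$ on the right, so the difference is the constant $-4q^{3}<0$. This verifies the squared inequality and hence the original one.

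There is no real obstacle here; the only mild care needed is to justify squaring, which is immediate from the positivity check above, and to handle the boundary $d=4q$ by noting that the derivative equals $1-4q/3q=-1/3<0$ there by direct substitution, so continuity extends the strict inequality to all $d\geq 4q$.
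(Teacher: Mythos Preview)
Your proof is correct and takes a cleaner route than the paper's. The paper substitutes $\tau=d/q$ and asserts that
\[
\frac{3-\tau}{3}+\sqrt{\frac{\tau-4}{\tau}}\,\frac{\tau-1}{3}
\]
is increasing in $\tau$ and converges to $0$ as $\tau\to\infty$, concluding negativity from these two facts; neither the monotonicity nor the limit is actually verified there. Your approach instead rearranges to $(d-q)\sqrt{(d-4q)/d}<d-3q$, checks both sides are positive on $d>4q$, squares, and lands on the exact identity $(d-4q)(d-q)^2 - d(d-3q)^2 = -4q^3$, which settles the matter in one line. This is more elementary and fully self-contained; it also makes transparent \emph{why} the paper's expression tends to $0$ from below (the gap is of order $q^3/d$). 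The boundary remark at $d=4q$ is not needed since the lemma is stated for $d>4q$, but it does no harm.
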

	\begin{proof}
	\[\frac{\partial V(d;\infty|d> 4q)}{\partial d}=-\frac{d}{3q}+1 + \sqrt{\frac{d-4q}{d}} \frac{d-q}{3q}\]
	Letting $\tau:=d/q(>4$ by assumption) the statement is negative if
	\(
	\frac{3-\tau}{3} + \sqrt{\frac{\tau-4}{\tau}} \frac{\tau-1}{3}<0.
	\)
	The left-hand side is increasing in $\tau$ and converges to $0$ as $\tau \rightarrow \infty$. 
	\end{proof}
	
	\begin{lemma}\label{sub:addendum_to_sub:proof_of_cor_deep_optimal}
	$V_d(d;X)>0$ if $d\in [0,X-4q]$ and $X\in(4q,6q]$.
	\end{lemma}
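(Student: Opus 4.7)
The plan is to establish the inequality by combining a boundary computation with a monotonicity argument. At $d=0$, a direct substitution into the formula of \cref{prop:value_knowledge} gives $V(0;X)=0$: since $0<4q$ the first indicator vanishes, and $\sqrt{X-0}(X-0-4q)^{3/2}=\sqrt{X}(X-4q)^{3/2}$ cancels the last (active since $X>4q$), while $2X\sigma^2(0;X)=0$. Consequently, the statement should be read with strict inequality for $d>0$ (and equality at $d=0$); this is the form actually needed in the application in \cref{lem:boundaryXsmaller6}. I will therefore show $V(d;X)\geq 0$ on the stated domain, with strict inequality for $d\in(0,X-4q]$.

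With $V(0;X)=0$ in hand, strict positivity for $d>0$ follows from showing $V_d(d;X)>0$ throughout the interior region $d\in[0,X-4q)$. On that region, $\mathbf{1}_{d>4q}=0$ and $\mathbf{1}_{X-d>4q}=1$, so differentiating the closed form from \cref{prop:value_knowledge} yields
\begin{equation*}
6q\,V_d(d;X)=2(X-2d)-(X-d-q)\sqrt{\tfrac{X-d-4q}{X-d}}.
\end{equation*}
Setting $y:=X-d\in(4q,X]$, this is equivalent to $2(2y-X)>(y-q)\sqrt{(y-4q)/y}$. Both sides are positive for $y\in(4q,X]$ with $X\in(4q,6q]$, so squaring and clearing denominators reduces the task to the polynomial inequality
\begin{equation*}
P(y;X)\;:=\;4y(2y-X)^2-(y-q)^2(y-4q)\;>\;0.
\end{equation*}
The key step is verifying $P>0$ on $y\in(4q,X]$ for all $X\in(4q,6q]$. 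I would carry this out by checking the endpoints and controlling the interior: at $y=4q$, $P=4\cdot 4q\cdot(8q-X)^2>0$ since $X<8q$; at $y=X$, $P=4X^3-(X-q)^2(X-4q)$, which is positive because $(X-q)^2(X-4q)<X^2\cdot X=X^3<4X^3$. For the interior, the cleanest route is to observe that $P$ is a cubic in $y$ whose coefficients depend polynomially on $X$; I would compute $\partial_y P$ and show that any critical point in $(4q,X]$ gives a value of $P$ bounded below by a positive expression on the range $X\in(4q,6q]$, or alternatively argue directly that $P(y;X)$ is a convex-plus-linear combination whose minimum on $(4q,X]$ occurs at an endpoint already checked.

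The boundary case $d=X-4q$ (i.e., $y=4q$), where the middle indicator switches off, is handled by continuity: as $d\uparrow X-4q$ the term $\sqrt{X-d}(X-d-4q)^{3/2}\to 0$, so the expression for $V$ from the interior extends continuously to the boundary. Strict positivity at this point can also be read off directly: $V(X-4q;X)=\tfrac{1}{6q}\bigl(8q(X-4q)-\sqrt{X}(X-4q)^{3/2}\bigr)=\tfrac{X-4q}{6q}\bigl(8q-\sqrt{X(X-4q)}\bigr)$, and $X(X-4q)\leq 6q\cdot 2q=12q^2<64q^2$ gives $\sqrt{X(X-4q)}<8q$. The main obstacle is the interior sign analysis of the cubic $P(y;X)$; all other steps are algebraic substitutions or direct endpoint evaluations.
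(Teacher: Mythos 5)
You correctly diagnose that the lemma as literally stated cannot hold at $d=0$ (where $V=0$) and that what the paper actually needs and proves is $V_d(d;X)>0$ on $[0,X-4q]$; your plan of establishing $V(0;X)=0$ plus interior positivity of $V_d$ is therefore the right target. However, there are two genuine problems with the execution. First, your derivative is off by a factor of $2$: differentiating $\sqrt{X-d}\,(X-d-4q)^{3/2}$ gives $-2(X-d-q)\sqrt{(X-d-4q)/(X-d)}$, so the correct expression is
\begin{equation*}
3q\,V_d(d;X)=(X-2d)-(X-d-q)\sqrt{\tfrac{X-d-4q}{X-d}},
\end{equation*}
as in the paper. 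With $y=X-d$, the inequality you must verify after squaring is $y(2y-X)^2>(y-q)^2(y-4q)$, not $4y(2y-X)^2>(y-q)^2(y-4q)$. Your $P(y;X)$ is strictly larger than the correct polynomial, so even a complete proof of $P>0$ would not deliver $V_d>0$. Second, the interior sign analysis of the cubic --- which is the entire content of the lemma --- is left as a sketch; the fallback that ``the minimum on $(4q,X]$ occurs at an endpoint'' is not automatic for a cubic and is precisely what has to be shown.

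Both defects are repairable. Writing the correct cubic as $Q(y;X)=3y^3-(4X-6q)y^2+(X^2-9q^2)y+4q^3$, one checks that $Q'(y)$ is a quadratic in $y$ whose vertex sits at $(4X-6q)/9\leq 2q<4q$ and that $Q'(4q)=X^2-32qX+183q^2>0$ for $X\leq 6q$; hence $Q$ is increasing on $[4q,X]$ and $Q(4q)=4q(8q-X)^2>0$ closes the argument. The paper takes a different route for the same crux: it shows $V_{ddd}>0$, so $V_d$ is convex in $d$, and then verifies positivity of $V_d$ at $d=0$, at $d=X-4q$, and at any interior critical point by substituting the condition $V_{dd}=0$ back into $V_d$. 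Either method works, but as written your proposal neither sets up the right inequality nor completes the step that matters.
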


	\begin{proof}
Note that for $X\in (4q,6q]$ and $d \in [0,X-4q]$, 
\begin{align*}
	V_{d} & = \frac{1}{3q}\left(X-2d -(X-d-q)\sqrt{\frac{X-d-4q}{X-d}} \right).
\end{align*}
Assume towards a contradiction that there is a feasible combination of $d$ and $X$ such that $V_d(d;X)\leq 0$. Then, the following inequality must hold 
\begin{align*}
	\frac{X-2d}{X-d-q} \leq \sqrt{\frac{X-d-4q}{X-d}}.
\end{align*}
Observe that this inequality cannot hold at the bounds $d=0$ and $d=X-4q$: If $d=0$, then the left-hand side is strictly greater than one while the right-hand side is strictly less than one. If $d=X-4q$, then the left-hand side reduces to $(8q-X)/(3q)$ which is strictly positive as $X\leq 6q$, while the right-hand side is equal to zero.

Hence, if the inequality is ever satisfied for some feasible $(X,d)$, then by continuity and the intermediate value theorem, there must be a feasible $(\underline{X},\underline{d})$ combination such that $V_d(\underline{d};\underline{X})=0$. Thus, for a feasible $\underline{d}$, there must be a solution $\underline{X}$ to the quadratic equation (in $\underline{X}$)
\begin{align*}
	\frac{X-2\underline{d}}{X-\underline{d}-q} = \sqrt{\frac{X-\underline{d}-4q}{X-\underline{d}}}. 
\end{align*}
The solution to this equation is 
\begin{align*}
	\underline{X}_{1,2} = \frac{1}{4} \left(5 \underline{d} + 3q \pm (\underline{d}-q)\sqrt{\frac{\underline{d}+5q}{\underline{d}-3q}} \right).
\end{align*}
However, no feasible solution exists, as $\underline{d}\leq 2q$ (due to the upper bound $X-4q$ on $\underline{d}$ and the upper bound $6q$ on $X$), leading to no solution for $\underline{X}$ in the real domain. A contradiction.
\end{proof}

\begin{lemma}\label{lem:partialValuetoareanegative}
$V_X(d^0(X);X)<0$ if $X\geq 4q$ and $d\in [0,X-4q]$.
\end{lemma}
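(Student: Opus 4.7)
The plan is to compute $V_X(d;X)$ directly from the closed-form expression in \cref{prop:value_knowledge}, reduce the sign question to a one-dimensional inequality of the form $f(X)-f(X-d)>d$, and dispatch that inequality via the mean value theorem.

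First I would observe that the constraint $d \in [0,X-4q]$ forces $X-d \geq 4q$, so $\mathbf{1}_{X-d>4q}$ is either active or lies on a boundary where the associated term vanishes continuously; similarly $\mathbf{1}_{X>4q}$ is (effectively) active since $X\geq 4q$. The term $\sqrt{d}(d-4q)^{3/2}$, when present, does not depend on $X$ and so drops out of $V_X$. Using $2X\sigma^2(d;X)=2d(X-d)$ and the identity
\[\frac{d}{dt}\!\left[\sqrt{t}\,(t-4q)^{3/2}\right] = 2(t-q)\sqrt{\tfrac{t-4q}{t}},\]
applied once with $t=X-d$ and once with $t=X$, I would obtain
\[6q\, V_X(d;X) \;=\; 2d + 2(X-d-q)\sqrt{\tfrac{X-d-4q}{X-d}} - 2(X-q)\sqrt{\tfrac{X-4q}{X}}.\]
Defining $f(t):=(t-q)\sqrt{(t-4q)/t}$ for $t \geq 4q$, the inequality $V_X<0$ becomes exactly
\[f(X)-f(X-d) > d.\]

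Next I would apply the mean value theorem (or the integral version, which is cleaner because $f'$ blows up at $t=4q$ while the integral $\int_{4q}^X f'$ still converges to the finite value $f(X)-f(4q)$): it suffices to show $f'(t)>1$ for all $t>4q$. A direct computation gives
\[f'(t) = \frac{t^2-2qt-2q^2}{t^{3/2}\sqrt{t-4q}}.\]
For $t\geq 4q$ the numerator is bounded below by $16q^2-8q^2-2q^2 = 6q^2>0$, so both sides of the desired inequality $t^2-2qt-2q^2 > t^{3/2}\sqrt{t-4q}$ are positive and I may square. Expanding yields
\[(t^2-2qt-2q^2)^2 - t^3(t-4q) \;=\; 8q^3 t + 4q^4 > 0,\]
which is trivially true. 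This finishes the argument.

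I do not anticipate a genuine conceptual obstacle; the main nuisance is bookkeeping of the indicator terms and confirming that the argument is valid at the boundary $X-d=4q$, where $f'$ is singular but $f$ is continuous and the integral $\int_{X-d}^X f'(t)\,dt = f(X)-f(X-d)$ is well defined. As a sanity check one may evaluate $V_X$ at $X-d=4q$ directly: the claim $V_X<0$ reduces to $X-4q < (X-q)\sqrt{(X-4q)/X}$, i.e.\ $X(X-4q) < (X-q)^2$, which simplifies to $-2qX < q^2$ and is immediate.
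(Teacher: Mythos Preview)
Your argument is correct and, at its core, establishes the same fact as the paper: both proofs compute $V_X$ explicitly and reduce the claim to the sign of the cross-partial $V_{Xd}$. Indeed, differentiating your expression $3q\,V_X = d + f(X-d) - f(X)$ in $d$ gives $3q\,V_{Xd} = 1 - f'(X-d)$, so your inequality $f'(t)>1$ is exactly the paper's $V_{Xd}<0$, and both then conclude from $V_X(0;X)=0$.

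Where you differ is in how you verify $f'>1$. The paper computes $V_{Xd}$, shows it is \emph{increasing} in $a:=X-d$ via $\partial_a V_{Xd} = 4q^2 a^{-5/2}(a-4q)^{-3/2}>0$, and then argues that $V_{Xd}\to 0$ as $a\to\infty$, hence $V_{Xd}<0$ for all finite $a\ge 4q$. You instead square the inequality $t^2-2qt-2q^2 > t^{3/2}\sqrt{t-4q}$ and reduce it to the trivial $8q^3t+4q^4>0$. Your route is slightly more elementary (no limit, no third-order derivative) and gives a self-contained algebraic certificate; the paper's route has the advantage of also yielding the monotonicity of $V_{Xd}$ in $a$, which could be reused elsewhere. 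Both handle the boundary $X-d=4q$ the same way---the singular derivative is irrelevant because the integral/limit is finite---and both give strict inequality only for $d>0$, which is all that the application to $d=d^0(X)$ requires.
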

\begin{proof}
Observe that for any $X\geq 4q$ and $d \leq X-4q$
\begin{align*}
	V_{Xd} &= \frac{1}{24q}\left(8 - 3 \sqrt{\frac{X-d}{X-d-4q}}-(5(X-d)+4q)\frac{\sqrt{X-d-4q}}{(X-d)^{3/2}} \right).
\end{align*}
Denote $a:=X-d$, this is an increasing function in $a$ as \(\frac{dV_{Xd}}{da}=\frac{4q^2}{a^{5/2}(a-4q)^{3/2}}>0.\)
Hence, the highest value of $V_{Xd}$ is attained for $a\rightarrow \infty$ and
\begin{align*}
	\lim_{a\rightarrow \infty} \frac{1}{24q}\left(8 - 3 \underbrace{\sqrt{\frac{a}{a-4q}}}_{\rightarrow 1}-5\underbrace{\frac{a\sqrt{a-4q}}{a^{3/2}}}_{\rightarrow 1}-4q\underbrace{\frac{\sqrt{a-4q}}{a^{3/2}}}_{\rightarrow 0}\right)=0.
\end{align*}
It follows that the $V_{Xd}$ converges to zero from below implying that $V_{Xd}<0$. Thus, $V_{X}(d^0(X),X)<V_X(d=0,X)$ and we obtain
\begin{align*}
	V_X(d,X|&d\leq 4q,X-d\geq4q)=\frac{d+(X-d-q)\sqrt{\frac{X-d-4q}{X-d}} - (X-q)\sqrt{\frac{X-4q}{X}}}{3q}\\
	&<V_X(d=0,X|d\leq 4q,X-d\geq 4q)=\frac{(X-q)\sqrt{\frac{X-4q}{X}} - (X-q)\sqrt{\frac{X-4q}{X}}}{3q}=0. \quad \qedhere
\end{align*}
\end{proof}

\begin{lemma}\label{lem:secondtotalderivative}
If $X \in (4q,8q)$, $\mathrm{d}^2V(X/2,X)/\mathrm{d}X^2<0$ and $\mathrm{d}^2 V(d^0(X),X)/(\mathrm{d}X)^2>0$. 
\end{lemma}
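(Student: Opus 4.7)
Fix $X \in [6q, 8q)$ and set $g(z) := \sqrt{z}(z-4q)^{3/2}$ for $z \ge 4q$. At $d = X/2$, both $d$ and $X-d$ lie in $[3q, 4q)$, so only the indicator on $X$ in \cref{prop:value_knowledge} is active, giving $V(X/2; X) = (X^2/2 - g(X))/(6q)$. For an interior critical point $d^0(X) < X/2$, \cref{lem:di} gives $d^0 < 4q$ and $X - d^0 > 4q$, so $V(d; X) = (2d(X-d) + g(X-d) - g(X))/(6q)$ in a neighborhood. A short calculation yields $g''(z) = 2(z^2 - 2qz - 2q^2)/(z^{3/2}\sqrt{z-4q})$; squaring reduces the inequality $g''(z) > 2$ to $(z^2 - 2qz - 2q^2)^2 - z^3(z-4q) = 4q^3(2z+q) > 0$, and the quotient rule yields $(g'')'(z) = -24 q^3 / (z^{5/2}(z-4q)^{3/2}) < 0$. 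Hence $g''(z) > 2$ and $g''$ is strictly decreasing on $(4q, \infty)$.

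For Part~1, direct differentiation of $\varphi(X) := V(X/2; X)$ gives $\varphi''(X) = (1 - g''(X))/(6q)$, which is negative since $g''(X) > 2 > 1$.

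For Part~2, the envelope theorem yields $\mathrm{d}V/\mathrm{d}X = V_X$ at $d^0$, and implicit differentiation of $V_d(d^0(X); X) = 0$ gives $\mathrm{d}d^0/\mathrm{d}X = -V_{Xd}/V_{dd}$, so $\mathrm{d}^2 V/\mathrm{d}X^2 = V_{XX} - V_{Xd}^2/V_{dd}$. With $A := g''(X-d^0)$ and $B := g''(X)$, the partials evaluate to $V_{XX} = (A-B)/(6q)$, $V_{Xd} = (2-A)/(6q)$, $V_{dd} = (A-4)/(6q)$; local maximality forces $A < 4$. Collecting over a common denominator simplifies the expression to
\[
	\frac{\mathrm{d}^2 V(d^0(X); X)}{\mathrm{d}X^2} = \frac{4 - B(4-A)}{6q(4-A)},
\]
whose denominator is positive. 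Since $g''$ is strictly decreasing and $X - d^0 < X$, we have $A > B$; together with $B > 2$ and the elementary identity $(B-2)^2 > 0 \Leftrightarrow B > 4 - 4/B$, this yields $A > B > 4 - 4/B$, hence $B(4-A) < 4$ and the claim follows.

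\textbf{Main obstacle.} The most work lies in establishing the two structural properties of $g$\textemdash namely $g''(z) > 2$ and $g''$ strictly decreasing on $(4q, \infty)$; both reduce to polynomial identities in $z/q$ after rationalization but require careful bookkeeping. Once in hand, the reduction of $\mathrm{d}^2 V(d^0; X)/\mathrm{d}X^2$ to $(4-B(4-A))/(6q(4-A))$ and the quadratic manipulation via $(B-2)^2 > 0$ dispatch the remainder cleanly.
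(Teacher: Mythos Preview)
Your argument is correct. For Part~1 your direct route via the polynomial identity $(z^2-2qz-2q^2)^2 - z^3(z-4q) = 4q^3(2z+q) > 0$, hence $g''>2$, is cleaner than what the paper does: the paper instead shows $\varphi'''(X)>0$ and then evaluates $\varphi''(8q)$ numerically to bound $\varphi''$ on $[6q,8q]$. For Part~2, however, you take a detour. The paper keeps the two summands $V_{XX}$ and $-V_{Xd}^2/V_{dd}$ separate and signs each: $-V_{Xd}^2/V_{dd}>0$ is immediate from $V_{dd}<0$, and $V_{XX}(d^0,X)>V_{XX}(0,X)=0$ follows because $V_{XXd}=-g'''(X-d)/(6q)>0$. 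In your notation that is just $V_{XX}=(A-B)/(6q)>0$ (since $g''$ decreases) plus $(A-2)^2/(6q(4-A))>0$, done. Your single-fraction route $(4-B(4-A))/(6q(4-A))$ instead requires the extra step $B>4-4/B$ via $(B-2)^2>0$; correct, but more work than simply not combining.

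One citation to repair: you invoke \cref{lem:di} for $d^0<4q$ and $X-d^0>4q$, but that lemma concerns the \emph{researcher's} interior optimum $d^i$ in \cref{prop:substitutes}, not the benefit maximizer $d^0$, and in the paper's logical order it is proved \emph{after} the present lemma, so the reference is circular. The facts you need are available earlier: $d^0\le 4q$ is the second statement of \cref{lem:dinteriorifXlarger8}, and $X-d^0>4q$ follows because for $X<8q$ and $d\in(X-4q,X/2)$ only the $g(X)$ indicator is active, so $V_d=(2X-4d)/(6q)>0$ there, forcing any interior critical point to lie in $[0,X-4q)$ (this is exactly the observation used in the proof of \cref{lem:decreasingbenefitifXincreasing}).
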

\begin{proof}

Considering the boundary solution we obtain

\noindent \begin{minipage}{0.45\linewidth}
\begin{align*}
	\frac{d^2V(X/2,X)}{dX^2}&=-\frac{X^2-2qX-2q^2}{3qX^{3/2}\sqrt{X-4q}} + \frac{1}{6q}
\end{align*}
\end{minipage}
\begin{minipage}{0.45\linewidth}
\begin{align*}
\frac{d^3V(X/2,X)}{dX^3}&=\frac{4q^2}{X^{5/2}(X-4q)^{3/2}}>0
\end{align*}
\end{minipage}

implying that $\frac{d^2V(X/2,X)}{dX^2}\leq\frac{d^2V(4q,8q)}{dX^2}$ with
\begin{align*}
	\frac{d^2V(4q,8q)}{dX^2}=-\frac{64q^2-16q^2-2q^2}{3q8^{3/2}q^{3/2} 2q^{1/2}} + \frac{1}{6q}= -\frac{46q^2}{96 \sqrt{2} q^3} + \frac{1}{6q}=\frac{8-23/\sqrt{2}}{48q}<0.
\end{align*}

Next, consider the value of any interior solution and apply the envelope and implicit function theorem to obtain
\begin{align*}
	\frac{d V(d^0(X),X)}{dX}&=V_X + d'(X)\underbrace{V_d}_{=0 \text{ by optimality of }d}=V_X\\
	\frac{d^2V(d^0(X),X)}{dX^2}&= V_{XX} +d'(X)V_{dX} +d'(X)\underbrace{(V_{Xd}+V_{dd}d'(X))}_{=0 \text{ by IFT on FOC}}+ d''(X)\underbrace{V_d}_{=0 \text{ by optimality}}\\
	&=V_{XX}(d^0(X),X)+d'(X)V_{dX}=V_{XX}(d^0(X),X)\underbrace{-\frac{V_{dX}^2}{V_{dd}}}_{>0 \text{ as }V_{dd}<0}.
\end{align*}
Observing that \(V_{XXd}(d,X|d\leq 4q, X-d\geq4q)=\frac{4q^2}{(X-d)^{5/2}(X-d-4q)^3/2}>0 \),
we can compute as lower bound for 
\begin{align*}
V_{XX}(d^0(X),X)&=\frac{1}{24q}\left(3\left(\sqrt{\frac{X-d}{X-d-4q}}-\sqrt{\frac{X}{X-4q}}\right)+6\left(\sqrt{\frac{X-d-4q}{X-d}}- \sqrt{\frac{X-4q}{X}}\right) \right.\\
&\left.+\left(\frac{X-4q}{X}\right)^{3/2}- \left(\frac{X-d-4q}{X-d}\right)^{3/2}\right)  \geq V_{XX}(d=0,X)=0
\end{align*}
implying that $\mathrm{d}^2V(d^0(X),X)/(\mathrm{d} X^2)\geq0$.
\end{proof}

\begin{lemma}\label{lem:URconcaveBound}
Assume $X \in [4q,8q]$, then $\mathrm{d}^2 U_R(d=X/2; X)/(\mathrm{d}X)^2<0$.
\end{lemma}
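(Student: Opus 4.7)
The plan is to derive an explicit formula for $U_R''(X)$ using the envelope theorem and implicit differentiation of the first-order condition, reduce the sign question to showing that a certain auxiliary function is monotone, and finally verify the resulting scalar inequality by a clean polynomial calculation after a suitable change of variables.

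I would begin by writing $U_R(X) = \rho V(X) - \eta\tilde{c}(\rho)X/4$ with $V(X) := V(X/2;X)$, and $\rho(X)$ optimally chosen so that the first-order condition $\eta\tilde{c}_\rho(\rho) = 4V(X)/X$ holds. The envelope theorem gives $U_R'(X) = \rho V'(X) - \eta\tilde{c}(\rho)/4$. Implicit differentiation of the FOC yields $\rho'(X) = 4[V'(X) - V(X)/X]/(X\eta\tilde{c}_{\rho\rho}(\rho))$, and substituting produces
\[
U_R''(X) = \rho V''(X) + \frac{4[V'(X) - V(X)/X]^2}{X\,\eta\,\tilde{c}_{\rho\rho}(\rho)}.
\]
By \cref{lem:secondtotalderivative}, $V''(X) < 0$ on $(4q, 8q]$, so the first summand is negative; the entire task is to show it dominates the positive second summand.

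Next, I would invoke the elasticity bound $\rho\tilde{c}_{\rho\rho}(\rho) \geq \tilde{c}_\rho(\rho)$ from the Notation section, together with the FOC, to obtain $1/(\eta\tilde{c}_{\rho\rho}) \leq \rho X/(4V)$, and hence
\[
U_R''(X) \leq \frac{\rho(X)}{V(X)}\bigl[V(X)V''(X) + (V'(X) - V(X)/X)^2\bigr].
\]
The key structural observation is the elementary identity $VV'' + (V'-V/X)^2 = \bigl[V(V'-V/X)\bigr]'$ (verified by direct differentiation using $(V/X)' = (V'-V/X)/X$). Thus it suffices to show that the function $F(X) := V(X)\bigl[V'(X) - V(X)/X\bigr]$ is strictly decreasing on $[4q, 8q]$.

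To verify $F'<0$, I would substitute $X = 4q(1+t^2)$ with $t \in [0,1]$, which turns the closed form of $V(X/2;X)$ into polynomials in $t$ and $\sqrt{1+t^2}$. A direct computation yields $F = (4q/9)\,G(t)$ with
\[
G(t) = (1+t^2)^3 - t(1+t^2)^{3/2}(3 + 4t^2) + 2t^4(3 + 2t^2),
\]
and after differentiation and simplification $G'(t) = 3\sqrt{1+t^2}\bigl[2t\sqrt{1+t^2}(1+5t^2) - (1+8t^2+8t^4)\bigr]$. Both bracketed quantities are non-negative, so squaring converts $G'<0$ to the polynomial inequality $p(s) := 1 + 12s + 36s^2 - 12s^3 - 36s^4 > 0$ for $s = t^2 \in [0,1]$. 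One then checks $p(0) = p(1) = 1$, and $p''(s) = 72 - 72s - 432s^2$ has a unique root in $[0,1]$ at $s = 1/3$, so $p'$ is unimodal with $p'(0) = 12 > 0 > p'(1) = -96$; hence $p$ is strictly increasing then strictly decreasing on $[0,1]$ with minimum value $1$, so $p \geq 1 > 0$ there. Reversing the chain yields $F'(X) < 0$ and therefore $U_R''(X) < 0$ on $(4q,8q]$. At the left endpoint, $V''(X) \to -\infty$ as $X \to 4q^+$ while the second summand in $U_R''$ remains bounded, so $U_R''(4q^+) = -\infty < 0$ trivially. The main obstacle in this plan is the algebraic bookkeeping in the final paragraph—verifying the explicit form of $G'$ and executing the squaring to extract $p$—while the step that makes the whole calculation tractable is the derivative identity used to replace the second-order sufficient condition by monotonicity of $F$.
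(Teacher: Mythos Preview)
Your approach is correct and structurally mirrors the paper's: both compute $U_R''(X)=\rho V'' + (V'-V/X)^2\big/(V\,\tilde{c}_{\rho\rho}/\tilde{c}_\rho)$ via the envelope theorem and implicit differentiation of the first-order condition, then invoke the elasticity bound $\rho\,\tilde{c}_{\rho\rho}/\tilde{c}_\rho\geq 1$ to reduce the question to the purely analytic inequality $VV''+(V'-V/X)^2<0$ on $[4q,8q]$. The difference is only in how that final scalar inequality is verified. The paper rewrites $-(V'-V/X)^2/(VV'')$ as a product of two explicit algebraic fractions and asserts that each is below one. You instead observe the identity $VV''+(V'-V/X)^2=\bigl[V(V'-V/X)\bigr]'$, reduce to monotonicity of $F(X)=V(V'-V/X)$, and establish $F'<0$ via the substitution $X=4q(1+t^2)$ and an explicit polynomial argument. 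Your route is more self-contained and fully verified---the paper's ``both fractions are less than one'' is asserted without detail---while the paper's factorization, if checked, avoids the squaring step. Neither approach is more general; they are two algebraic executions of the same reduction.
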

\begin{proof}
Take the case of the boundary solution: we are analyzing a one-dimensional optimization problem with respect to $\rho$. Denote the objective $f(\rho;X)$ and the optimal value by $\varphi(X)=\max_\rho f(\rho;X)$. Then, the optimal $\rho$ solves $f_\rho=0$. We obtain 
	\begin{align*}
		\varphi'(X)&=\underbrace{f_\rho}_{=0 \text{ by optimality}} \rho'(X) + f_X \\
		\varphi''(X)&=  \underbrace{f_\rho}_{=0\text{ by optimality}} \rho''(X)  +\underbrace{(f_{\rho \rho} \rho'(X)+ f_{X\rho})}_{=0 \text{ by total differentiation of FOC}} \rho'(X) + f_{XX} + \rho'(X) f_{X\rho}\\
		&=f_{XX}-\frac{f_{X\rho}^2}{f_{\rho\rho}} =\rho^\eta(X) V_{XX}(X/2;X)+\frac{(V_X-\frac{V}{X})^2}{V\frac{c''}{c'}}
	\end{align*}
	where we used that $\sigma^2_{XX}(X/2;X)=0$. The expression yields as condition for the value to be strictly concave $\rho^\eta(X) c''/c'>-(V_X-\frac{V}{X})^2/(V_{XX}V)$
	where the inequality sign changed direction as $V_{XX}(X/2;X)<0$ by \cref{lem:secondtotalderivative} in the region considered. Further, note that the left-hand side larger than two by the properties of the inverse error function. We will show that the right-hand side is below one which is sufficient for concavity. We therefore consider the right-hand side at the boundary solution, which simplifies to
	\begin{equation*}
		\frac{\sqrt{X-4q}\left(X^{3/2}-2(X+2q)\sqrt{X-4q}\right)^2}{4\left(X^2 - 2q^2 - 2qX \right)\left(X^{3/2}-2(X-4q)\sqrt{X-4q}\right)}.
	\end{equation*}
	We now show that it is also smaller than one. Because the denominator is positive, a necessary and sufficient condition is
	\begin{equation*}
		\resizebox{\textwidth}{!}{$\sqrt{X-4q}\left(X^{3/2}-2(X+2q)\sqrt{X-4q}\right)^2 - 4\left(X^2 - 2q^2 - 2qX \right)\left(X^{3/2}-2(X-4q)\sqrt{X-4q}\right)<0.$}
	\end{equation*}
	Factoring out $\sqrt{\frac{X-4q}{X}}$, dividing by $X^{3/2}$, and simplifying the condition becomes
	\begin{equation}\label{eq:randomcondition}
		\sqrt{\frac{X-4q}{X}} \left(13 X^2 - 48q X\right) -8 X^2+ 16 q X +40 q^2<0.
	\end{equation}
	Notice that $\sqrt{\frac{X-4q}{X}}$ increases in $X$ and thus attains its upper bound for $X=8q$ at $1/\sqrt{2}$. Moreover, since $13 X>48q $ for any $X \in [4q,8q]$, this implies that \eqref{eq:randomcondition} holds if
	\begin{equation*}
		\begin{split}
		\frac{13 X^2 - 48q X}{\sqrt{2}} -8 X^2+ 16 q X +40 q^2 &<0 \\
		\left(\frac{13}{\sqrt{2}} - 8\right) X^2 + (16-\frac{48}{\sqrt{2}})q X + 40 q^2 &<0
		\end{split}
	\end{equation*}
	Now notice that $13/\sqrt{2}>8$ and thus the LHS is strictly convex meaning a maximum must be at one of the boundaries in $X$. But then at $X=4q$ we have
	\begin{equation*}
				\left(\frac{13}{\sqrt{2}} - 8\right) 16 q^2 + (16-\frac{48}{\sqrt{2}}) 4q^2 + 40 q^2 =  8(\sqrt{2}-3)q^2<0
	\end{equation*}
	and at $X=8q$
	\begin{equation*}
		\left(\frac{13}{\sqrt{2}} - 8\right) 64 q^2 + (16-\frac{48}{\sqrt{2}}) 8 q^2 + 40 q^2 =8(28\sqrt{2}-43)q^2<0.
\end{equation*}
which implies that condition \eqref{eq:randomcondition} holds, thus closing the proof.
\end{proof}

\begin{lemma}\label{URconvexIN}
Let $d^i<X/2$ be a local maximum of $u_R(\rho,d,X)$. If $d^i(X)$ exists on $X \in [4q,8q]$, then $\mathrm{d}^2 U_R(d=d^i(X); X)/(\mathrm{d}X)^2>0$.
\end{lemma}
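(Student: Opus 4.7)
The plan is to adapt the two-step envelope/implicit-function argument that was used in the boundary case (Lemma \ref{lem:URconcaveBound}) to the interior case, where both $\rho$ and $d$ are now endogenous choices. Write $z := (\rho, d)$ and let $z^{\ast}(X) = (\rho^{\ast}(X), d^{i}(X))$ be the interior local maximizer. First I would invoke the first-order envelope theorem to obtain $U_R'(X) = u_{R,X}(z^{\ast}(X), X)$, where $u_R(z,X) = \rho V(d;X) - \eta \tilde c(\rho)\sigma^2(d;X)$. Differentiating once more and substituting $z^{\ast\prime}(X) = -H^{-1}u_{R,zX}$ obtained from the implicit-function theorem applied to the first-order conditions (with $H$ the Hessian of $u_R$ in $z$) yields the Milgrom--Segal second-order formula
\begin{equation*}
U_R''(X) \;=\; u_{R,XX}(z^{\ast}(X), X) \;-\; u_{R,zX}^{\top}\, H^{-1}\, u_{R,zX}.
\end{equation*}
Because $z^{\ast}(X)$ is an interior local maximum, $H$ is negative semi-definite, hence $-H^{-1}$ is positive semi-definite, so the quadratic-form term is non-negative. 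It therefore suffices to show that $u_{R,XX}(z^{\ast}(X), X) > 0$.

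Next I would compute $u_{R,XX}$ directly. Since $\sigma^{2}(d;X) = d - d^{2}/X$, one gets $\sigma^{2}_{XX}(d;X) = -2d^{2}/X^{3} < 0$ at any interior $d > 0$, so the cost contribution $-\eta\tilde c(\rho)\sigma^{2}_{XX}$ is strictly positive. For the benefit contribution $\rho V_{XX}(d;X)$, I would recycle the computation from Lemma \ref{lem:secondtotalderivative}: on the relevant region $d \leq 4q$ and $X - d \geq 4q$ (which, by Lemma \ref{lem:di}, contains every interior optimum $d^{i}$ when $X \in [4q, 8q]$) one has $V_{XXd} = 4q^{2}/[(X-d)^{5/2}(X-d-4q)^{3/2}] > 0$, and $V_{XX}(0, X) = 0$. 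Integrating in $d$ gives $V_{XX}(d^{i}(X), X) \geq 0$, so $\rho V_{XX} \geq 0$. Combining the two pieces,
\begin{equation*}
u_{R,XX}(z^{\ast}(X), X) \;=\; \rho\, V_{XX}(d^{i};X) \;+\; 2\eta\tilde c(\rho)\,\frac{(d^{i})^{2}}{X^{3}} \;>\; 0,
\end{equation*}
and hence $U_R''(X) > 0$, which is the claim.

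The main obstacle will be formally justifying that, at the researcher's interior optimum, we may apply the implicit-function theorem along the entire interval where $d^{i}(X)$ exists, i.e. that $H$ is non-degenerate at $z^{\ast}(X)$. Generically this is immediate from strict second-order conditions at a local maximum, but one should note that $u_R$ has the non-smooth indicator terms from Proposition \ref{prop:value_knowledge}; however, by Lemma \ref{lem:di} the interior choice satisfies $d^{i} < 4q$ and $X - d^{i} > 4q$, so the only indicator that is active throughout is $\mathbf{1}_{X > 4q}$, whose activation affects only the $X$-dependent additive constant and leaves $u_R$ smooth in $(d,\rho)$ on the relevant domain. Once this smoothness is noted, the envelope/IFT calculation is standard and the sign chain above closes the argument.
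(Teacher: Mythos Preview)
Your proposal is correct and follows the same envelope/implicit-function-theorem architecture as the paper's proof: both differentiate $U_R(X)=u_R(z^{\ast}(X),X)$ twice, substitute $z^{\ast\prime}(X)=-H^{-1}u_{R,zX}$, and then show $u_{R,XX}>0$ via $V_{XX}\geq 0$ (from Lemma~\ref{lem:secondtotalderivative}) and $\sigma^{2}_{XX}=-2d^{2}/X^{3}<0$. The one substantive difference is in how the cross term $-u_{R,zX}^{\top}H^{-1}u_{R,zX}$ is handled. You note that this is automatically non-negative because $H$ is negative definite at a strict interior maximum, which is clean and immediate. The paper instead writes out the $2\times 2$ quadratic form explicitly, establishes the signs of $f_{\rho X},f_{dX},f_{d\rho}$ one by one (using the first-order conditions and Corollary~\ref{cor:maxinteriorsmaller8}), and then closes with an $a+1/a\geq 2$ inequality after invoking $f_{\rho\rho}f_{dd}>f_{\rho d}^{2}$. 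Your abstract positive-definiteness observation buys you all of that in one line; the paper's approach, while more laborious, has the side benefit of recording the individual cross-partial signs, which are not otherwise needed here.
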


\begin{proof}
The implicit function theorem yields for $d'(X)$ and $\rho'(X)$
	\begin{align*}
		\left(\begin{matrix}d'(X) \\ \rho'(X)\end{matrix}\right) = -\frac{1}{f_{dd}f_{\rho\rho}-f_{\rho d}^2}\left(\begin{matrix} f_{dX}f_{\rho\rho} - f_{\rho X}f_{d\rho} \\ f_{\rho X}f_{dd}-f_{dX}f_{d\rho}\end{matrix}\right).
	\end{align*}
	Note that $-\frac{1}{f_{dd}f_{\rho\rho}-f_{\rho d}^2}<0$ as this is $-\frac{1}{det(\mathcal{H})}$ and the determinant of the second principal minor being positive is a necessary second order condition for a local maximum given that the first ($f_{\rho \rho}$) is negative. 

	Denote the objective $f(\rho,d;X)$ and the optimal value by $\varphi(X)=\max_{\rho,d} f(d,\rho;X)$. Then, the optimal $(d,\rho)$ solves $f_\rho=0$ and $f_d=0$. Differentiating the value of the researcher twice with respect to $X$ yields
	\begin{align*}
		\varphi'(X)&=\underbrace{f_\rho}_{=0 \text{ by optimality}} \rho'(X)+\underbrace{f_d}_{=0 \text{ by optimality}}  d'(X)+f_X \\
		\varphi''(X)&=  \underbrace{f_\rho}_{=0\text{ by optimality}} \rho''(X)+\underbrace{f_d}_{=0\text{ by optimality}} d'(X) \\
		&+d'(X)\underbrace{\left(f_{dX}+f_{dd}d'(X)+f_{d\rho}\rho'(X)\right)}_{=0\text{ by total differentiation of foc wrt }d} +\rho'(X)\underbrace{\left(f_{\rho X}+f_{\rho d}d'(X)+f_{\rho\rho}\rho'(X)\right)}_{=0\text{ by total differentiation of foc wrt }\rho} \\
		&+f_{dX}d'(X)+f_{\rho X}\rho'(X)+f_{XX} =f_{dX}d'(X)+f_{\rho X}\rho'(X)+f_{XX}.
	\end{align*}
	Observe first that $f_{XX}>0$ as $f_{XX}=\rho V_{XX}(d;X)-\eta \tilde{c}(\rho)\sigma^2_{XX}(d;X)$ and $V_{XX}>0$ by proof of \Cref{cor:maxinteriorsmaller8} (in particular, \cref{lem:secondtotalderivative}) and $\sigma^2_{XX}(d;X)=-\frac{2d^2}{X^3}$. Next, we show $f_{dX}d'(X)+f_{\rho X}\rho'(X)>0$ using the implicit function theorem together with the property of the local maximum that $f_{\rho \rho}f_{dd}>f_{\rho d}^2$.
	\begin{align*}
		f_{dX}d'(X)+f_{\rho X}\rho'(X) &= -f_{dX} \left(\frac{f_{dX}f_{\rho\rho} - f_{\rho X}f_{d\rho}}{f_{dd}f_{\rho\rho}-f_{\rho d}^2}\right) - f_{\rho X}\left(\frac{ f_{\rho X}f_{dd}-f_{dX}f_{d\rho}}{f_{dd}f_{\rho\rho}-f_{\rho d}^2} \right).
	\end{align*}
	As we only need to sign this expression, we can ignore the denominator to verify
	\begin{align*}
		-f_{dX}(f_{dX}f_{\rho\rho} - f_{\rho X}f_{d\rho})-f_{\rho X}(f_{\rho X}f_{dd}-f_{dX}f_{d\rho})>0 \Leftrightarrow
		\frac{f_{dX}}{f_{\rho X}}\frac{f_{\rho\rho}}{f_{d\rho}} + \frac{f_{\rho X}}{f_{dX}}\frac{f_{dd}}{f_{\rho d}}>2.
	\end{align*}
	where we used the signs of the terms that follow because 

	\noindent\begin{minipage}{.5\linewidth}
	\begin{align*}
		f_{\rho \rho}&=-\eta \tilde{c}_{\rho \rho}(\rho)\sigma^2<0\\
		f_{dX}&=\rho V_{dX}-\eta \tilde{c}(\rho) \sigma^2_{dX}<0
		\end{align*}
		\end{minipage} \hspace{.1cm}
		\begin{minipage}{.5\linewidth}
		\begin{align*}
		f_{d\rho}&=V_d-\eta \tilde{c}_{\rho}(\rho) \sigma^2_d <V_d-\eta \frac{\tilde{c}(\rho)}{\rho}\sigma^2_d=0\\
		f_{\rho X}&=V_X-\eta \tilde{c}_{\rho}(\rho)\sigma^2_{X}  < V_X-\eta \sigma^2_X \tilde{c}(\rho)/\rho <0
	\end{align*}
	\end{minipage}

	which in turn follow from the first-order conditions and \Cref{cor:maxinteriorsmaller8}.

	Because $f_{\rho \rho}f_{dd}-f_{\rho d}^2>0$, we can replace $\frac{f_{\rho\rho}}{f_{d\rho}}$ with $\frac{f_{d\rho}}{f_{dd}}$ as $\frac{f_{\rho\rho}}{f_{d\rho}}>\frac{f_{d\rho}}{f_{dd}}$ yielding
	\begin{align*}
		2&<\frac{f_{dX}}{f_{\rho X}}\frac{f_{d\rho}}{f_{dd}} + \frac{f_{\rho X}}{f_{dX}}\frac{f_{dd}}{f_{\rho d}}
	\end{align*}
	which is true as the right-hand side can be written as $g(a)=a+\frac{1}{a}$. 
	$g(a)$ is a strictly convex function for $a>0$ and minimized at $a=1$ with $g(a=1)=2$. 
\end{proof}

\begin{lemma}\label{homoq}
	$d^\eta(\infty)$ is linear in $q$ and $\rho^\eta(\infty)$ is constant in $q$. 
\end{lemma}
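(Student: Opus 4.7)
The plan is to combine the researcher's two first-order conditions for the expanding case and show that $q$ drops out of the equation pinning down $\rho^\infty$.

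First I would invoke \cref{lem:expanding_focs}, which establishes that $d^\infty \in (2q,3q)$ and is characterized by the first-order conditions. In particular, the closed-form expression derived there reads
\[ d^\infty = 3q\left(1 - \frac{\tilde{c}(\rho^\infty)}{2\rho^\infty \tilde{c}_\rho(\rho^\infty) - \tilde{c}(\rho^\infty)}\right), \]
so once $\rho^\infty$ is shown to be independent of $q$, linearity of $d^\infty$ in $q$ is immediate.

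Next I would substitute $d = d^\infty$ back into one of the two first-order conditions to obtain a single equation in $\rho^\infty$ alone. Using \cref{opt_d_infty}, namely $\rho(1 - d/(3q)) = \eta \tilde{c}(\rho)$, together with the expression above yields
\[ \frac{\tilde{c}(\rho^\infty)}{2\rho^\infty \tilde{c}_\rho(\rho^\infty) - \tilde{c}(\rho^\infty)} = \frac{\eta \tilde{c}(\rho^\infty)}{\rho^\infty}, \]
which simplifies to
\[ \eta = \frac{\rho^\infty}{2\rho^\infty \tilde{c}_\rho(\rho^\infty) - \tilde{c}(\rho^\infty)}. \]
The right-hand side depends only on $\rho^\infty$ and on properties of $\tilde{c}$, with no dependence on $q$. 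Hence $\rho^\infty$ is determined by $\eta$ alone.

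The only real obstacle is verifying that this equation admits a unique solution so that $\rho^\infty$ is well-defined as a function of $\eta$ only. This follows from the properties of $\tilde{c}$ established in \cref{sec:Notation}: the denominator $2\rho \tilde{c}_\rho(\rho) - \tilde{c}(\rho)$ is positive and strictly increasing in $\rho$ (using that $\rho\tilde{c}_\rho - \tilde{c}$ is positive and increasing, and that $\tilde{c}_\rho$ is increasing), so the right-hand side is strictly monotone in $\rho^\infty$, which together with continuity and the appropriate limits yields uniqueness. With $\rho^\infty$ independent of $q$ established, linearity of $d^\infty$ in $q$ is read off the displayed formula, completing the proof.
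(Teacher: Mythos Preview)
Your argument is correct and reaches the conclusion, but it takes a more computational route than the paper. The paper observes directly that $V(mq;\infty)/\sigma^2(mq;\infty)$ and $V_d(mq;\infty)$ are homogeneous of degree $0$ in $q$, so the two first-order conditions \eqref{eq:FOCrho} and \eqref{eq:FOCd} are invariant under the rescaling $d\mapsto \lambda d$, $q\mapsto \lambda q$; hence $d^\infty$ is homogeneous of degree $1$ and $\rho^\infty$ of degree $0$ in $q$. Your approach instead exploits the explicit closed form for $d^\infty$ already derived in \cref{lem:expanding_focs} and back-substitutes to isolate an equation for $\rho^\infty$ that is visibly free of $q$. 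Both are valid; the paper's homogeneity argument is shorter and does not rely on the closed form, while yours is more hands-on and makes the $\eta$--$\rho^\infty$ relation explicit.

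One minor point: your justification of monotonicity (``the denominator is positive and strictly increasing, so the right-hand side is strictly monotone'') is incomplete as stated, since the numerator $\rho$ is also increasing. The map $\rho\mapsto \rho/(2\rho\tilde{c}_\rho-\tilde{c})$ is in fact strictly decreasing, but showing this requires comparing growth rates via $\rho\tilde{c}_{\rho\rho}/\tilde{c}_\rho>1$. More importantly, uniqueness is not actually needed here: \cref{lem:expanding_focs} already establishes that the FOCs characterize the optimum, and the content of the lemma is simply that the defining equation contains no $q$, which you have shown.
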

\begin{proof}
The lemma follows because $\sigma^2(mq;\infty)=mq$ and thus (by \cref{prop:value_knowledge}) the functions $f(m,q):=V(mq;\infty)/\sigma^2(mq;\infty)$ and $g(m,q):=V_d(mq;\infty)$ are homogeneous of degree $0$ in $q$. It is then immediate from \eqref{eq:FOCd} and \eqref{eq:FOCrho} that $d^\eta(\infty)$ is homogeneous of degree 1 in $q$ and $\rho^\eta(\infty)$ is homogeneous of degree $0$. Noticing that $d^\eta(\infty)(q=0)=0$ implies the result.
\end{proof}

\setlength\bibitemsep{0.5\itemsep}
\printbibliography
\end{refsection}

\end{document}